\documentclass[11pt,a4paper]{amsart}

\oddsidemargin = 4mm
\evensidemargin = 4mm
\textwidth = 150mm

\usepackage[dvipsnames]{xcolor}
\definecolor{webgreen}{rgb}{0,.5,0}
\definecolor{webbrown}{rgb}{.6,0,0}
\definecolor{RoyalBlue}{cmyk}{1, 0.50, 0, 0}
\usepackage[colorlinks=true, breaklinks=true, urlcolor=webbrown, linkcolor=RoyalBlue, citecolor=webgreen,backref=page]{hyperref}

\usepackage{lipsum}
\makeatletter
\g@addto@macro{\endabstract}{\@setabstract}
\newcommand{\authorfootnotes}{\renewcommand\thefootnote{\@fnsymbol\c@footnote}}%
\makeatother

\usepackage{eucal}
\usepackage{tikz}
\usepackage{pict2e}
\usepackage{todonotes}
\usetikzlibrary{decorations.markings}

\usepackage{array}

\usetikzlibrary{calc,patterns,angles,quotes}
\usetikzlibrary{arrows,calc,chains, positioning, shapes.geometric,shapes.symbols,decorations.markings,arrows.meta}

\usepackage{epsfig, graphicx, subfigure}
\usepackage{verbatim, setspace}
\usepackage{amsmath, amssymb}

\usepackage{mathptmx}
\usepackage{newtxtext,newtxmath}

\newcommand{\T} {\mathbb{T}}

\newcommand{\C} {\mathbb{C}}
\newcommand{\N} {\mathbb{N}}
\newcommand{\Z} {\mathbb{Z}}
\newcommand{\ZZ} {\mathbb{Z}_{\geq0}}


\newcommand{\al}{\alpha}

\newcommand{\Ga}{\Gamma}

\newcommand{\ep}{\varepsilon}
\newcommand{\de}{\delta}
\newcommand{\ka}{\kappa}

\newcommand{\Om}{\Omega}
\newcommand{\ze}{\zeta}

\newcommand{\di}{\displaystyle}

\newcommand{\re}{\mathrm{Re}}

\renewcommand{\det}{\mathrm{det}}

\newcommand{\ee}{\end{equation}}

\newcommand{\bq}{\begin{eqnarray}}
\newcommand{\eq}{\end{eqnarray}}

\newcommand{\ba}{\begin{array}}
\newcommand{\ea}{\end{array}}

\newcommand{\iv}{^{-1}}

\newcommand{\eps}{{\varepsilon}}

\newcommand{\qandq}{\quad \text{and} \quad}

\newcommand{\dd}{\mathrm{d}}
\newcommand{\ic}{\mathrm{i}}

\newtheorem{theorem}{Theorem}[section]
\newtheorem{corollary}[theorem]{Corollary}
\newtheorem{proposition}[theorem]{Proposition}
\newtheorem{lemma}[theorem]{Lemma}
\newtheorem{remark}[theorem]{Remark}

\newcommand{\hide}[1]{}

\numberwithin{equation}{section}
\parskip = 0.1in
\setlength\parindent{0pt}

\usepackage{caption}
\captionsetup[figure]{font=footnotesize}

\begin{document}

\tikzset{->-/.style={decoration={
markings,
mark=at position #1 with {\arrow{latex}}},postaction={decorate}}}

\tikzset{-<-/.style={decoration={
markings,
mark=at position #1 with {\arrowreversed{latex}}},postaction={decorate}}}


\title{Asymptotics of bordered Toeplitz determinants and Next-to-Diagonal Ising Correlations}

\maketitle

\begin{center}
    \authorfootnotes
  Estelle Basor\footnote{American Institute of Mathematics, San Jose, CA 95112, USA, E-mail: ebasor@aimath.org}, Torsten Ehrhardt\footnote{Mathematics Department,  University of California, Santa Cruz, CA 95064, USA, E-mail:
tehrhard@ucsc.edu},
  Roozbeh Gharakhloo\footnote{Department of Mathematics, Colorado State University, Fort Collins, CO 80521, USA, E-mail: roozbeh.gharakhloo@colostate.edu}, \\ Alexander Its\footnote{Department of Mathematical Sciences, Indiana University-Purdue University Indianapolis, 402 N. Blackford St., Indianapolis, IN 46202, Blackford St., Indianapolis, IN 46202, USA. e-mail: aits@iupui.edu; St. Petersburg State University, Universitetskaya emb. 7/9, 199034, St. Petersburg,
Russia.} and
  Yuqi Li\footnote{East China Normal University, Shanghai, China, E-mail: yqli@sei.ecnu.edu.cn} \par \bigskip
\end{center}

\date{\today}

\subjclass{}


\begin{abstract}
We prove the analogue of the strong Szeg{\H o} limit theorem for a large class of bordered Toeplitz determinants. In particular, by applying our results to the formula of Au-Yang and Perk \cite{YP} for the next-to-diagonal correlations $\langle \sigma_{0,0}\sigma_{N-1,N} \rangle$ in the square lattice Ising model, we rigorously justify that the next-to-diagonal long-range order is the same as the diagonal and horizontal ones in the low temperature regime. The anisotropy-dependence of the subleading term in the asymptotics of the next-to-diagonal correlations is also established. We use Riemann-Hilbert and operator theory techniques, independently and in parallel, to prove these results.
\end{abstract}


\section{Introduction}

Starting from the seminal works of Szeg\H{o}, Kaufman and Onsager, Toeplitz determinants 
have played a very important role in many areas of analysis and mathematical physics.
Indeed, an extraordinary variety of problems in mathematics, physics, and engineering can be expressed
in terms of Toeplitz matrices and determinants. We refer to the monograph \cite{BS} and the more recent survey paper  \cite{DIK1}
for the details of the theory and applications of Toeplitz determinants. 

A growing  interest has recently developed
in the study of certain generalizations of Toeplitz determinants. Among those are the determinants
of Toeplitz + Hankel matrices - see \cite{DIK}, \cite{BE}, \cite{GI},  integrable Fredholm determinants \cite{IIKS}, \cite{D}, $2j-k$ and $j-2k$ determinants \cite{GW}. 
These determinants appear in the study of the Ising model in the zig-zag layered half-plane \cite{Chelkak}, 
in the spectral analysis of the Hankel matrices, in the theory of exactly solvable quantum models \cite{FA}, and in asymptotic analysis of moments of derivatives of characteristic polynomials $\Lambda_A(s)=\det(I-A s)$, where $A \in USp(2N), SO(2N), O^{-}(2N)$ \cite{Altug}.

In this paper we are concerned with yet another deformation of Toeplitz determinants - the so called  "bordered Toeplitz determinants". 
The latter also arise in applications, for example, in the the next-to-diagonal correlation functions for the Ising model. The goal of this paper is to launch a new research project devoted to the asymptotics of these determinants and to discuss this application in particular.

Let $\phi$ and $\psi$ be  the $L^1$-functions  on the (positively oriented)  unit
circle, 
$$
\T = \{z\in\C: |z| = 1\}.
$$
The {\it bordered Toeplitz determinant,}  $D^{B}_{N}[\phi; \psi]$, is defined as
\begin{equation}\label{btd}
D^{B}_{N}[\phi; \psi] := \det \begin{pmatrix}
	\phi_0& \cdots & \phi_{N-2} & \psi_{N-1}\\
	\phi_{-1}& \cdots  & \phi_{N-3}&\psi_{N-2}  \\
	\vdots & \vdots & \vdots & \vdots\\
	\phi_{1-N} &  \cdots  & \phi_{-1}&\psi_{0}
	\end{pmatrix} , \qquad N > 1,
\end{equation}
where 
\begin{equation}\label{phi_n} 
\phi_n = \int_{\T}z^{-n}\phi(z)\frac{\dd z}{2\pi \ic z},
\qquad
\psi_n = \int_{\T}z^{-n}\psi(z)\frac{\dd z}{2\pi \ic z},
\end{equation}
are respectively the $n$-th Fourier coefficients of $\phi$ and $\psi$.  To fix the notation, we let
 \begin{equation}\label{ToeplitzDet}
    D_N[\phi] := \underset{0 \leq j,k\leq N-1}{\det} \{ \phi_{j-k} \},
\end{equation}
     denote the $N\times N$ (pure) Toeplitz determinant corresponding to the symbol $\phi$. As with the Toeplitz determinants, the principal analytic question is the asymptotic behavior
of $D^{B}_{N}[\phi; \psi]$ as $N \to \infty.$

The asymptotics of the Toeplitz determinants are well known and given by the Szeg\H{o}-Widom theorem \cite{WIDOMBlock,Sz,BS}
\begin{equation}\label{Szego Theorem}
	D_{N}[\phi] \sim G[\phi]^{N} E[\phi],   \qquad N\to\infty,
\end{equation}
where
\begin{equation}\label{G and E}
	G[\phi] = \exp \left([\log \phi]_{0}\right) \qandq E[\phi] = \exp \left( \sum_{n \geq 1} n[\log \phi ]_{n}[\log \phi]_{-n} \right).
\end{equation}
This holds if the function $\phi$ is sufficiently smooth (e.g., in H\"older class $C^{1+\epsilon}$), does not vanish on $\T$, and has zero winding number. Note that the constants involve the $n$-th Fourier coefficients $[\log \phi]_{n}$ of the continuous logarithm
of the function $\phi$.

In this paper, we will show that for the bordered determinants a similar theorem holds,
\begin{equation}\label{maingr}
 D^{B}_{N}[\phi; \psi]  \sim G[\phi]^{N} E[\phi]\,F[\phi; \psi], \qquad N\to\infty,
 \end{equation}
and where $F[\phi; \psi]$ is a constant described in Theorems \ref{main thm} and \ref{thm 1.2} below. 

Our general result above will become explicit in the case of the next-to-diagonal Ising correlations.  There it happens that $\psi$ is a constant times something of the form 
 \begin{equation}\label{phiqpsi}
 \psi(z) =  \frac{\phi(z)z - d}{z - c},
 \end{equation}
where $d$ and $c$ \color{black} are complex parameters. Hence particular attention will be paid to such functions. In Theorem \ref{main thm} below, we present the asymptotics of  $D^{B}_{N}[\phi; \psi]$, where $\psi$ is of the more general form
\begin{equation}\label{general psi}
\psi(z) = q_1(z) \phi(z) + q_2(z),
\end{equation}
where
\begin{equation}\label{q1 q2}
q_1(z) = a_0+a_1z+\frac{b_0}{z}+\sum^{m}_{j=1}\frac{b_j z}{z-c_j}, \qandq q_2(z) = \hat{a}_0+\hat{a}_1z+\frac{\hat{b}_0}{z}+
  \sum_{j=1}^m \frac{\hat{b}_j}{z-c_j},
\end{equation}
where all parameters are complex and the $c_j$ are nonzero and do not lie on the unit circle.  Indeed it is straightforward to pass from the rational functions $q_1$ and $q_2$ with one simple pole to the ones with multiple simple poles, as one can use the following elementary properties of bordered Toeplitz determinants: \begin{equation}\label{linear-combination}
D^B_N\left[\phi; \sum^m_{j=1} a_j \psi_j\right] =\sum^m_{j=1} a_j D^B_N[\phi,\psi_j], 
\end{equation}
\begin{equation}\label{border,phi,phi}
D^B_{N}[\phi; \phi] = D_{N}[\phi],
\end{equation} 
\begin{equation}\label{border,phi,const}
D^B_{N}[\phi;1] = D_{N-1}[\phi].
\end{equation}

Throughout the the paper, we will refer to a symbol $\phi$ as a \textit{Szeg{\H o}-type} symbol, if it is smooth and nonzero on the unit circle, has winding number zero, and admits an analytic continuation in a neighborhood of the unit circle.

\begin{theorem}\label{main thm}
	Let $D^{B}_{N}[\phi; \psi]$ be the bordered Toeplitz determinant with $\psi=q_1 \phi + q_2$ given by \eqref{general psi} and \eqref{q1 q2}, and $\phi$ of Szeg{\H o} type. Then, the following asymptotic behavior of $D^{B}_{N}[\phi; \psi]$ as $N \to \infty$ takes place 
\begin{equation}\label{main formula}
	  D^B_{N}\left[\phi ; \psi \right]   =  
	G[\phi]^{N} E[\phi]\left(F[\phi;\psi] + O(e^{-\mathfrak{c}N})\right),
	\end{equation} 
	where $G[\phi]$ and $E[\phi]$ are given by \eqref{G and E},
\begin{equation}\label{ConstantF2}
	F[\phi;\psi]  =	
	a_0+ b_0 [\log \phi]_{1} + 
	\hspace{-0.15cm}\sum^m_{j=1 \atop 0<|c_j|<1} \hspace{-0.15cm} b_j \frac{\al(c_j)}{\alpha(0)} 
	+	
         \frac{1}{\al(0)}\left( 
         \hat{a}_0-\hat{a}_1[\log \phi]_{-1} -
         \hspace{-0.15cm}\sum^{m}_{j=1  \atop |c_j|>1 }\hspace{-0.15cm}\frac{\hat{b}_j}{c_j} \al(c_j) \right),
	\end{equation}
	\begin{equation}
		\al(z):= \exp \left[ \frac{1}{2 \pi i } \int_{\T} \frac{\ln(\phi(\tau))}{\tau-z}d\tau \right],
	\end{equation}
and $\mathfrak{c}$ is some positive constant.
\end{theorem}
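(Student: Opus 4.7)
The overall strategy is linearity plus asymptotic analysis. By \eqref{linear-combination}, we may treat each term in $\psi = q_1\phi + q_2$ separately. The contributions of $a_0\phi$ and $\hat a_0$ follow immediately from \eqref{border,phi,phi}--\eqref{border,phi,const} combined with the Szeg\H{o}--Widom theorem \eqref{Szego Theorem}, producing $a_0$ and $\hat a_0/\alpha(0)$ in $F[\phi;\psi]$. Two blocks give an identically vanishing bordered determinant: for $\psi = z\phi$ the substituted last column coincides with the second-to-last column of the underlying Toeplitz matrix, while for $\psi = 1/z$ the Fourier coefficients $\psi_n$ all vanish for $n\geq 0$, so the last column is zero. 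This explains why $a_1$ and $\hat b_0$ do not appear in \eqref{ConstantF2}.

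For the remaining blocks I would expand along the last column to obtain
\begin{equation*}
D^B_N[\phi;\psi] \;=\; D_N[\phi]\sum_{j=0}^{N-1} \psi_{N-1-j}\,(T_N(\phi)^{-1})_{N-1, j},
\end{equation*}
and then seek an effective integral representation for the last-row entries of $T_N(\phi)^{-1}$ in terms of the orthogonal polynomials $\pi_N, \pi_N^*$ associated with $\phi$. This can be done either (i) through the $2\times 2$ Riemann--Hilbert problem for the OPUC, whose solution $Y_N$ packages $\pi_N$, $\pi_N^*$, and their Cauchy transforms, or (ii) operator-theoretically, via the Wiener--Hopf factorization $\phi = \phi_+\phi_-$ and the approximation of $T_N(\phi)^{-1}$ by $T_N(\phi_+^{-1})T_N(\phi_-^{-1})$ modulo a trace-class remainder. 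In either approach the cofactor sum rewrites as a contour integral of $\psi$ against a kernel built from $Y_N$ (equivalently, from $\pi_N$ and $\pi_N^*$).

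To pass to $N\to\infty$, one invokes the strong Szeg\H{o} asymptotics of $\pi_N$ and $\pi_N^*$, in which the Szeg\H{o} function $\alpha(z)$ appears as the interior Wiener--Hopf factor. The analyticity hypothesis on $\phi$ permits deformation of the contour into an annulus around $\T$ on which $z^{\pm N}$ decays exponentially. What remains are residues at the poles of the rational block that lie strictly inside or strictly outside $\T$; consequently only the blocks $z\phi/(z-c_j)$ with $|c_j|<1$ and $1/(z-c_j)$ with $|c_j|>1$ contribute nontrivially, producing $b_j\alpha(c_j)/\alpha(0)$ and $-\hat b_j\alpha(c_j)/(c_j\alpha(0))$ respectively, matching the two sums in \eqref{ConstantF2}. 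The blocks $\phi/z$ and $z$ contribute $b_0[\log\phi]_1$ and $-\hat a_1[\log\phi]_{-1}/\alpha(0)$, which are extracted from the Taylor coefficients of $\ln\alpha$ at $z=0$ and at $z=\infty$.

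The principal obstacle is step two: producing a representation of the last row of $T_N(\phi)^{-1}$ (or equivalently the cofactor sum) as a contour integral that is uniform in the pole locations $c_j$ and susceptible to steepest-descent or Wiener--Hopf estimates with an exponentially small remainder. This is where the Szeg\H{o}-type analyticity of $\phi$ is used decisively: it upgrades the polynomial decay typical of strong Szeg\H{o} corrections to the exponential error $O(e^{-\mathfrak{c} N})$ stated in \eqref{main formula}.
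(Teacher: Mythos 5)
Your proposal follows essentially the same route as the paper: linearity, direct inspection of the degenerate blocks ($z\phi$ and $1/z$), a Cramer's-rule representation tied to the Riemann--Hilbert problem for biorthogonal polynomials on $\T$ (equivalently the Wiener--Hopf factorization), and contour deformation exploiting the analyticity of $\phi$ to isolate the residues at the $c_j$ while $z^{\pm N}$ decays exponentially. You leave the crucial exact finite-$N$ identities (the paper's Lemmas~\ref{lem2.2} and the one preceding Corollary~\ref{cor 1}, which tie $D^B_{n+1}[\phi;\tfrac{1}{z-c}]$ to $X_{11}(c;n)$ and $D^B_{n+1}[\phi;\tfrac{z\phi}{z-c}]$ to $X_{12}(c;n)$) at the level of ``seek an effective integral representation,'' but once these are written down and combined with the steepest-descent estimates of Section~\ref{Appendix Y-RHP}, the remaining steps are exactly as you sketch, so the plan is sound.
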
    
It is occasionally convenient to use different notation  related to the function $\alpha(z)$,
\begin{equation}\label{alpha.phi.pm}
\alpha(z)=\begin{cases} \phi_+(z), & |z|<1, \\ \phi_-^{-1}(z), & |z|>1, 
\end{cases}
\end{equation}
with
\begin{equation}\label{phi.pm}
\phi_+(z):=\exp\left(\sum_{n=0}^\infty [\log \phi]_n z^n\right),\qquad
\phi_-(z):=\exp\left(\sum_{n=1}^\infty [\log \phi]_{-n} z^{-n}\right).
\end{equation} 
In fact, these functions are the factors of a  {\em canonical Wiener-Hopf factorization} of the symbol $\phi$,
$\phi(z)=\phi_-(z)\phi_+(z)$, $|z|=1$. Factors in a Wiener-Hopf factorization are unique up to a multiplicative constant.
With the factors as given above, we have the normalization,
\begin{equation}\label{WH.norm.allpah.G}
\phi_+(0)=\alpha(0)=G[\phi],\qquad \phi_-(\infty)=1=\alpha(\infty).
\end{equation}
More generally we can find the constant $F[\phi;\psi]$ in \eqref{maingr} as described in the following theorem, which is proven using operator theory and Riemann-Hilbert methods respectively in Sections \ref{Section OT 1} and \ref{RHP proof thm 1.2}.
\begin{theorem}\label{thm 1.2}
	Let  $\psi(z)$ be a function which admits an analytic continuation in a neighborhood of the unit circle, and let 
	$\phi$ be of Szeg{\H{o}} type. Denote  by $\phi_{\pm}(z)$  the factors of a canonical Wiener-Hopf factorization of the symbol 
	$\phi(z)$, i.e.,  $\phi=\phi_-\phi_+$. Then
	\begin{equation}\label{main2}
		 D^B_{N}\left[\phi ; \psi \right]   =  
		G[\phi]^{N} E[\phi]\left( F[\phi;\psi]  + O(e^{-\mathfrak{c}N})\right),
	\end{equation}
	where $G[\phi]$ and $E[\phi]$ are given by \eqref{G and E},
	\begin{equation}\label{ConstantFthm1.2}
		F[\phi;\psi]=\frac{[\phi_-^{-1}\psi]_0}{[\phi_+]_0},
	\end{equation}
	and $\mathfrak{c}$ is some positive constant.
\end{theorem}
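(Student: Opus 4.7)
The plan is to reduce $D^B_N[\phi;\psi]$ to a sum involving the last row of $T_N[\phi]^{-1}$ and then extract the asymptotics of that row from Szeg\H{o}-type estimates for orthogonal polynomials on the unit circle.

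First I would perform a cofactor expansion of $D^B_N[\phi;\psi]$ along its last column. Because the first $N-1$ columns of the bordered matrix coincide with those of $T_N[\phi]$, the cofactors along that column are those of $T_N[\phi]$ itself, yielding
\[
D^B_N[\phi;\psi]=D_N[\phi]\sum_{m=0}^{N-1}\psi_m\,\bigl(T_N[\phi]^{-1}\bigr)_{N-1,\,N-1-m}.
\]
By Szeg\H{o}--Widom the prefactor equals $G[\phi]^N E[\phi]\bigl(1+O(e^{-\mathfrak{c}N})\bigr)$, so it remains to show that the inner sum converges exponentially fast to $[\phi_-^{-1}\psi]_0/[\phi_+]_0$. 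The key observation is that transposition turns the defining equation for the last row of $T_N[\phi]^{-1}$ into $T_N[\tilde\phi]\vec g=\vec e_{N-1}$ with $\tilde\phi(z)=\phi(1/z)$, and a short Fourier rearrangement identifies this as the equation for a scalar multiple of the coefficient vector of the monic degree-$(N-1)$ orthogonal polynomial $\Phi_{N-1}$ for $\phi$, giving $g_j=[\Phi_{N-1}]_j/h_{N-1}$ with $h_{N-1}=D_N[\phi]/D_{N-1}[\phi]$.

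Next I would invoke a Szeg\H{o}-type asymptotic for $\Phi_{N-1}$. Setting $\Psi_n(z):=\Phi_n(z)\phi_-(z)/z^n$, which is analytic in $|z|>1-\ep$ with $\Psi_n(\infty)=1$, the orthogonality conditions for $\Phi_n$ translate into $[\Psi_n\phi_+]_k=0$ for $k=-1,\dots,-n$. This is an infinite upper-triangular system for the Laurent coefficients of $\Psi_n$ at $\infty$ with invertible diagonal $[\phi_+]_0\ne 0$, and a standard Neumann/small-norm argument produces $\Psi_n(z)=1+O(\rho^n)$ uniformly on $|z|\ge 1$ for some $\rho<1$ dictated by the analyticity strip of $\phi$. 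Extracting the coefficient of $z^{N-1-m}$ yields $[\Phi_{N-1}]_{N-1-m}=[\phi_-^{-1}]_{-m}+O(\rho^N)$ uniformly in $m$, while Szeg\H{o}--Widom applied to $h_{N-1}=D_N[\phi]/D_{N-1}[\phi]$ gives $h_{N-1}=[\phi_+]_0+O(e^{-\mathfrak{c}N})$; combining,
\[
\bigl(T_N[\phi]^{-1}\bigr)_{N-1,\,N-1-m}=\frac{[\phi_-^{-1}]_{-m}}{[\phi_+]_0}+O(\rho^N)
\]
uniformly in $m$.

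Finally, since $\psi$ is analytic in a neighborhood of $\T$ its Fourier coefficients decay geometrically, so summing the previous estimate against $\psi_m$ and using $[\phi_-^{-1}]_{-m}=0$ for $m<0$ produces $[\phi_-^{-1}\psi]_0/[\phi_+]_0+O(e^{-\mathfrak{c}N})$, from which the theorem follows after multiplication by the Szeg\H{o}--Widom asymptotic of $D_N[\phi]$. The main obstacle is the uniform-in-$m$ exponential Szeg\H{o} estimate of the third step for a \emph{complex} Szeg\H{o}-type $\phi$ (no positivity available): this is most cleanly handled either by a Wiener--Hopf/Hankel-operator calculation on $\ell^2(\ZZ)$ (the operator-theoretic route) or by a small-norm Deift--Zhou analysis of the associated Riemann--Hilbert problem.
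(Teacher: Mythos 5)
Your outline is correct and begins, as both of the paper's proofs do, with the Cramer's-rule reduction of $D^B_N[\phi;\psi]/D_N[\phi]$ to the pairing of a single row of $T_N^{-1}$ with the vector of Fourier coefficients of $\psi$ (compare Proposition~\ref{p.Bordered.Cramer} and the opening of Section~\ref{RHP proof thm 1.2}). After that you take a genuinely different route. The paper's operator-theoretic proof manipulates the full inverse via the formula $T_N^{-1}(\phi)=T_N(\phi_+^{-1})P_N\bigl(I-T(\lambda_N^{-1})(I-K_N)^{-1}T(\lambda_N)\bigr)P_N T_N(\phi_-^{-1})$ and a Neumann expansion of $(I-K_N)^{-1}$ (Proposition~\ref{prop.KN}, Theorem~\ref{thm.3.1}, Corollary~\ref{c.32b}), while its Riemann--Hilbert proof uses the Christoffel--Darboux resolvent kernel built from both $X_{11}$ and $X_{21}$ followed by a double contour integral. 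Your shortcut---observing that the relevant row of $T_N^{-1}$ is exactly the coefficient vector of $\Phi_{N-1}/h_{N-1}$, with $\Phi_{N-1}$ the monic BOPUC for $\phi$, i.e.\ $X_{11}(\cdot;N-1)$---bypasses both the $(I-K_N)^{-1}$ machinery and the resolvent kernel, leaving only a scalar Szeg\H{o}-type asymptotic for a single polynomial; this makes the appearance of the constant $[\phi_-^{-1}\psi]_0/[\phi_+]_0$ quite transparent. The trade-off is that the ingredient you leave as a black box, $\Psi_n(z)=1+O(\rho^n)$ uniformly for $|z|\ge r>1$, is precisely the output of the small-norm Deift--Zhou analysis in Appendix~\ref{Appendix Y-RHP}, so ultimately your route leans on the same RHP estimate as the paper's second proof.

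Three cautions. First, calling the orthogonality constraints an ``infinite upper-triangular system'' is imprecise: the relations $[\Psi_n\phi_+]_{-\ell}=0$ for $\ell=1,\dots,n$ couple each $a_\ell$ to \emph{all} higher-index $a_j$, and the system is only closed once you add the conditions $[\Psi_n\phi_-^{-1}]_{-\ell}=0$ for $\ell>n$ coming from $\Phi_n$ being a polynomial; handling this pair of conditions is exactly what the $X$-RHP (or the equivalent Wiener--Hopf operator identity) is for, and it is not a simple triangular back-substitution. Second, your cofactor expansion tacitly assumes the convention $(T_N[\phi])_{i,j}=\phi_{j-i}$, which is the transpose of the paper's \eqref{toemat}; under the paper's convention the first $N-1$ columns of the bordered matrix match $T_N(\tilde\phi)$, not $T_N(\phi)$, and one must verify that the polynomial arising really is $\Phi_{N-1}$ for $\phi$ rather than $\tilde\phi$---it is, in your convention, but this should be stated. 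Third, extracting $[\Phi_{N-1}]_{N-1-m}$ by Cauchy's integral on a circle of radius $r>1$ produces an error of size $O(\rho^{-cN}r^{m})$ that grows with $m$; the claimed ``uniformity in $m$'' is therefore harmless only because you subsequently sum against the geometrically decaying $\psi_m$ with $r$ chosen strictly inside $\psi$'s annulus of analyticity. None of these is a fatal gap, but each deserves to be made explicit in a full write-up.
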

\begin{remark}\label{rem.1.3} \normalfont As it is shown in Section 3, with  the change to $o(1)$ in the error term, the asymptotics (\ref{main formula}) and (\ref{main2}) are valid for all  $\psi\in L^2(\T)$ and $\phi$ satisfying the
assumptions of the strong Szeg\H{o} theorem, i.e., $\phi(z)$ belongs to a H\"older class $C^{1 + \epsilon}$, is nonzero on the unit circle, and has zero winding number.  
\end{remark}
In this paper, we also apply our general results mentioned above to the problem of rigorous evaluation of the next-to-diagonal two-point correlation function in the Ising model. To that end, let us first recall more precisely the situation in the two-dimensional Ising model, solved by Onsager (see, e.g., \cite{McCoy-Wu}).
In this model a $2\mathcal{M}\times 2\mathcal{N}$ rectangular lattice is considered with
an associated
spin variable $\sigma_{jk}$ taking the values $1$ and $-1$
at each vertex $(j,k)$, $-{\mathcal M}\le j\le {\mathcal M}-1$,
$-{\mathcal N}\le k\le {\mathcal N}-1$.
There are $2^{4{\mathcal M}{\mathcal N}}$ possible spin configurations
$\{\sigma\}$ of the lattice (a configuration corresponds to values of all
$\sigma_{jk}$ fixed). By $J_h$ and $J_v$ we respectively denote the horizontal and vertical nearest neighbor coupling constants and with each configuration we associate its nearest-neighbor coupling energy given by
\begin{equation}
    E(\{\sigma\})=
-\sum_{j=-{\mathcal M}}^{{\mathcal M}-1}\sum_{k=-{\mathcal N}}^{{\mathcal N}-1}
\left(J_{h}\sigma_{jk}\sigma_{j\,k+1}+J_{v}\sigma_{jk}\sigma_{j+1\,k}\right),\qquad
J_{h}, J_{v}>0.
\end{equation}
The partition function at a temperature $T>0$ is equal to
\begin{equation}
    Z(T)=
\sum_{\{\sigma\}}e^{-E(\{\sigma\})/k_BT},
\end{equation}
where the sum is over all configurations and $k_B$ is the Boltzmann constant. A remarkable feature of this model
is the presence of a thermodynamic phase transition (in the limit of the
infinite lattice, $\mathcal{M},\mathcal{N}\to \infty$) at a certain temperature $T_c$  whose dependence on
 $J_{h}$, $J_v$ is described by the equation,
 \begin{equation}\label{Tcr}
 \sinh\left(\frac{2J_h}{k_BT_c}\right) \sinh\left(\frac{2J_v}{k_BT_c}\right) = 1.
\end{equation}
Define a 2-spin correlation function by the expression
\begin{equation}\label{corr}
    \langle\sigma_{0,0}\sigma_{N,M}\rangle=
\lim_{{\mathcal{M}},{\mathcal{N}}\to\infty}{\frac{1}{Z(T)}}
\sum_{\{\sigma\}}\sigma_{0,0}\sigma_{N,M}e^{-E(\{\sigma\})/k_BT}.
\end{equation}
Let us introduce the notations,
\begin{equation}\label{SSCC}
\begin{split}\begin{aligned}
S_h &= \sinh\left(\frac{2J_h}{k_BT}\right), &\;\; S_v &= \sinh\left(\frac{2J_v}{k_BT}\right)\,\, ,
\\[1ex]
C_h &= \cosh\left(\frac{2J_h}{k_BT}\right), & C_v &= \cosh\left(\frac{2J_v}{k_BT}\right)\,\,,
\end{aligned}\end{split}
\end{equation}
and 
\begin{equation}\label{k-parameter}
    k=S_hS_v.
\end{equation}
In this paper we shall focus on \begin{equation}\label{k>1}
k>1,
\end{equation}
which,  in view of equation (\ref{Tcr}), corresponds to the  low temperature regime  $T<T_c$. It is known (see, e.g., \cite[Chap.~VIII]{McCoy-Wu})  that the diagonal correlations $\langle \sigma_{0,0}\sigma_{N,N} \rangle$ and the horizontal correlations $\langle \sigma_{0,0}\sigma_{0,N} \rangle$ have Toeplitz determinant representations. Indeed, we have 
\begin{align}\label{Isingphi}
\langle \sigma_{0,0}\sigma_{N,N} \rangle & = D_N[ \widehat{\phi} \ ], \qquad \widehat{\phi}(z) = \sqrt{\frac{1-k^{-1}z^{-1}}{1-k^{-1}z}},
\\[1ex]
\label{NTD horizontal corr toeplitz}
\langle \sigma_{0,0}\sigma_{0,N} \rangle &= D_N[ \widehat{\eta} \ ],\qquad\, \widehat{\eta}(z) = \sqrt{\frac{(1-\al_1z)(1-\al_2z^{-1})}{(1-\al_1z^{-1})(1-\al_2z)}},
\end{align} 
where the constants $\al_1$ and $\al_2$ are given by
\[\al_1=\frac{z_h(1-z_v)}{1+z_v}, \quad \al_2=\frac{1-z_v}{z_h(1+z_v)}, \qquad z_{h,v} = \tanh{\left(\frac{J_{h,v}}{k_BT}\right)}.\] 
In the low temperature regime, the symbols $\widehat{\phi}$ and $\widehat{\eta}$ enjoy the regularity properties required by the strong Szeg{\H o} limit theorem and the diagonal and horizontal long-range orders 
\[ M_D := \sqrt{\lim_{N\to\infty}\langle \sigma_{0,0}\sigma_{N,N} \rangle} \qandq M_H := \sqrt{\lim_{N\to\infty}\langle \sigma_{0,0}\sigma_{0,N} \rangle}, \]
both evaluate to $(1-k^{-2})^{1/8}$ (see \cite[Chap.~XI]{McCoy-Wu}).

In an interesting development, it was shown by Au-Yang and Perk in \cite{YP}, that the next-to-diagonal two point correlation function is given by the following bordered Toeplitz determinant,
 \begin{equation}\label{BT&NTD}
 	\langle \sigma_{0,0}\sigma_{N-1,N} \rangle = D^B_N[\widehat{\phi}; \widehat{\psi}],
 \end{equation}
 where $\widehat{\phi}$ is given in \eqref{Isingphi}, and 
  \begin{equation}\label{hat psi}
 \widehat{\psi}(z)= \frac{C_v z\widehat{\phi}(z)+C_h}{S_v(z-c_*)}, \qquad \mbox{with} \qquad c_* = -\frac{S_h}{S_v}.
 \end{equation}
This is straightforward to derive these formulae from the original expressions in \cite{YP}, and we have provided it as an appendix in Section \ref{Simplifying AuYangPerk}. We would like to emphasize that in the low-temperature regime ($k>1$) and in the anisotropic case ($J_h \neq J_v$),
 the symbols $\widehat{\phi}$ and $\widehat{\psi}$ satisfy the corresponding assumptions of Theorem \ref{main thm}, in particular, $\widehat{\phi}$ is of Szeg\H{o} type. The function $\widehat{\psi}(z)$ actually does {\em not} have a pole at $z=c_*$, and therefore it is analytic on a neighborhood of
 the unit circle, even in the isotropic case when $c_*=-1$.
 
Our results being applied to the next-to-diagonal theory for the Ising model show the following large $N$ behavior of the corresponding
correlation function in the  low temperature regime ($k>1$), which is valid in both the isotropic and anisotropic cases.
 \begin{theorem}\label{th2}
Let $\langle \sigma_{0,0}\sigma_{N-1,N} \rangle$ be the next-to-diagonal two point correlation function
in the square lattice Ising model. Then,  in the low-temperature regime, the long-range order in the next-to-diagonal direction 
is the same as of the diagonal and horizontal ones,
i.e., 
\begin{equation}\label{"magnetization"}
\lim_{N \to \infty} \langle \sigma_{0,0}\sigma_{N-1,N} \rangle=  (1-k^{-2})^{1/4}.
\end{equation}
\end{theorem}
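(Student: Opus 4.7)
The strategy is to apply Theorem \ref{thm 1.2} (or equivalently Theorem \ref{main thm}) directly to the bordered Toeplitz representation \eqref{BT&NTD}, and to verify by explicit computation that the three constants $G[\widehat\phi]$, $E[\widehat\phi]$, and $F[\widehat\phi;\widehat\psi]$ evaluate to $1$, $(1-k^{-2})^{1/4}$, and $1$ respectively.

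First I would check the hypotheses. In the low temperature regime $k>1$ one has $k^{-1}<1<k$, so the branch points of the symbol $\widehat\phi$ in \eqref{Isingphi} sit off $\T$; a single-valued analytic branch is available on an annular neighborhood of $\T$, and $\widehat\phi$ is Szeg\H{o}-type. The symbol $\widehat\psi$ in \eqref{hat psi} has the form $q_1\widehat\phi+q_2$ from \eqref{q1 q2} with $q_1(z)=\tfrac{C_v z}{S_v(z-c_*)}$ and $q_2(z)=\tfrac{C_h}{S_v(z-c_*)}$; in the isotropic case $c_*=-1\in\T$, but the numerator in \eqref{hat psi} vanishes at $c_*$, so $\widehat\psi$ is still analytic in a neighborhood of $\T$ and Theorem \ref{thm 1.2} applies uniformly in the anisotropy.

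Next, from
\begin{equation*}
\log\widehat\phi(z)=\tfrac12\log(1-k^{-1}z^{-1})-\tfrac12\log(1-k^{-1}z)
\end{equation*}
one reads off $[\log\widehat\phi]_0=0$ and $[\log\widehat\phi]_{\pm n}=\pm\tfrac{k^{-n}}{2n}$ for $n\geq1$; summing the series in \eqref{G and E} gives $G[\widehat\phi]=1$ and $E[\widehat\phi]=(1-k^{-2})^{1/4}$. The canonical Wiener-Hopf factors with normalization \eqref{WH.norm.allpah.G} are $\widehat\phi_+(z)=(1-k^{-1}z)^{-1/2}$ and $\widehat\phi_-(z)=(1-k^{-1}z^{-1})^{1/2}$, so $[\widehat\phi_+]_0=1$ and Theorem \ref{thm 1.2} reduces $F[\widehat\phi;\widehat\psi]$ to $[\widehat\phi_-^{-1}\widehat\psi]_0$. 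Using $\widehat\phi_-^{-1}\widehat\phi=\widehat\phi_+$, this Fourier coefficient splits as the sum of two contour integrals over $\T$; after the substitution $w=1/z$ in the second one, each integrand has either a single simple pole inside $\T$ (with the branch point safely on the other side) or none. In each anisotropic regime $|c_*|\neq 1$ exactly one of the two residues survives, and using $k=S_hS_v$, $c_*=-S_h/S_v$, and $C_{h,v}^2-S_{h,v}^2=1$ a short calculation collapses that residue to $1$; continuity in the parameters then handles the isotropic limit.

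Assembling the three pieces via Theorem \ref{thm 1.2} yields $\langle\sigma_{0,0}\sigma_{N-1,N}\rangle=(1-k^{-2})^{1/4}(1+O(e^{-\mathfrak{c}N}))$, from which \eqref{"magnetization"} follows. The main tactical obstacle is the computation of $F$: one has to keep careful track of which poles and branch points of $\widehat\phi_-^{-1}\widehat\psi$ lie inside $\T$ (the two anisotropic subcases $|c_*|<1$ and $|c_*|>1$ each activate a different residue whose value nonetheless simplifies to the same $1$), and then justify the passage to the isotropic boundary $|c_*|=1$ without losing this cancellation.
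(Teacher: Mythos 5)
Your proposal is correct, and it uses essentially the same ingredients as the paper ($G[\widehat\phi]=1$, $E[\widehat\phi]=(1-k^{-2})^{1/4}$, and a residue computation showing $F[\widehat\phi;\widehat\psi]=1$ in each anisotropic regime), but the route differs in two respects. First, you apply Theorem~\ref{thm 1.2}, which expresses $F$ directly as $[\widehat\phi_-^{-1}\widehat\psi]_0/[\widehat\phi_+]_0$, whereas the paper's proof in Section~2 goes through Theorem~\ref{main thm}, identifying the parameters in the rational decomposition $\widehat\psi=q_1\widehat\phi+q_2$ and then reading $F$ off formula~\eqref{ConstantF2}. These are equivalent (the paper notes after Theorem~\ref{thm3.3} that \eqref{3.15}--\eqref{3.16} coincide with \eqref{ConstantF2}), but your version avoids ever introducing the decomposition that puts artificial simple poles at $c_*$ into the symbol. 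Second, your treatment of the isotropic boundary $|c_*|=1$ is different: you observe (as the paper does, in the Remark following Theorem~\ref{th2}) that $\widehat\psi$ has a removable singularity at $c_*$ and hence is analytic on the annulus $k^{-1}<|z|<k$, so Theorem~\ref{thm 1.2} applies uniformly; the pole reappears only when you split the contour integral for $[\widehat\phi_-^{-1}\widehat\psi]_0$, and you close the gap by noting the integral is a continuous function of the coupling constants. The paper instead uses a $\rho$-rescaling trick, replacing $(\widehat\phi,\widehat\psi)$ by $(\widehat\phi_\rho,\widehat\psi_\rho)$ with $\rho\neq|c_*|$ and invoking $D_N^B[\widehat\phi;\widehat\psi]=D_N^B[\widehat\phi_\rho;\widehat\psi_\rho]$. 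Your continuity argument is valid and arguably the lighter of the two: since $\widehat\phi_-^{-1}\widehat\psi$ is analytic on the annulus, you could equally deform the defining contour to $|z|=\rho\neq 1$ to compute the residues without ambiguity, which in effect recovers the paper's rescaling but at the level of a single Cauchy integral rather than the full determinant.
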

It is worth noticing that, although the bordered Toeplitz determinant which defines the correlation 
function  $\langle \sigma_{0,0}\sigma_{N-1,N} \rangle$ depends on the relation between
$J_h$ and $J_v$, its leading order asymptotics does not.  However, the sensitivity to the horizontal and vertical parameters is reflected in the second-order term of the asymptotic expansion as our next theorem illustrates.
\begin{theorem} \label{th22}
The next-to-diagonal two point correlation function has, in the low-temperature regime $k>1$, the $N\to\infty$ asymptotics
\begin{align}\label{magnetization2}
	\langle \sigma_{0,0}\sigma_{N-1,N} \rangle &=
	(1-k^{-2})^{1/4}\left(1+\frac{1}{2\pi(1-k^{-2})}\Big(\frac{1}{C_v^2}+\frac{1}{k^2-1}\Big)N^{-2}k^{-2N}\Big(1+O(N^{-1})
	\Big)
	\right).
\end{align}
\end{theorem}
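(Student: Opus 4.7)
The plan is to start from the bordered Toeplitz representation (\ref{BT&NTD}), apply the asymptotic framework already developed for Theorem \ref{th2}, and then sharpen the error estimate to isolate the first exponentially-small correction. First, I would rewrite $\widehat\psi$ in the form (\ref{general psi}) with $q_1(z)=\frac{C_v}{S_v}\cdot\frac{z}{z-c_*}$ and $q_2(z)=\frac{C_h}{S_v}\cdot\frac{1}{z-c_*}$, and use the linearity identity (\ref{linear-combination}) to split $D^B_N[\widehat\phi;\widehat\psi]$ into two pieces. The ingredients $G[\widehat\phi]=1$ and $E[\widehat\phi]=(1-k^{-2})^{1/4}$ follow from the explicit Wiener–Hopf factorization $\phi_+(z)=(1-k^{-1}z)^{-1/2}$, $\phi_-(z)=(1-k^{-1}z^{-1})^{1/2}$, which is analytic in the annulus $k^{-1}<|z|<k$ with square-root branch points on its boundary.

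Next I would return to the Riemann–Hilbert formulation of Section \ref{RHP proof thm 1.2}. There the quantity $D^B_N[\phi;\psi]/(G^N E)$ is expressed exactly in terms of the solution $R(z)$ of the small-norm RHP produced after the global parametrix is divided out, together with certain contour integrals of $\phi_\pm$ and $\psi$. The constant $F[\widehat\phi;\widehat\psi]$ in (\ref{ConstantFthm1.2}) comes from evaluating these integrals in the limit $R\to I$, and a direct computation using $[\phi_+]_0=1$ should recover $F=1$, consistent with Theorem \ref{th2}. The subleading correction is then obtained by writing $R=I+R_1+\ldots$ where $R_1$ is driven by the jump $J_R-I$ on lenses that may be collapsed onto the branch cuts at $z=k^{-1}$ and $z=k$; this is the first term in the Neumann expansion of the singular integral operator in $R_+=R_-J_R$. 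Because $\widehat\phi$ has square-root branch points, the classical saddle-point/Laplace analysis of the resulting contour integrals near $z=k^{-1}$ and $z=k$ produces contributions of the form $N^{-2}k^{-2N}$, with an explicit prefactor determined by the local behavior of $\phi_\pm$, $\widehat\psi$, and $1/(z-c_*)$ at these endpoints.

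Equivalently, one may bypass RHP at this stage and use the operator-theoretic formulation of Section \ref{Section OT 1}: the exact remainder in Theorem \ref{main thm} can be written as a trace-class quantity that evaluates, modulo lower-order terms, to a bilinear pairing of the $n\geq N$ tails of the Fourier expansions of $\phi_-^{-1}\widehat\psi$ and $\phi_+^{-1}$. Because both of these functions have square-root singularities at $z=k^{-1}$, their Fourier coefficients have the Watson-type behavior $\sim c\, n^{-3/2}k^{-n}$ with explicitly computable constants, and the resulting convolution sum produces the $N^{-2}k^{-2N}$ rate together with an explicit coefficient. The pieces coming from $q_1\widehat\phi$ and $q_2$ then combine and are simplified using $k=S_hS_v$, $C_v^2=1+S_v^2$, and the relation (\ref{Tcr}); after cancellation of the $c_*$-dependence, the algebraic combination collapses to $\frac{1}{2\pi(1-k^{-2})}\bigl(\frac{1}{C_v^2}+\frac{1}{k^2-1}\bigr)$.

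The main obstacle is the careful bookkeeping in step (ii): tracking both the magnitude and the numerical prefactor of the $N^{-2}k^{-2N}$ correction through the small-norm iteration (or through the tail-convolution argument). One must handle both branch points at $k^{\pm 1}$, verify that the potentially worrisome terms associated with the pole of $1/(z-c_*)$ drop out (this reflects the removability of the apparent pole at $c_*$ noted after (\ref{hat psi}), which persists through the asymptotic analysis), and uniformize the argument in the isotropic limit $c_*\to-1$. Once these cancellations are in place, the remainder $O(N^{-3}k^{-2N})$ gives the stated $O(N^{-1})$ relative error, completing the proof.
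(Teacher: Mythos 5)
Your outline matches the architecture of the paper's proof — the representation \eqref{BT&NTD}, the factorization $\phi_+(z)=(1-k^{-1}z)^{-1/2}$, $\phi_-(z)=(1-k^{-1}z^{-1})^{1/2}$, the constants $G[\widehat{\phi}]=1$ and $E[\widehat{\phi}]=(1-k^{-2})^{1/4}$, and the idea of pushing the Neumann expansion (of the small-norm $R$, or of $(I-K_N)^{-1}$) one step further to expose the first exponentially small correction. Two of your specific claims are, however, wrong in ways that would block the calculation.

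In the RHP route you cannot stop at $R_1$. Since $J_R-I$ in \eqref{JR-I} is strictly upper-triangular on $\Gamma_0$ and strictly lower-triangular on $\Gamma_1$, $R_1$ is off-diagonal while $R_2$ is diagonal. The quantities that enter \eqref{magnetization2} — $\kappa_n^{-2}=X_{12}(0;n)=R_{11}(0;n)\,\alpha(0)$ for the pure Toeplitz piece, and $X_{12}(c_*;n)=R_{11}(c_*;n)\,\alpha(c_*)$ when $|c_*|<1$ — depend on the diagonal entry $R_{11}=1+R_{2,11}+\cdots$, whose first nontrivial term is the double integral in $R_2$ that is computed in Proposition \ref{intas}. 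Retaining only $R_1$ gives $R_{11}=1+O(\rho^{-4n})$ and misses the $N^{-2}k^{-2N}$ correction entirely.

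The decay rates you posit for the paired tails in the operator route are also internally inconsistent with \eqref{magnetization2}. The remainder coming out of Theorem \ref{thm.3.1} and Corollary \ref{c.32} is, to leading order, $-\frac{1}{[\phi_+]_0}\sum_{n\ge N}[\phi_+/\phi_-]_{-n}\,[\phi_-p_+]_n$. Here $\phi_+/\phi_-=\big((1-k^{-1}z)(1-k^{-1}z^{-1})\big)^{-1/2}$ has exponent $-1/2$ at its branch points, so $[\phi_+/\phi_-]_{-n}\sim c_1\,n^{-1/2}k^{-n}$, while $\phi_-p_+$ has exponent $+1/2$ at $z=k$, so $[\phi_-p_+]_n\sim c_2\,n^{-3/2}k^{-n}$. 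It is the mismatch $n^{-1/2}\cdot n^{-3/2}=n^{-2}$, summed over $n\ge N$ via Lemma \ref{l.sum.asym}, that yields $N^{-2}k^{-2N}$. If both tails decayed like $n^{-3/2}k^{-n}$ as you assert, the sum would be $\sim N^{-3}k^{-2N}$, which is too small. (Also, $\phi_-^{-1}\widehat{\psi}$ and $\phi_+^{-1}$ are not the pair that actually appears, and $\phi_+^{-1}=(1-k^{-1}z)^{1/2}$ has its branch point at $z=k$, not $z=k^{-1}$; it is the singularity outside $\T$ that governs the positive-index Fourier coefficients.)

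A comparison is also worth making: the paper's operator-theoretic proof of Theorem \ref{th22} (through Theorem \ref{thm.3.4}) works directly with the single analytic function $\psi(z)=(\widehat{\phi}(z)z-\widehat{\phi}(c_*)c_*)/(z-c_*)$, which has no pole at $c_*$, so there are no pole cancellations to verify and no case split $|c_*|\lessgtr1$. Your proposal to split $\widehat{\psi}=q_1\widehat{\phi}+q_2$ mirrors the paper's RHP route; that works, but then one genuinely must check that the residue-type contributions from the reintroduced pole cancel — in the paper this is done via $\frac{C_h}{S_h}\widehat{\phi}^{-1}(c_*)=\frac{C_v}{S_v}$ and the residue computation \eqref{bintegral-c*inOmInfty} for $|c_*|>1$.
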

For comparison, asymptotics of the diagonal correlation function is given by
\begin{align}\label{magnetization3.diag}
\langle \sigma_{0,0}\sigma_{N,N} \rangle 
&=
	(1-k^{-2})^{1/4}\left(1+\frac{1}{2\pi(1-k^{-2})^2 k^2}N^{-2}k^{-2N}\Big(1+O(N^{-1})\Big)	\right)
	,\qquad N\to\infty
\end{align}
(see formula (3.27) in Chap.~XI of \cite{McCoy-Wu}).
As part of our computation leading to \eqref{magnetization2}, we reconfirm \eqref{magnetization3.diag}
as well.

The critical temperature  $(k=1)$ and the high temperature regime
$(k<1)$ correspond to the appearance of the Fisher-Hartwig type singularities in the symbol (\ref{Isingphi}) and will be considered in a future publication. 

It also should be mentioned that  Theorems \ref{th2} and  \ref{th22} confirm the  
long-range behavior of the next-to-diagonal correlation functions of the Ising model that  have already been 
known  in the physical literature \cite{CW}{\footnote{ In fact, in \cite{CW}, the long-range asymptotics
is obtained for the general correlation function 
$ \langle \sigma_{0,0}\sigma_{M,N} \rangle $.}.}

Finally we remark that the constant $F[\phi; \psi]$ can actually vanish for certain $\psi$. This happens, for example, if $\psi = \phi\frac{z}{z - c}$ with $|c| >1$ as can be seen from \eqref{ConstantF2}). In this case the  second-order term in the asymptotics  becomes important.

In this paper we shall present two different approaches to the general problem of bordered determinants. 
One is  based on the relatively new Riemann-Hilbert method of the asymptotic analysis 
of  Toeplitz and Hankel  determinants (see \cite{BDJ}, \cite{FIK}, \cite{DIK}). The Riemann-Hilbert approach has been inspired by the work \cite{W} where the connection of bordered Toeplitz determinants of the type $D^B_N[\phi;q\phi]$ to the system of biorthogonal polynomials on the unit circle was found for the first time. Another approach is based on the operator theoretic techniques, and it has been used in the theory of Toeplitz and Hankel determinants 
since the classical works of Szeg\H{o} and Widom (see \cite{WIDOMBlock}, \cite{Sz}, \cite{BS}, \cite{BS1}, \cite{BE}, \cite{BE1}). For the last 25 years these two techniques has been very closely 
interacting and greatly enhancing  each other. In particular, the asymptotic analysis of the bordered Toeplitz determinants
whose results are presented in this work has been carried   out within constant interaction and information exchanges
between  the first two and the last three co-authors. Hence we decided that it would be very  proper   to present both
the operator and Riemann-Hilbert  methods of the solution in one paper.  

\subsection{Outline.}

The paper is organized as follows. In Section \ref{BTD RHP} we shall present the Riemann-Hilbert representation of the bordered Toeplitz determinant corresponding to a symbol pair $(\phi, \psi)$, $\psi$ given by \eqref{general psi} and \eqref{q1 q2}. In this section we shall basically follow \cite{W} where the connection with the corresponding system of bi-orthogonal polynomials on the unit circle was first obtained. We will then prove Theorems \ref{main thm}, \ref{th2}, and \ref{th22} based on the Riemann-Hilbert formulation. Theorems \ref{main thm}, \ref{thm 1.2}, \ref{th2} and \ref{th22} will be proven using operator theory techniques in Section \ref{Section OT}.
In Section \ref{Sec:Num} a numerical verification for the asymptotics of the correlation function in Theorem \ref{th22} as well as for the asymptotics
of $D_N^B[\phi;\psi]$ in the case $\psi=\phi\frac{z}{z-c}$ is done. Finally Section \ref{Appendices} contains four appendices respectively on the solution of the associated Riemann-Hilbert problem, proof of Theorem \ref{thm 1.2} using the Riemann-Hilbert approach, derivation of the Ising symbol pair $(\widehat{\phi},\widehat{\psi})$, and some other auxiliary results.


\section{Bordered Toeplitz determinants and the Riemann-Hilbert problem for bi-orthogonal polynomials on the unit circle}\label{BTD RHP}
As mentioned in the outline, the goal of this section is to prove Theorem \ref{main thm}. In order to achieve that, we will first establish the relationship between the bordered Toeplitz determinant $D^{B}_N[\phi;\psi]$, $\psi$ given by \eqref{general psi}, and the solution of the Riemann-Hilbert problem for the system of bi-orthogonal polynomials on the unit circle (BOPUC). Let $Q_n$ and $\widehat{Q}_n$ be respectively defined by
\begin{equation}\label{Toeplitz OP 1}
Q_n(z):= \frac{1}{\sqrt{D_n[\phi] D_{n+1}[\phi]}} \det \begin{pmatrix}
\phi_0 & \phi_{-1} & \cdots & \phi_{-n} \\
\phi_1 & \phi_{0} & \cdots & \phi_{-n+1} \\
\vdots & \vdots & \ddots & \vdots \\
\phi_{n-1} & \phi_{n-2} & \cdots & \phi_{-1} \\
1 & z & \cdots & z^n
\end{pmatrix},
\end{equation}
and
\begin{equation}\label{Toeplitz OP 2}
\widehat{Q}_n(z) := \frac{1}{\sqrt{D_n[\phi] D_{n+1}[\phi]}} \det \begin{pmatrix}
\phi_0 & \phi_{-1} & \cdots & \phi_{-n+1} & 1 \\
\phi_1 & \phi_{0} & \cdots & \phi_{-n+2} & z \\
\vdots & \vdots & \ddots & \vdots \\
\phi_{n} & \phi_{n-1} & \cdots  & \phi_{1} & z^n
\end{pmatrix},
\end{equation}
where $\phi_j$, $j \in \Z$, are defined by \eqref{phi_n} and $D_n[\phi]$ is given by \eqref{ToeplitzDet}. Note that, from \eqref{Toeplitz OP 1} and \eqref{Toeplitz OP 2}, we have
\begin{equation}
    Q_n(z)=\ka_nz^n + \sum^{n-1}_{\ell=0} c_{\ell} z^{\ell} , \qquad \mbox{and} \qquad \widehat{Q}_n(z)=\ka_nz^n + \sum^{n-1}_{\ell=0} \widehat{c}_{\ell} z^{\ell},
\end{equation}
where
\begin{equation}\label{ka_n}
\ka_n = \sqrt{\frac{D_n[\phi]}{D_{n+1}[\phi]}}.
\end{equation}
One can readily observe that $\{Q_n\}^{\infty}_{n=0}$ and $\{\widehat{Q}_n\}^{\infty}_{n=0}$ form the bi-orthogonal system of polynomials on the unit circle with respect to the weight $\phi$ :
\begin{equation}
\int_{\T} Q_n(z)\widehat{Q}_n(z^{-1})\phi(z) \frac{\dd z}{2\pi \ic z} = \de_{nk}, \qquad n,k=0,1,2,\cdots.
\end{equation}
It is due to J.Baik, P.Deift and K.Johansson (\cite{BDJ}) that the following matrix-valued function constructed out of the polynomials $Q_n$ and $\widehat{Q}_n$
\begin{equation}\label{Toeplitz-OP-solution}
X(z;n):=\begin{pmatrix}
\ka_n^{-1} Q_n(z) & \di \ka^{-1}_n \int_{\T} \frac{Q_n(\ze)}{(\ze-z)} \frac{\phi(\ze)\dd \ze}{2\pi \ic \ze^n} \\
-\ka_{n-1}z^{n-1}\widehat{Q}_{n-1}(z^{-1}) & \di -\ka_{n-1} \int_{\T}  \frac{\widehat{Q}_{n-1}(\ze^{-1})}{(\ze-z)} \frac{\phi(\ze)\dd \ze}{2\pi \ic \ze}
\end{pmatrix},
\end{equation}
satisfies the following Riemann-Hilbert problem for BOPUC, which in the subsequent parts of this text will occasionally be referred to as the $X$-RHP:
\begin{itemize}
\item  \textbf{RH-X1} \qquad $X:\C\setminus \T \to \C^{2\times2}$ is analytic,
\item \textbf{RH-X2} \qquad  The limits of $X(\ze)$ as $\ze$ tends to $z \in \T $ from the inside and outside of the unit circle exist, and are denoted $X_{\pm}(z)$ respectively and are related by
\begin{equation}
X_+(z)=X_-(z)\begin{pmatrix}
1 & z^{-n}\phi(z) \\
0 & 1
\end{pmatrix}, \qquad  z \in \T,
\end{equation}
\item \textbf{RH-X3} \qquad  As $z \to \infty$
\begin{equation}
X(z)=\big( I + O(z^{-1}) \big) z^{n \sigma_3}.  
\end{equation}
\end{itemize}
 For convenience of the reader, in the Appendix \ref{Appendix Y-RHP} we have provided the solution of the $X$-RHP when $\phi$ is of Szeg{\H o} type. 
 
 In the following subsections we will analyze bordered Toeplitz determinants of the following three types \begin{itemize}
    \item $D^B_{N}[\phi;z^k]$,
    \item $D^B_{N}[\phi;q],$
    \item $D^B_{N}[\phi;q\phi]$,
\end{itemize}
where $q$ is a rational function with simple poles. The lemmas in the following subsections, whose proofs are inspired by calculations in \cite{W}, show that the bordered Toeplitz determinants of the above types are encoded into the solution of the $X$-RHP. In fact, we will show that the bordered Toeplitz determinants of the first two types are related to the $X_{11}$ and the bordered Toeplitz determinants of the third type are related to the $X_{12}$, respectively, the $11$ and $12$ entries of the solution of the $X$-RHP. Later we will show how these cases are relevant to the next-to-diagonal correlations in the 2D-Ising model.

\subsection{Bordered Toeplitz determinants of the type $D^B_{N}[\phi;z^k],  k \in \Z$}
\label{monomials} Let us start this subsection with the following elementary lemma.
\begin{lemma}\label{elementary}
The following identity holds for the Bordered Toeplitz determinants
\begin{equation}\label{negative powers}
    D^B_{n+1}[\phi; z^{k}] = 0, \qquad k \in \Z \setminus \{ 0,1, \cdots, n \}.
\end{equation}
\end{lemma}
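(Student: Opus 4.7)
The plan is to compute the last column of the bordered Toeplitz matrix in \eqref{btd} when $\psi(z)=z^k$ and observe that it vanishes identically under the assumed condition on $k$. By definition \eqref{phi_n}, the Fourier coefficients of $\psi(z)=z^k$ are
\begin{equation*}
\psi_m = \int_{\T}z^{-m}z^k\,\frac{\dd z}{2\pi\ic z}
= \int_{\T} z^{k-m-1}\,\frac{\dd z}{2\pi\ic}
= \de_{m,k},
\end{equation*}
by the residue theorem.

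In the $(n+1)\times(n+1)$ bordered Toeplitz determinant $D^B_{n+1}[\phi;z^k]$, the last column, reading from top to bottom, consists of the entries $\psi_n,\psi_{n-1},\dots,\psi_0$, that is, $\de_{n,k},\de_{n-1,k},\dots,\de_{0,k}$. The hypothesis $k\in\Z\setminus\{0,1,\dots,n\}$ means $k\neq j$ for every $j\in\{0,1,\dots,n\}$, so each of these Kronecker symbols is zero. Hence the last column of the matrix is identically zero, and the determinant vanishes.

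This is a purely bookkeeping argument, so no real obstacle arises; the only thing to be careful about is matching the index conventions of \eqref{btd} and \eqref{phi_n}, which is handled by the residue calculation above.
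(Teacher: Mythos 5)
Your proof is correct and takes essentially the same approach as the paper: the paper's one-line proof also observes that all Fourier coefficients $(z^k)_j$ vanish for $0\le j\le n$ when $k\in\Z\setminus\{0,1,\dots,n\}$, so the border column is zero. You have merely spelled out the residue computation $\psi_m=\delta_{m,k}$ that justifies this.
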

\begin{proof}
It suffices to note that all Fourier coefficients $(z^k)_j=0$ for $0\leq j \leq n$, $k \in \Z \setminus \{ 0,1, \cdots n \}$.
\end{proof}
Note that for $k=0$, we obviously have \eqref{border,phi,const}. Now, we turn our attention to \begin{equation}
    D^B_{n+1}[\phi;z^{k}], \qquad k \in \{1,\cdots,n\}.
\end{equation}From
\begin{equation}\label{boreder-z^k-positive-powers}
    D^B_{n+1}[\phi; z^{k}] = \det \begin{pmatrix}
\phi_0 & \phi_{-1} & \cdots & \phi_{k-n+1} & \phi_{k-n}  & \phi_{k-n-1} & \cdots &  \phi_{-n} \\
\phi_1 & \phi_{0} & \cdots & \phi_{k-n+2}  & \phi_{k-n+1} & \phi_{k-n} & \cdots & \phi_{-n+1} \\
\vdots & \vdots & \cdots & \vdots \\
\phi_{n-1} & \phi_{n-2} & \cdots & \phi_{k} & \phi_{k-1} & \phi_{k-2} & \cdots & \phi_{-1} \\
0 & 0 & \cdots & 0 & 1 & 0 & \cdots & 0 
\end{pmatrix}.
\end{equation}
we observe that (by \eqref{Toeplitz OP 1} and \eqref{ka_n}) the determinant on the right hand side of \eqref{boreder-z^k-positive-powers} is exactly the coefficient of $z^{n-k}$ in the polynomial \begin{equation}
    \ka^{-1}_n D_n[\phi] Q_n(z).
\end{equation}
Let \begin{equation}
    Q_n(z) \equiv \sum^{n}_{j=0} \ka^{(n)}_j z^j.
\end{equation} where for brevity of notation, throughout this paper we use \begin{equation}
    \ka_n \equiv \ka^{(n)}_n.
\end{equation} Therefore
\begin{equation}
    D^B_{n+1}[\phi;z^{k}] = D_n[\phi] 
    \frac{\ka^{(n)}_{n-k}}{\ka_n}.
\end{equation}
We are now in a position to express $D^B_{n+1}[\phi;z^{k}]$, $1 \leq k \leq n$, in terms of $X$-RHP data in a recursive way as follows:
\begin{equation}\label{border-z-X-RHP}
    D^B_{n+1}[\phi;z] = D_n[\phi] \lim_{z \to \infty} \left( \frac{X_{11}(z;n)-z^n}{z^{n-1}} \right) \equiv D_n[\phi] \frac{\ka^{(n)}_{n-1}}{\ka_n},
\end{equation}
\begin{equation}\label{border-z^2-X-RHP}
    D^B_{n+1}[\phi;z^2] = D_n[\phi] \lim_{z \to \infty} \left( \frac{X_{11}(z;n)-z^n - \frac{\ka^{(n)}_{n-1}}{\ka_n}z^{n-1} }{z^{n-2}} \right) \equiv D_n[\phi] \frac{\ka^{(n)}_{n-2}}{\ka_n},
\end{equation}
and so on. These formulae are recursive, in the sense that the second and third members of the equality \eqref{border-z-X-RHP} can be regarded as the definition of $k^{(n)}_{n-1}$ in terms of the $X$-RHP, which one needs in \eqref{border-z^2-X-RHP}.

Here, in particular we present how the asymptotics of $D^B_{n+1}[\phi;z]$ can be obtained from the Riemann-Hilbert data. In lemma (\ref{analogue of lemma 2.2 for c=0}) we will show that this is actually related to $D_n[\phi;\frac{\phi}{z}]$.
\begin{lemma}\label{psi=z}
Let $\phi$ be of Szeg{\H o} type. Then, as $n \to \infty$ we have
\begin{equation}\label{phi,z asymp1}
    D^B_{n+1}[\phi;z] =  D_n[\phi] \left(-\frac{1}{2\pi \ic} \int_{\T} \ln(\phi(\tau)) \dd \tau + O(e^{-\mathfrak{c}n})\right), 
\end{equation}
for some positive constant $\mathfrak{c}$.
\end{lemma}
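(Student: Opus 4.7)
The plan is to combine the identity \eqref{border-z-X-RHP}, which links $D^B_{n+1}[\phi;z]$ to the subleading coefficient of the monic polynomial $X_{11}(\cdot;n)$, with the large-$n$ Riemann--Hilbert asymptotics of the $X$-RHP established in Appendix \ref{Appendix Y-RHP}.

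First I would invoke from Appendix \ref{Appendix Y-RHP} the standard steepest-descent asymptotics of the $X$-RHP for $\phi$ of Szeg\H{o} type: uniformly on any fixed circle $|z|=r$ with $r>1$ lying inside the strip of analyticity of $\phi$,
\begin{equation*}
X_{11}(z;n) \;=\; z^n\,\phi_-(z)^{-1}\bigl(1 + O(e^{-\mathfrak{c} n})\bigr), \qquad n\to\infty,
\end{equation*}
for some $\mathfrak{c}>0$, where $\phi_-$ is the outer Wiener--Hopf factor from \eqref{phi.pm}. The exponential error comes from the fact that the lens opening in the steepest-descent analysis may be carried out at a fixed positive distance from $\T$, thanks to the analyticity hypothesis on $\phi$.

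Next, since $X_{11}(\cdot;n)$ is a monic polynomial of degree $n$, Cauchy's theorem gives
\begin{equation*}
\frac{\ka_{n-1}^{(n)}}{\ka_n} \;=\; \frac{1}{2\pi \ic}\oint_{|z|=r}\frac{X_{11}(z;n)}{z^n}\,\dd z \;=\; \frac{1}{2\pi \ic}\oint_{|z|=r}\phi_-(z)^{-1}\,\dd z \;+\; O(e^{-\mathfrak{c} n}).
\end{equation*}
Because $\phi_-(\infty)=1$ and $\phi_-(z)^{-1} = 1 - [\log\phi]_{-1}\,z^{-1} + O(z^{-2})$ for $|z|>1$, the contour integral picks up the coefficient of $z^{-1}$ in the expansion at infinity and evaluates to $-[\log\phi]_{-1}$. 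The Fourier coefficient identity
\begin{equation*}
[\log \phi]_{-1} \;=\; \int_{\T}\log\phi(\tau)\,\frac{\dd \tau}{2\pi \ic}
\end{equation*}
then converts this into the stated integral form; inserting the result into \eqref{border-z-X-RHP} yields \eqref{phi,z asymp1}.

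The essential work is packaged inside Appendix \ref{Appendix Y-RHP}: establishing the exponential bound on the error in the RH asymptotics of $X_{11}$, uniformly on a circle of radius $r>1$. This is standard for Szeg\H{o}-type symbols, since the analytic continuation of $\phi$ across $\T$ permits the jump contour to be deformed by a fixed amount, but it is the only non-routine step. Everything after that is the bookkeeping of a single Cauchy integral and a Laurent expansion, combined with the fact that $\ka_{n-1}^{(n)}/\ka_n$ is, by the very definition in \eqref{border-z-X-RHP}, this subleading coefficient of the monic orthogonal polynomial.
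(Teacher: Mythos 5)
Your proof is correct and follows essentially the same route as the paper: it combines \eqref{border-z-X-RHP} with the exterior-region Riemann--Hilbert asymptotics $X_{11}(z;n)\sim z^n\al(z)=z^n\phi_-^{-1}(z)$ from \eqref{X in terms of R exact}--\eqref{R asymp}, then reads off the subleading coefficient. The only cosmetic difference is that you extract that coefficient by a Cauchy integral of $X_{11}(z;n)/z^n$ over a fixed circle $|z|=r>1$, whereas the paper computes $\lim_{z\to\infty}(X_{11}(z;n)-z^n)/z^{n-1}$ directly from the Laurent expansion \eqref{expansion of al(z) at infinity} of $\al(z)$; both yield $-[\log\phi]_{-1}=-\frac{1}{2\pi\ic}\int_\T\ln\phi(\tau)\,\dd\tau$ with an exponentially small error.
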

\begin{proof}
    Expanding $\al(z)$, given by \eqref{45}, as $z \to \infty $ we get 
\begin{equation}\label{expansion of al(z) at infinity}
    \al(z) = 1 - \frac{a_0}{2\pi \ic z} - \left( \frac{a_1}{2\pi \ic} + \frac{a^2_0}{8 \pi^2} \right) \frac{1}{z^2} + \cdots , \qquad z \to \infty,
\end{equation}
where
\begin{equation}
    a_k := \int_{\T} \tau^k \ln \left(\phi(\tau)\right) \dd \tau. 
\end{equation}
Also from \eqref{X in terms of R exact}, and \eqref{R asymp} we have

\begin{equation}\label{X11}
    X_{11}(z;n) = \al(z)z^n\left(1 + \frac{ O(e^{-2\mathfrak{c}n})}{ 1+|z| } \right), \qquad z \in \Om_{\infty}, \qquad n \to \infty.
\end{equation}
Combining \eqref{border-z-X-RHP}, \eqref{expansion of al(z) at infinity} and \eqref{X11} gives \eqref{phi,z asymp1}.
\end{proof}
In a similar fashion, and with increasing effort, one can obtain similar formulae for $D^B_{n+1}[\phi;z^k]$, $k>1$. 


 \subsection{Bordered Toeplitz determinants of the type $D^B_{N}[\phi;q]$}\label{rationals}  
 Let us define
\begin{equation}\label{q1 and q2}
    q_0(z):= \frac{1}{z-c}.
\end{equation}
The Fourier coefficients of $q_0$ are given by
\begin{equation}\label{q_kj}
    q_{0,j} = \begin{cases}
    0, & |c|<1, \\
    -(c)^{-j-1}, & |c|>1,
    \end{cases}\qquad 0\leq j \leq n.
\end{equation}
The following lemma establishes how $D^B_{N}[\phi;q_0]$ is encoded into $X$-RHP data.
\begin{lemma}
The bordered Toeplitz determinant $D^B_{n+1}[\phi,\di \frac{1}{z-c}]$, is encoded into $X$-RHP data described by
\begin{equation}\label{psi=1/z-c}
    D^B_{n+1}[\phi;\frac{1}{z-c}] = \begin{cases}
    0, & |c|<1, \\
    -c^{-n-1}D_{n}[\phi]X_{11}(c;n), & |c|>1,
    \end{cases}
\end{equation}
where $D_{n}[\phi]$ is given by \eqref{ToeplitzDet} and $X_{11}$ is the $11$ entry of the solution to  \textbf{RH-X1} through  \textbf{RH-X3}.
\end{lemma}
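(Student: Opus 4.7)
The plan is to split the proof according to whether $|c|<1$ or $|c|>1$, substitute the Fourier coefficients of $q_{0}(z)=1/(z-c)$ computed in \eqref{q_kj} directly into the definition \eqref{btd} of the bordered Toeplitz determinant, and then, in the nontrivial case, identify the resulting matrix (after transposition) with the determinantal representation \eqref{Toeplitz OP 1} of $Q_{n}(c)$.

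In the case $|c|<1$, formula \eqref{q_kj} gives $q_{0,j}=0$ for every $0\le j\le n$. Since the last column of the bordered matrix in \eqref{btd} (with $N=n+1$) consists of the coefficients $\psi_{n},\psi_{n-1},\dots,\psi_{0}$, it is identically zero and the determinant vanishes, matching the first branch of \eqref{psi=1/z-c}.

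For the case $|c|>1$, the last entry of the $i$-th row ($0\le i\le n$) is $\psi_{n-i}=-c^{\,i-n-1}$. I would pull the common factor $-c^{-n-1}$ out of the last column, so that this column becomes $(1,c,c^{2},\dots,c^{n})^{T}$. The resulting $(n+1)\times(n+1)$ matrix has $(i,j)$ entry $\phi_{j-i}$ in its first $n$ columns and $c^{i}$ in the last column. Transposing, which leaves the determinant invariant, produces precisely the matrix inside \eqref{Toeplitz OP 1} evaluated at $z=c$: the Toeplitz-like block $\phi_{i-j}$ reappears as the upper $n\times(n+1)$ block, and the bottom row becomes $(1,c,c^{2},\dots,c^{n})$. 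Hence the value of this determinant equals $\sqrt{D_{n}[\phi]\,D_{n+1}[\phi]}\,Q_{n}(c)$, which, using the identification $X_{11}(z;n)=\kappa_{n}^{-1}Q_{n}(z)$ from \eqref{Toeplitz-OP-solution} together with \eqref{ka_n}, simplifies to $D_{n}[\phi]\,X_{11}(c;n)$. Reinstating the prefactor $-c^{-n-1}$ yields the second branch of \eqref{psi=1/z-c}.

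The argument is essentially bookkeeping and encounters no genuine obstacle; the only point requiring some care is tracking the row/column indexing so that the transpose of the reduced bordered matrix coincides exactly with the Baik--Deift--Johansson matrix in \eqref{Toeplitz OP 1}.
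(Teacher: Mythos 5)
Your proof is correct and is essentially the same argument as the paper's, traversed in the opposite direction: the paper starts from the determinantal formula $X_{11}(z;n)=\tfrac{1}{D_n[\phi]}\det(\cdots)$ (obtained from \eqref{Toeplitz OP 1}, \eqref{ka_n}, \eqref{Toeplitz-OP-solution}), multiplies the bottom row by $-c^{-n-1}$, and recognizes the transpose of the bordered matrix, whereas you start from the bordered matrix, factor $-c^{-n-1}$ out of the border column, transpose, and recognize the $Q_n$ determinant. Both hinge on the same two observations (the explicit Fourier coefficients \eqref{q_kj} and invariance of the determinant under transposition), so this is not a genuinely different route.
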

\begin{proof}
The case of $|c|<1$ is obvious due to \eqref{q_kj}. Consider $|c|>1$. Recalling that $X_{11}(z;n)=\ka^{-1}_nQ_n(z)$, from \eqref{Toeplitz OP 1} and \eqref{ka_n} we have
  \begin{equation}
        X_{11}(z;n) = \frac{1}{D_n[\phi]} \det \begin{pmatrix}
\phi_0 & \phi_{-1} & \cdots & \phi_{-n} \\
\phi_1 & \phi_{0} & \cdots & \phi_{-n+1} \\
\vdots & \vdots & \cdots & \vdots \\
\phi_{n-1} & \phi_{n-2} & \cdots & \phi_{-1} \\
1 & z & \cdots & z^{n}
\end{pmatrix}.
\end{equation}
Therefore from \eqref{q_kj}
\begin{equation}
    \begin{split}
        -c^{-n-1}D_{n}[\phi]X_{11}(c;n) & = \det \begin{pmatrix}
\phi_0 & \phi_{-1} & \cdots & \phi_{-n} \\
\phi_1 & \phi_{0} & \cdots & \phi_{-n+1} \\
\vdots & \vdots & \cdots & \vdots \\
\phi_{n-1} & \phi_{n-2} & \cdots & \phi_{-1} \\
-c^{-n-1} & -c^{-n} & \cdots & -c
\end{pmatrix} \\ & = \det \begin{pmatrix}
\phi_0 & \phi_{-1} & \cdots & \phi_{-n} \\
\phi_1 & \phi_{0} & \cdots & \phi_{-n+1} \\
\vdots & \vdots & \cdots & \vdots \\
\phi_{n-1} & \phi_{n-2} & \cdots & \phi_{-1} \\
q_{0,n} & q_{0,n-1} & \cdots & q_{0,0}
\end{pmatrix} \equiv D^B_{n+1}[\phi,q_0].
    \end{split}
\end{equation}
\end{proof}
\begin{corollary}\label{cor 1}
We have
\begin{equation}\label{228}
	D^B_{n+1}\left[\phi;\di a+\frac{b_0}{z}+\sum^{m}_{j=1}\frac{b_j }{z-c_j}\right] = D_n[\phi] \left(a -\sum^{m}_{j=1  \atop |c_j|>1 }b_j c^{-n-1}_jX_{11}(c_j;n)\right),
\end{equation}
and for a Szeg{\H o} type $\phi$
\begin{equation}\label{229}
D^B_{n+1}\left[\phi;\di a+\frac{b_0}{z}+\sum^{m}_{j=1}\frac{b_j }{z-c_j}\right] = G[\phi]^{n} E[\phi] \left(a -\sum^{m}_{j=1  \atop |c_j|>1 }\frac{b_j}{c_j} \al(c_j)\right)\left( 1 + O(e^{-\mathfrak{c}n})\right),
\end{equation}
 as $n \to \infty$, where 	\begin{equation}\label{al al}
 \al(z):= \exp \left[ \frac{1}{2 \pi i } \int_{\T} \frac{\ln(\phi(\tau))}{\tau-z}d\tau \right],
 \end{equation} $G[\phi]$ and $E[\phi]$ are given by \eqref{G and E} and  $\mathfrak{c}$ is some positive constant.
\end{corollary}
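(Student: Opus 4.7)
The plan is to prove both identities by invoking the linearity of the bordered Toeplitz determinant in its second argument \eqref{linear-combination} and then evaluating each of the resulting pieces using results already established earlier in the section.

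For the exact formula \eqref{228}, I would first split
\[
D^B_{n+1}\!\left[\phi;\,a+\frac{b_0}{z}+\sum^{m}_{j=1}\frac{b_j}{z-c_j}\right]
= a\,D^B_{n+1}[\phi;1] + b_0\,D^B_{n+1}[\phi;z^{-1}] + \sum_{j=1}^{m} b_j\,D^B_{n+1}\!\left[\phi;\tfrac{1}{z-c_j}\right].
\]
The first term equals $a\,D_n[\phi]$ by \eqref{border,phi,const}. The second term vanishes by Lemma \ref{elementary}, since $-1\notin\{0,1,\ldots,n\}$. For each term in the sum, Lemma \eqref{psi=1/z-c} gives $D^B_{n+1}[\phi;(z-c_j)^{-1}]=0$ when $|c_j|<1$ and $-c_j^{-n-1}D_n[\phi]\,X_{11}(c_j;n)$ when $|c_j|>1$. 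Collecting everything and factoring out $D_n[\phi]$ produces \eqref{228}.

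For the asymptotic formula \eqref{229}, I would substitute the Szeg\H{o}--Widom asymptotics $D_n[\phi]=G[\phi]^{n}E[\phi]\bigl(1+O(e^{-\mathfrak{c}n})\bigr)$ into \eqref{228}, and then replace each $X_{11}(c_j;n)$ for $|c_j|>1$ by its asymptotic expression. Since $c_j$ lies in the exterior domain $\Omega_\infty$, formula \eqref{X11} yields
\[
X_{11}(c_j;n) = \alpha(c_j)\,c_j^{\,n}\bigl(1+O(e^{-2\mathfrak{c}n})\bigr),
\]
so that
\[
-b_j c_j^{-n-1} X_{11}(c_j;n) = -\frac{b_j}{c_j}\,\alpha(c_j)\bigl(1+O(e^{-2\mathfrak{c}n})\bigr).
\]
Summing over the poles outside the unit disk and combining the error estimates (they are all of the same exponentially-small order, uniformly in the finitely many parameters $c_j$) gives precisely \eqref{229}, with $\mathfrak{c}$ replaced by the smaller of the decay rates already obtained. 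There is no genuine obstacle here; the only point requiring minor care is verifying that $\alpha(z)$ in \eqref{al al} agrees with the $\alpha(z)$ used in \eqref{X11}, which is immediate from \eqref{alpha.phi.pm} since $|c_j|>1$.
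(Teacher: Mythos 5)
Your proposal is correct and follows the same route as the paper: split via \eqref{linear-combination}, evaluate the constant and $b_0/z$ pieces by \eqref{border,phi,const} and Lemma \ref{elementary}, use \eqref{psi=1/z-c} for the simple-pole pieces, then substitute Szeg\H{o}--Widom together with the large-$n$ asymptotics of $X_{11}$ coming from \eqref{X in terms of R exact} and \eqref{R asymp}. The only implicit step worth flagging is that placing every $c_j$ with $|c_j|>1$ in $\Om_\infty$ requires choosing the lens radius $\rho$ of $\Ga_1$ smaller than $\min_{|c_j|>1}|c_j|$ (while staying inside the annulus of analyticity of $\phi$), which is a routine freedom in the steepest-descent scheme; with that noted, the argument is complete.
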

\begin{proof}
	Note that \eqref{228} immediately follows from \eqref{psi=1/z-c}, \eqref{border,phi,const} and \eqref{linear-combination}; and then we get \eqref{229} as a direct consequence of \eqref{Szego Theorem}, \eqref{G and E}, \eqref{X in terms of R exact} and \eqref{R asymp}.
\end{proof}


\subsection{Bordered Toeplitz determinants of the type $D^B_{N}[\phi;q\phi]$}

Now we turn our attention to the bordered Toeplitz determinants where the border symbol is given by $q \phi$, $q$ being a rational function with simple poles. Let us start with proving a fundamental identity relating one such bordered Toeplitz determinant to the pure Toeplitz Riemann-Hilbert data.

\begin{lemma}\label{lem2.2} Let $\psi_0:=q_0\phi$, where $q_0$ is defined in \eqref{q1 and q2}, with $c \neq 0$. Then the bordered Toeplitz determinant $D^{B}_n[\phi;\psi_0]$ can be written in terms of the following data from the solution of the X-RHP:
\begin{equation}\label{bdToep-RHP-q0}
    D^{B}_{n+1}[\phi;\psi_0] = - \frac{1}{c}D_{n+1}[\phi]+ \frac{1}{c} D_{n}[\phi] X_{12}(c,n),
\end{equation}
where $D_{n}[\phi]$ is given by \eqref{ToeplitzDet} and $X_{12}$ is the $12$ entry of the solution to  \textbf{RH-X1} through  \textbf{RH-X3}.
\end{lemma}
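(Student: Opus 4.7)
The strategy is to start from the integral representation of $X_{12}$ in \eqref{Toeplitz-OP-solution}, use the algebraic identity $\phi(\zeta)/(\zeta-c)=\psi_0(\zeta)$ to convert it into an integral against $\psi_0$, and then reinterpret the result as a single determinant that splits by linearity into the pure Toeplitz piece $D_{n+1}[\phi]$ and a multiple of the bordered piece $D^B_{n+1}[\phi;\psi_0]$. Setting $P(\zeta):=D_n[\phi]\,\ka_n^{-1}Q_n(\zeta)$, which by \eqref{Toeplitz OP 1} and \eqref{ka_n} is the polynomial obtained by stripping the $(D_n[\phi]D_{n+1}[\phi])^{-1/2}$ prefactor from the determinant in \eqref{Toeplitz OP 1}, one finds
\begin{equation*}
D_n[\phi]\,X_{12}(c;n)=\int_{\T}\frac{P(\zeta)\,\phi(\zeta)}{(\zeta-c)\,\zeta^n}\,\frac{\dd\zeta}{2\pi\ic}=\int_{\T} P(\zeta)\,\psi_0(\zeta)\,\zeta^{-n}\,\frac{\dd\zeta}{2\pi\ic}.
\end{equation*}

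Next, expand $P(\zeta)=\sum_{j=0}^{n}C_j\zeta^j$ along the last row of the defining matrix of $Q_n$, so that $C_j$ is the $(n,j)$-cofactor of that matrix (in particular $C_n=D_n[\phi]$, since deleting the last row and last column leaves the Toeplitz block $T_n[\phi]$). Using the elementary identity $\int_{\T}\zeta^{j-n}\psi_0(\zeta)\frac{\dd\zeta}{2\pi\ic}=(\psi_0)_{n-j-1}$, one obtains
\begin{equation*}
D_n[\phi]\,X_{12}(c;n)=\sum_{j=0}^{n}C_j\,(\psi_0)_{n-j-1},
\end{equation*}
and the right-hand side is the Laplace expansion of the determinant of the matrix in \eqref{Toeplitz OP 1} with its bottom row $(1,\zeta,\ldots,\zeta^n)$ replaced by $\bigl((\psi_0)_{n-1},(\psi_0)_{n-2},\ldots,(\psi_0)_{-1}\bigr)$.

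The final step is to apply the recurrence $(\psi_0)_{m-1}=\phi_m+c(\psi_0)_m$, obtained by reading off the $m$-th Fourier coefficient of $(\zeta-c)\psi_0(\zeta)=\phi(\zeta)$. With $m=n-j$ this rewrites each entry of the new last row as $\phi_{n-j}+c(\psi_0)_{n-j}$, and linearity of the determinant in its last row splits $D_n[\phi]\,X_{12}(c;n)$ as $\det A_1+c\det A_2$. The first matrix $A_1$ has last row $(\phi_n,\phi_{n-1},\ldots,\phi_0)$ together with the Toeplitz rows above, hence $A_1$ is the transpose of $T_{n+1}[\phi]$ and $\det A_1=D_{n+1}[\phi]$. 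The second matrix $A_2$ has last row $((\psi_0)_n,(\psi_0)_{n-1},\ldots,(\psi_0)_0)$, so $A_2$ is the transpose of the bordered Toeplitz matrix in \eqref{btd} and $\det A_2=D^B_{n+1}[\phi;\psi_0]$. Dividing by $c$, which is nonzero by hypothesis, yields \eqref{bdToep-RHP-q0}.

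The only delicate point is the bookkeeping of row versus column indices when moving between the Riemann-Hilbert matrix in \eqref{Toeplitz OP 1}, whose Toeplitz block sits in its top $n$ rows, and the bordered Toeplitz matrix in \eqref{btd}, whose Toeplitz block sits in its first $n$ columns. The shift by $-1$ in the Fourier index $n-j-1$ produced by the integration, once absorbed by the recurrence, is precisely what recovers the correct entries $(\psi_0)_{n-j}$ of the bordered Toeplitz matrix; beyond that the argument is purely algebraic.
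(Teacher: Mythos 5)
Your proof is correct and uses essentially the same ingredients as the paper — the determinant representation of $X_{12}$ (equivalently, the cofactor expansion of $Q_n$), the Fourier-coefficient recurrence $(\psi_0)_{m-1}=\phi_m+c(\psi_0)_m$ (which is the partial-fraction identity $\tfrac{1}{z(z-c)}=-\tfrac{1}{cz}+\tfrac{1}{c(z-c)}$ in disguise), and multilinearity of the determinant in its last row — only run in the opposite direction, starting from $D_n[\phi]X_{12}(c;n)$ rather than from $D^B_{n+1}[\phi;\psi_0]$. One minor slip: $A_1$ is $T_{n+1}(\phi)$ itself, not its transpose, but this does not affect $\det A_1=D_{n+1}[\phi]$.
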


\begin{proof}
Note that \begin{equation*}
    \psi_{0,j} = \int_{\T} z^{-j} \psi_0(z) \frac{\dd z}{2 \pi \ic z} = \int_{\T} z^{-j}  \frac{1}{z(z-c)} \phi(z) \frac{\dd z}{2 \pi \ic}, 
\end{equation*}thus
\begin{equation}
  \psi_{0,j}  = -\frac{1}{c}  \int_{\T} z^{-j}  \phi(z) \frac{\dd z}{2 \pi \ic z} + \frac{1}{c}  \int_{\T}   \frac{z^{-j}\phi(z)}{(z-c)}  \frac{\dd z}{2 \pi \ic}  = -\frac{1}{c}\phi_j +
  \frac{1}{c}  \int_{\T}   \frac{z^{-j}\phi(z)}{(z-c)}  \frac{\dd z}{2 \pi \ic}.
\end{equation}
Now, observe that
\begin{equation*}
    D^{B}_{n+1}[\phi,\psi_0] = \det \begin{pmatrix}
\phi_0 & \phi_{-1} & \cdots & \phi_{-n} \\
\phi_1 & \phi_{0} & \cdots & \phi_{-n+1} \\
\vdots & \vdots & \cdots & \vdots \\
\phi_{n-1} & \phi_{n-2} & \cdots & \phi_{-1} \\
\psi_{0,n} & \psi_{0,n-1} & \cdots & \psi_{0,0}
\end{pmatrix} \end{equation*} \begin{equation*}
 =  \frac{1}{c}  \det \begin{pmatrix}
\phi_0 & \phi_{-1} & \cdots & \phi_{-n} \\
\phi_1 & \phi_{0} & \cdots & \phi_{-n+1} \\
\vdots & \vdots & \cdots & \vdots \\
\phi_{n-1} & \phi_{n-2} & \cdots & \phi_{-1} \\
-\phi_n + \int_{\T}   \frac{z^{-n}\phi(z)}{(z-c)}  \frac{\dd z}{2 \pi \ic} & -\phi_{n-1} + \int_{\T}   \frac{z^{-n+1}\phi(z)}{(z-c)}  \frac{\dd z}{2 \pi \ic} & \cdots & -\phi_0 + \int_{\T}   \frac{\phi(z)}{(z-c)}  \frac{\dd z}{2 \pi \ic}
\end{pmatrix} =
\end{equation*}
\begin{equation}\label{AA}
    -\frac{1}{c} \det \begin{pmatrix}
\phi_0 & \phi_{-1} & \cdots & \phi_{-n} \\
\phi_1 & \phi_{0} & \cdots & \phi_{-n+1} \\
\vdots & \vdots & \cdots & \vdots \\
\phi_{n-1} & \phi_{n-2} & \cdots & \phi_{-1} \\
\phi_n  & \phi_{n-1} & \cdots & \phi_0
\end{pmatrix} + \frac{1}{c}\det \begin{pmatrix}
\phi_0 & \phi_{-1} & \cdots & \phi_{-n} \\
\phi_1 & \phi_{0} & \cdots & \phi_{-n+1} \\
\vdots & \vdots & \cdots & \vdots \\
\phi_{n-1} & \phi_{n-2} & \cdots & \phi_{-1} \\
 \int_{\T}   \frac{z^{-n}\phi(z)}{(z-c)}  \frac{\dd z}{2 \pi \ic} &  \int_{\T}   \frac{z^{-n+1}\phi(z)}{(z-c)}  \frac{\dd z}{2 \pi \ic} & \cdots &  \int_{\T}   \frac{\phi(z)}{(z-c)}  \frac{\dd z}{2 \pi \ic}
\end{pmatrix}.
\end{equation}
Now note that, using \eqref{Toeplitz OP 1} and \eqref{ka_n} we have

\begin{equation}
    \ka_n^{-1} \ze^{-n}Q_n(\ze) = \frac{1}{D_n[\phi]} \det \begin{pmatrix}
\phi_0 & \phi_{-1} & \cdots & \phi_{-n} \\
\phi_1 & \phi_{0} & \cdots & \phi_{-n+1} \\
\vdots & \vdots & \cdots & \vdots \\
\phi_{n-1} & \phi_{n-2} & \cdots & \phi_{-1} \\
\ze^{-n} & \ze^{-n+1} & \cdots & 1
\end{pmatrix}.
\end{equation}
Combining this equation with \eqref{Toeplitz-OP-solution} yields
\begin{equation}\label{X12 bordered}
X_{12}(z;n) = \frac{1}{D_n[\phi]} \det \begin{pmatrix}
\phi_0 & \phi_{-1} & \cdots & \phi_{-n} \\
\phi_1 & \phi_{0} & \cdots & \phi_{-n+1} \\
\vdots & \vdots & \cdots & \vdots \\
\phi_{n-1} & \phi_{n-2} & \cdots & \phi_{-1} \\
\int_{\T}  \frac{\ze^{-n} \phi(\ze)}{\ze-z} \frac{\dd \ze}{2\pi \ic} & \int_{\T}  \frac{\ze^{-n+1} \phi(\ze)}{\ze-z} \frac{\dd \ze}{2\pi \ic} & \cdots & \int_{\T}  \frac{\phi(\ze)}{\ze-z} \frac{\dd \ze}{2\pi \ic}
\end{pmatrix}.
\end{equation}
Thus, using \eqref{AA} and \eqref{X12 bordered} we arrive at \eqref{bdToep-RHP-q0}. 
\end{proof}

\begin{corollary}\label{cor 2}
	We have
	\begin{equation}\label{235}
	D^B_{n+1}\left[\phi;\di\left( a+\sum^{m}_{j=1}\frac{b_j z}{z-c_j}\right)\phi \right] =aD_{n+1}[\phi]+ D_n[\phi] \sum^{m}_{j=1}b_j X_{12}(c_j;n),
	\end{equation}
	and for a Szeg{\H o} type $\phi$
	\begin{equation}\label{236}
	D^B_{n+1}\left[\phi;\di\left( a+\sum^{m}_{j=1}\frac{b_j z}{z-c_j}\right)\phi \right] = G[\phi]^{n+1} E[\phi] \left(a+\frac{1}{G[\phi]}\sum^{m}_{j=1  \atop |c_j|<1 }b_j \al(c_j)\right)\left( 1 + O(e^{-\mathfrak{c}n})\right),
	\end{equation}
	as $n \to \infty$, where $\al$ is defined in \eqref{al al},  $G[\phi]$ and $E[\phi]$ are given by  \eqref{G and E}, and  $\mathfrak{c}$ is some positive constant.
\end{corollary}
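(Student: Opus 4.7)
The plan is to derive the finite-$n$ identity (\ref{235}) from Lemma \ref{lem2.2} by combining the linearity of $D^B_{n+1}[\phi;\cdot]$ in its second slot (equation (\ref{linear-combination})) with the elementary partial-fraction decomposition
\begin{equation*}
\frac{b_j z}{z-c_j} \;=\; b_j \;+\; \frac{b_j c_j}{z-c_j}.
\end{equation*}
This rewrites the border symbol as $\psi=\bigl(a+\sum_j b_j\bigr)\phi+\sum_j b_j c_j\cdot\phi/(z-c_j)$, so that $D^B_{n+1}[\phi;\psi]$ reduces, via (\ref{border,phi,phi}) and Lemma \ref{lem2.2}, to an explicit combination of $D_{n+1}[\phi]$, $D_n[\phi]$, and the Riemann-Hilbert data $X_{12}(c_j;n)$.

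Carrying this out, the pure-$\phi$ piece contributes $(a+\sum_j b_j)D_{n+1}[\phi]$ by (\ref{border,phi,phi}), while Lemma \ref{lem2.2} applied at each pole $c_j$ and scaled by $b_j c_j$ yields
\begin{equation*}
b_j c_j\cdot D^B_{n+1}\bigl[\phi;\tfrac{\phi}{z-c_j}\bigr] \;=\; -b_jD_{n+1}[\phi] \;+\; b_jD_n[\phi]\,X_{12}(c_j;n).
\end{equation*}
The ``$b_jD_{n+1}[\phi]$'' contributions cancel pairwise across the two sources, leaving precisely $aD_{n+1}[\phi]+D_n[\phi]\sum_j b_jX_{12}(c_j;n)$, which is (\ref{235}). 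This step is entirely algebraic and requires no new input.

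For the asymptotic statement (\ref{236}) the plan is to insert three ingredients into the just-obtained identity: first, the Szegő--Widom asymptotics $D_n[\phi]=G[\phi]^n E[\phi](1+O(e^{-\mathfrak{c}n}))$ and $D_{n+1}[\phi]=G[\phi]^{n+1}E[\phi](1+O(e^{-\mathfrak{c}n}))$, valid in the analytic (Szegő-type) setting; second, the behaviour $X_{12}(c;n)=\al(c)+O(e^{-\mathfrak{c}n})$ for $|c|<1$, read off the inner parametrix of the small-norm analysis of the $X$-RHP (Appendix \ref{Appendix Y-RHP}); and third, the complementary bound $X_{12}(c;n)=O(e^{-\mathfrak{c}n})$ for $|c|>1$, which ultimately traces back to the orthogonality of $Q_n$ (the first $n$ Laurent coefficients of $X_{12}(z;n)$ at $z=\infty$ vanish, forcing the decay $X_{12}(c;n)=O(c^{-n-1})$ at large $|c|$). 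Factoring out $G[\phi]^{n+1}E[\phi]$ and invoking the normalization $G[\phi]=\al(0)$ from (\ref{WH.norm.allpah.G}), the $|c_j|>1$ poles are absorbed into the error, while each $|c_j|<1$ contributes $b_j\al(c_j)/G[\phi]$, yielding (\ref{236}).

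The only substantial input beyond elementary manipulations is the Riemann-Hilbert asymptotic for $X_{12}(c;n)$ off the unit circle; this is the main obstacle, but it is already furnished by the parametrix construction in Appendix \ref{Appendix Y-RHP}, the same tool that provides the outer estimate (\ref{X11}) used in the parallel Corollary \ref{cor 1}.
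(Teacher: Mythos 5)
Your proposal is correct and follows essentially the same route as the paper: the partial-fraction identity $\frac{z}{z-c_j}=1+\frac{c_j}{z-c_j}$ makes explicit the step the paper leaves implicit when it says \eqref{235} "directly follows" from Lemma \ref{lem2.2}, \eqref{border,phi,phi}, and \eqref{linear-combination}, and the asymptotic step rests on the same estimates $X_{12}(c;n)=\al(c)(1+O(e^{-\mathfrak{c}n}))$ for $|c|<1$ and $X_{12}(c;n)=O(e^{-\mathfrak{c}n})$ for $|c|>1$ coming from the small-norm analysis in Appendix \ref{Appendix Y-RHP}. One small caveat: the parenthetical heuristic that the first $n$ Laurent coefficients vanishing at $z=\infty$ "forces" $X_{12}(c;n)=O(c^{-n-1})$ is not by itself a proof of exponential decay in $n$ for a fixed $|c|>1$ (the structural decay at $z=\infty$ carries no uniformity in $n$), so the rigorous bound really does have to come from the RHP estimate \eqref{237} as the paper uses -- which, to your credit, you also cite as the actual source.
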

\begin{proof}
	\eqref{235} directly follows from \eqref{bdToep-RHP-q0}, \eqref{border,phi,phi}, and \eqref{linear-combination}. For the asymptotic statement, notice that from  \eqref{R_k's are small},  \eqref{X in terms of R exact} and \eqref{R asymp} we have \begin{equation}\label{237}
	X_{12}(c;n) = \begin{cases}
	\al(c) (1+O(e^{-\mathfrak{c}n})), & |c|<1, \\
	R_{1,12}(c;n)\al^{-1}(c)c^{-n}(1+O(e^{-\mathfrak{c}n})), & |c|>1,
	\end{cases}
	\end{equation}
	 as $n \to \infty$,	where $R_{1,12}$ is given by  \eqref{R1}. Now \eqref{236} follows from \eqref{Szego Theorem}, \eqref{G and E}, \eqref{235}, \eqref{237}, and \eqref{R_k's are small}.
\end{proof}
Now, we prove the analogue of Lemma \ref{lem2.2} for $c=0$.

\begin{lemma}\label{analogue of lemma 2.2 for c=0}
We have the following identity\footnote{Throughout the paper we occasionally use $\tilde{f}(z)$, to denote $\di f(z^{-1})$.} \begin{equation}
    D^B_{n}[\phi;\frac{1}{z}\phi] = - D^B_{n+1}[\Tilde{\phi};z],
\end{equation} 
and hence for a Szeg{\H o} type $\phi$
\begin{equation}\label{phi,z^-1phi asymp}
     D^B_{n}[\phi;\frac{1}{z}\phi] = \frac{1}{2\pi \ic} G[\phi]^{n} E[\phi]\left( \int_{\T} \ln(\Tilde{\phi}(\tau))  \dd \tau + O(e^{-\mathfrak{c}n}) \right),
\end{equation}
as $n \to \infty$, where $G[\phi]$ and $E[\phi]$ are given by  \eqref{G and E}, and  $\mathfrak{c}$ is some positive constant.
\end{lemma}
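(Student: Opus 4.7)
\textbf{Proof proposal for Lemma \ref{analogue of lemma 2.2 for c=0}.}

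The plan is to establish the identity $D^B_{n}[\phi;\phi/z]=-D^B_{n+1}[\tilde\phi;z]$ by direct comparison of determinants and then obtain the asymptotic statement by applying Lemma \ref{psi=z} to the symbol $\tilde\phi$. The first identity is purely algebraic (no analysis involved), and the second is just a matter of identifying the Szeg\H{o}--Widom constants of $\tilde\phi$ with those of $\phi$.

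\emph{Step 1: compute Fourier coefficients.} Setting $\psi(z)=\phi(z)/z$, a direct application of \eqref{phi_n} gives $\psi_j=\phi_{j+1}$, so the last column of the matrix representing $D^B_n[\phi;\phi/z]$ is $(\phi_n,\phi_{n-1},\dots,\phi_1)^T$. Similarly, for $\tilde f(z):=f(z^{-1})$, the change of variable $w=z^{-1}$ on $\T$ yields
\begin{equation*}
\tilde\phi_j = \phi_{-j},\qquad j\in\Z.
\end{equation*}
Hence the Toeplitz block of $D^B_{n+1}[\tilde\phi;z]$ has entries $\tilde\phi_{j-i}=\phi_{i-j}$, which is exactly the transpose of the standard Toeplitz pattern for $\phi$.

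\emph{Step 2: reduce the $(n+1)\times(n+1)$ determinant by cofactor expansion.} The border column of $D^B_{n+1}[\tilde\phi;z]$ has entries $(z)_j$, which vanish except for $(z)_1=1$. Since $(z)_1$ sits in the $n$-th row (counting from the top), cofactor expansion along the last column produces a single minor with sign $(-1)^{n+(n+1)}=-1$. The resulting $n\times n$ matrix consists of rows $i\in\{1,\dots,n-1,n+1\}$ and columns $j\in\{1,\dots,n\}$, with entries $\phi_{i-j}$.

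\emph{Step 3: identify the minor with $D^B_n[\phi;\phi/z]$.} Comparing the minor from Step 2 with the matrix underlying $D^B_n[\phi;\phi/z]$ (computed using $\psi_j=\phi_{j+1}$ from Step 1), one sees that one is the transpose of the other: both are $n\times n$ Toeplitz-like matrices whose first $n-1$ columns are the standard ones for $\phi$, with a final row/column vector $(\phi_n,\phi_{n-1},\dots,\phi_1)$. Since transposition preserves determinants, the minor equals $D^B_n[\phi;\phi/z]$, and the sign $-1$ from the cofactor expansion establishes
\begin{equation*}
D^B_n[\phi;\phi/z] = -D^B_{n+1}[\tilde\phi;z].
\end{equation*}

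\emph{Step 4: asymptotic consequence.} For Szeg\H{o}-type $\phi$, the function $\tilde\phi$ is also of Szeg\H{o} type, so Lemma \ref{psi=z} applies with $\phi$ replaced by $\tilde\phi$, giving
\begin{equation*}
D^B_{n+1}[\tilde\phi;z]=D_n[\tilde\phi]\left(-\frac{1}{2\pi\ic}\int_\T\ln(\tilde\phi(\tau))\,\dd\tau+O(e^{-\mathfrak{c}n})\right).
\end{equation*}
The change of variable $w=z^{-1}$ in \eqref{G and E} shows $[\log\tilde\phi]_j=[\log\phi]_{-j}$, whence $G[\tilde\phi]=G[\phi]$ and $E[\tilde\phi]=E[\phi]$; thus $D_n[\tilde\phi]=G[\phi]^n E[\phi](1+O(e^{-\mathfrak{c}n}))$ by the Szeg\H{o}--Widom theorem \eqref{Szego Theorem}. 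Substituting these back yields \eqref{phi,z^-1phi asymp}. The only subtlety in the argument is the careful row/column bookkeeping in Step 3; once the transpose relation is spotted, the rest is immediate.
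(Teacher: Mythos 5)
Your proof is correct and takes essentially the same route as the paper's: write out the $(n+1)\times(n+1)$ bordered determinant for $D^B_{n+1}[\tilde\phi;z]$, substitute $\tilde\phi_j=\phi_{-j}$, expand along the border (which has a single nonzero entry, producing the sign $-1$), transpose the resulting minor to identify it with $D^B_n[\phi;\phi/z]$, and then invoke Lemma \ref{psi=z} together with $G[\tilde\phi]=G[\phi]$, $E[\tilde\phi]=E[\phi]$ for the asymptotics. The only cosmetic difference is that the paper places the border in the last row (following \eqref{boreder-z^k-positive-powers}) and you place it in the last column (following \eqref{btd}) — these are transposes and yield identical determinants — and the paper invokes the exact identity $D_n[\tilde\phi]=D_n[\phi]$ rather than applying Szeg\H{o}--Widom to $\tilde\phi$, which is equivalent.
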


\begin{proof}
Note that
\begin{align*}
	D^B_{n+1}[\Tilde{\phi};z] 
	& = \det \begin{pmatrix}
	\Tilde{\phi}_0 & \Tilde{\phi}_{-1} & \cdots  & \Tilde{\phi}_{-n+2}  & \Tilde{\phi}_{-n+1} &   \Tilde{\phi}_{-n} \\
	\Tilde{\phi}_1 & \Tilde{\phi}_{0} & \cdots  & \Tilde{\phi}_{-n+3} & \Tilde{\phi}_{-n+2} & \Tilde{\phi}_{-n+1} \\
	\vdots & \vdots & \cdots  & \vdots & \vdots & \vdots \\
	\Tilde{\phi}_{n-1} & \Tilde{\phi}_{n-2} & \cdots & \Tilde{\phi}_{1} & \Tilde{\phi}_{0} &  \Tilde{\phi}_{-1} \\
	0 & 0 & \cdots & 0 & 1 & 0 
	\end{pmatrix} \\ 
	& =  \det \begin{pmatrix}
	\phi_0 & \phi_{1} & \cdots  & \phi_{n-2}  & \phi_{n-1} &   \phi_{n} \\
	\phi_{-1} & \phi_{0} & \cdots  & \phi_{n-3} & \phi_{n-2} & \phi_{n-1} \\
	\vdots & \vdots & \cdots  & \vdots & \vdots & \vdots \\
	\phi_{-n+1} & \phi_{-n+2} & \cdots & \phi_{-1} & \phi_{0} &  \phi_{1} \\
	0 & 0 & \cdots & 0 & 1 & 0 
	\end{pmatrix} \\ 
	& =  - \det \begin{pmatrix}
	\phi_0 & \phi_{1} & \cdots  & \phi_{n-2}  &    \phi_{n} \\
	\phi_{-1} & \phi_{0} & \cdots  & \phi_{n-3} &  \phi_{n-1} \\
	\vdots & \vdots & \cdots  & \vdots &  \vdots \\
	\phi_{-n+1} & \phi_{-n+2} & \cdots & \phi_{-1} &   \phi_{1}  
	\end{pmatrix} \\
	& =
	- \det \begin{pmatrix}
	\phi_0 & \phi_{-1} & \cdots  & \phi_{-n+1}  \\ \phi_1 & \phi_{0} & \cdots  & \phi_{-n+2} \\
	\vdots & \vdots & \cdots  & \vdots  \\
	\phi_{n-2} & \phi_{n-3} & \cdots & \phi_{-1} \\
	\phi_{n} & \phi_{n-1} & \cdots & \phi_{1}  
	\end{pmatrix} 
	= -D^B_{n}[\phi;\frac{1}{z}\phi], 
\end{align*}

because the $j$-th Fourier coefficient of $z^{-1}\phi(z)$ is $\phi_{j+1}$. Now \eqref{phi,z^-1phi asymp} immediately follows from \eqref{phi,z asymp1} and the fact that $D_n[\phi]=D_n[\Tilde{\phi}]$.
\end{proof}

\begin{lemma}\label{psi=z^k phii}
	For $\psi = z^k \phi$, $k=0,1,\cdots, n$, we have
	\begin{equation}\label{psi=z^kphi}
	D^B_n[\phi,z^k\phi] = \begin{cases} 
	D_n[\phi], & k=0, \\
	0, & k=1,\cdots,n-1,\\
	(-1)^{n-1}D_n[z\phi], & k=n.
	\end{cases}
	\end{equation}
\end{lemma}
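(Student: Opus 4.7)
The plan is to handle each of the three cases of \eqref{psi=z^kphi} by direct inspection of the columns of the defining matrix in \eqref{btd}. The only ingredient needed is the elementary Fourier-coefficient identity $(z^k\phi)_j=\phi_{j-k}$, so that the border column $(\psi_{n-1},\psi_{n-2},\ldots,\psi_0)^{T}$ in $D^B_n[\phi;z^k\phi]$ equals $(\phi_{n-1-k},\phi_{n-2-k},\ldots,\phi_{-k})^{T}$.

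For $k=0$ this border column coincides with the ``missing'' Toeplitz column and the claim is a restatement of \eqref{border,phi,phi}. For $1\le k\le n-1$, I would observe that the vector $(\phi_{n-1-k},\ldots,\phi_{-k})^{T}$ is identical to the column of the Toeplitz block indexed by $\ell=n-1-k\in\{0,1,\ldots,n-2\}$, whose entries are $(\phi_{\ell-i})_{i=0}^{n-1}$. Two columns of the matrix are therefore equal and the determinant vanishes.

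The case $k=n$ is the only one that requires a genuine manipulation. The border column becomes $(\phi_{-1},\phi_{-2},\ldots,\phi_{-n})^{T}$, which now matches no Toeplitz column. I would move this column to the leftmost position via $n-1$ adjacent column swaps, picking up the sign $(-1)^{n-1}$. A direct inspection then shows that the resulting $n\times n$ matrix has $(j,k)$-entry $\phi_{k-j-1}=(z\phi)_{k-j}$, i.e.\ it is the transpose of the Toeplitz matrix with symbol $z\phi$; since transposition preserves determinants this yields $(-1)^{n-1}D_n[z\phi]$, as claimed.

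The entire argument is elementary linear algebra: no analytic, asymptotic, or Riemann--Hilbert input is required, and the only real point of care---the ``main obstacle'' if one can call it that---is keeping the column and row indices straight after the cyclic shift in the last case.
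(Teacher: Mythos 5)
Your proposal is correct and follows exactly the same reasoning as the paper's proof: the identity $[z^k\phi]_j=\phi_{j-k}$, two identical columns for $1\le k\le n-1$, and $n-1$ adjacent column swaps for $k=n$. The paper is just terser; your version spells out the index bookkeeping that the paper leaves implicit.
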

\begin{proof}
	Note that $[z^k\phi]_j = \phi_{j-k}$ and thus, $D^B_n[\phi,z^k\phi]$ has two identical columns for $k=1,\cdots,n-1$, and \eqref{psi=z^kphi} is obvious for  $k=0$. For $k=n$, \eqref{psi=z^kphi} follows immediately if one moves the border column to the first column by making $n-1$ swaps of adjacent columns.
\end{proof}

Theorem \ref{main thm} is now proven by combining lemmas \ref{psi=z},  \ref{analogue of lemma 2.2 for c=0}, \ref{psi=z^k phii} and the corollaries  \ref{cor 1} and \ref{cor 2} via \eqref{linear-combination}.


\subsection{Ising model next-to-diagonal correlations.} In this section we focus on the specific symbols $\widehat{\phi}$ and $\widehat{\psi}$, respectively given by \eqref{Isingphi} and \eqref{hat psi}, corresponding to the next-to-diagonal correlations in the Ising model via \eqref{BT&NTD}. For a derivation of \eqref{hat psi} from the formulae in \cite{YP} see Section \ref{Simplifying AuYangPerk}. This is clear that in the low temperature regime ($k>1$) the symbols $\widehat{\phi}$ and $\widehat{\psi}$ fit the class of symbols considered in Theorem \ref{main thm}. Indeed, comparing \eqref{hat psi} with \eqref{general psi} we can find the corresponding parameters:
\begin{equation}
\begin{split}
\begin{aligned}
	& m =1,              & a&=b_0=\hat{b}_0=0,        \qquad & b_1 &= \frac{C_v}{S_v}, \\
	&  c_1=c_*\equiv -\frac{S_h}{S_v},         \qquad  & \hat{a}&=\frac{C_h}{S_h} ,          & \hat{b}_1&=-\frac{C_h}{S_h}.
\end{aligned}
\end{split}
\end{equation}
Therefore the constant $F[\widehat{\phi};\widehat{\psi}]$ given by \eqref{ConstantF2} simplifies to 
\begin{equation}\label{FISING}
	F[\widehat{\phi};\widehat{\psi}] = \begin{cases}
	\di \frac{C_v}{S_v} \al(c_*), &  J_v>J_h, \\[11pt]
	\di \frac{C_h}{S_h} \al(c_*), & J_v<J_h.
	\end{cases}
\end{equation}
where we have used \[ |c_*|\lessgtr1 \iff J_h \lessgtr J_v. \]
Now let us compute $\al(c_*)$. We observe that \begin{equation}
\widehat{\phi}(z)=e^{i\pi/2} k^{1/2}(z-k)^{-1/2}(z-k^{-1})^{1/2}z^{-1/2},  
\end{equation}where the branches of the roots all have arguments from $0$ to $2\pi$.  Recalling the expression \eqref{al al} for $\al$, we can compute $\al(c_*)$ by a simple contour integration (deform the integral on $\T$ to the interval $[0,k^{-1}]$, and note that we get a residue term when $-1<c_*<0$ ). We eventually arrive at
\begin{equation}\label{AL(c_*)}
\al(c_*) = \begin{cases} \di \frac{S_v}{C_v}, & J_v>J_h, \\[11pt]
\di \frac{S_h}{C_h}, & J_v<J_h.
\end{cases}
\end{equation}
This can also be seen in a more straightforward way by recalling \eqref{alpha.phi.pm}, which in the Ising case amounts to: 
\begin{equation}
	\al(z) =  \begin{cases}
		\di \frac{1}{\sqrt{1-k^{-1}z}}, & |z|<1, \\[11pt]
		\di \frac{1}{\sqrt{1-k^{-1}z^{-1}}}, & |z|>1.
	\end{cases}
\end{equation}
Combining \eqref{FISING} and \eqref{AL(c_*)} yields
\begin{equation}
	F[\widehat{\phi};\widehat{\psi}] = 1, \qquad J_h \neq J_v.
\end{equation}
This concludes the proof of Theorem \ref{th2} in the anisotropic case by recalling that $G[ \widehat{\phi} \ ]=1$ and $E[ \widehat{\phi} \ ]=(1-k^{-2})^{1/4}$.

\begin{remark} \normalfont
Notice that the next-to-diagonal long range order in the isotropic case $J_h=J_v$ deserves special attention as $|c_*|=1$. It is important to notice that while it seems that the function $\widehat{\psi}(z)$ has a pole at  $z=c_*$ it actually has a removable singularity there. In other words, $\widehat{\psi}(z)$ (as well as $\widehat{\phi}(z)$)
are analytic in a neighborhood of the unit circle $\T$ irrespective of the value of $c_*$. To be precise both functions are analytic in $z$ for $k^{-1}<|z|<k$. However, the splitting of $\widehat{\psi}(z)$ done in the proof of Theorem  \ref{main thm} would re-introduce this pole at $z=c_*$ in both terms and render the proof invalid when $|c_*|=1$. To circumvent this problem we resort to a ``deformation trick''.
For a fixed $0<\ep<\di \frac{k-k^{-1}}{k+k^{-1}}$, let $ \di \frac{k^{-1}}{1-\ep} < \rho < \di \frac{k}{1+\ep}$ and introduce the functions
$$
\widehat{\phi}_{\rho}(z):=\widehat{\phi}(\rho z),\qquad \widehat{\psi}_{\rho}(z):=\widehat{\psi}(\rho z),\qquad 
$$ where the condition on $\rho$ ensures that both functions are analytic in the $\ep$-neighborhood of $\T$: $\{z: 1-\ep<|z|<1+\ep\}$. The $n$-th Fourier coefficient of these new functions differs from $n$-the Fourier coefficient of the original functions by a factor 
$\rho^n$. For this reason, we have that 
$$
D_N[\widehat{\phi}]=D_N[\widehat{\phi}_\rho],\qquad D_N^B[\widehat{\phi};\widehat{\psi}]=D_N^B[\widehat{\phi}_\rho;\widehat{\psi}_\rho].
$$
Indeed, the underlying matrices are related to each other by appropriate multiplication of diagonal matrices.
Thus the derivation of the determinant asymptotics can be based on the pair $(\widehat{\phi}_\rho;\widehat{\psi}_\rho)$ rather than the pair $(\widehat{\phi};\widehat{\psi})$.
These functions are clearly of a similar form as the original ones. The crucial point however is that 
when we consider $\widehat{\psi}_\rho$ and split it into two parts, each part  has a pole at $z=c_*/\rho$ rather than at $z=c_*$. 
Thus Theorem  \ref{main thm}  is applicable to $(\widehat{\phi}_\rho;\widehat{\psi}_\rho)$ whenever $|c_*|\neq \rho$.
As we can choose this $\rho$ within at least a small range, $k^{-1}<\rho<k$, the asymptotic results concerning these functions remains true
also when $|c_*|=1$. Notice that since the determinant $D_N^B[\widehat{\phi}_\rho;\widehat{\psi}_\rho]$ is $\rho$-independent, in particular, its leading order asymptotics given by Theorem \ref{main thm} is also $\rho$-independent. This can also be checked directly by looking at the terms on the right hand side of \eqref{main formula}. To this end, we obviously have from \eqref{G and E} that $G[\widehat{\phi}_\rho] = G[\widehat{\phi}]$ and $E[\widehat{\phi}_\rho] = E[\widehat{\phi}]$. Also the role of $\al$ on the right hand side of \eqref{ConstantF2} is now played by $\al_{\rho}$ which satisfies $\al_{\rho,+}(z)=\al_{\rho,-}(z)\widehat{\phi}_\rho(z) $ and is explicitly given by  $$
\al_{\rho}(z) =  \begin{cases}
\di \frac{1}{\sqrt{1-k^{-1}\rho z}}, & |z|<1, \\[11pt]
\di \frac{1}{\sqrt{1-k^{-1} \rho^{-1} z^{-1}}}, & |z|>1.
\end{cases}
$$  Therefore from \eqref{ConstantF2} it can be directly checked that  $F[\widehat{\phi}_\rho;\widehat{\psi}_\rho]=1$.
\end{remark}
\subsection {Proof of Theorem \ref{th22}}
Based on \eqref{linear-combination}, \eqref{hat psi}, \eqref{228}, and \eqref{235}, \color{black} the bordered Toeplitz determinant representing the Ising correlation function 
$\langle \sigma_{0,0}\sigma_{N-1,N} \rangle$  satisfies the following
relation
\begin{equation}\label{start1}
 \frac{D^{B}_{N}[\widehat{\phi}; \widehat{\psi}]}{D_{N-1}[\widehat{\phi} \ ]} = \di \frac{C_v}{S_v} X_{12}(c_*;N-1)
+  
\begin{cases}
c^{-N+1}_* \di \frac{C_h}{S_h} X_{11}(c_*;N-1), &  |c_*| > 1, \\[11pt]
0, & |c_*| <1.
\end{cases}
\end{equation}
Also, for the Toeplitz determinant $D_{N-1}[\widehat{\phi} \ ]$ we have that
\begin{equation}\label{start2}
\ln D_{N}[\widehat{\phi} \ ] = \ln E[\widehat{\phi} \ ]  -\sum _{n=N}^{\infty}\ln \kappa^{-2}_{n}
=\frac{1}{4}\ln\Bigl(1 - k^{-2}\Bigr) -\sum _{n=N}^{\infty}\ln \kappa^{-2}_{n}  ,
\end{equation}
where 
\begin{equation}\label{start3}
\kappa^{-2}_n = X_{12}(0),
\end{equation}
and $X(z) \equiv X(z; N-1)$ is the solution of the $X$ - Riemann-Hilbert problem, {\bf{RH-X1}} - {\bf{RH-X3}}
generated by the weight $\widehat{\phi}(z)$ and corresponding to $n = N-1$. Equations (\ref{start1}) - (\ref{start3}) show us that in order to calculate the correction to the 
leading term, i.e., $(1-k^{-2})^\frac{1}{4}$, to the  determinant $D^{B}_{N}[\widehat{\phi}; \widehat{\psi}]$
we need the high terms in the  estimation of  the solution $X(z)$ of the Riemann-Hilbert problem. 

The asymptotic analysis of the Riemann-Hilbert problem {\bf{RH-X1}} - {\bf{RH-X3}} is presented
in detail in Section \ref{Appendix Y-RHP} and for its solution $X(z;n)$ we have formula 
(\ref{X in terms of R exact}) where $R(z;n)$ is the solution of the small norm Riemann-Hilbert
problem {\bf RH-R1} - {\bf RH-R3}. We have already used this formula and the estimate 
(\ref{R asymp}) in the proof of Theorem \ref{main thm}. Now, we need more terms in (\ref{R asymp}).
These are given by the second term, $R_2(z;n)$ in the iterative series (\ref{Appendix pure toep R series}).
From (\ref{Recursive R_k Def}), (\ref{JR-I}), and (\ref{R1}) that 

\begin{equation}\label{R2}
R_2(z;n)=  \begin{pmatrix}
-\di \frac{1}{4\pi^2}\int_{\Ga_1}b(\mu)\left[\int_{\Ga_0}a(\tau)\frac{d\tau}{\tau-\mu}\right]\frac{d\mu}{\mu-z}
&0\\ \\
0&-\di \frac{1}{4\pi^2}\int_{\Ga_0}a(\mu)\left[\int_{\Ga_1}b(\tau)\frac{d\tau}{\tau-\mu}\right]\frac{d\mu}{\mu-z}
 \end{pmatrix}, \quad
z\in \C \setminus \Sigma_R,
\end{equation}
where we have introduced the notations,
$$
a(z;n) = -z^n \widehat{\phi}^{-1}(z) \alpha^2(z), \quad \mbox{and}\quad
b(z;n) = z^{-n} \widehat{\phi}^{-1}(z) \alpha^{-2}(z),
$$ 
and we also remind that
$$
\al(z) =\exp \left[ \frac{1}{2 \pi i } \int_{\T} \frac{\ln(\widehat{\phi}(\tau))}{\tau-z}d\tau \right]=  \begin{cases}
\di \frac{1}{\sqrt{1-k^{-1}z}}, & |z|<1, \\[11pt]
\di \frac{1}{\sqrt{1-k^{-1}z^{-1}}}, & |z|>1.
\end{cases}
$$
Formula (\ref{R2}) in conjunction with the equations  \eqref{R1} and (\ref{X in terms of R exact}) yield the following
estimations for the relevant matrix entries of $X(z)$ in the regions $\Om_0$, $\Om_1$, $\Om_2$, and $\Om_{\infty}$, (see Figure \ref{S_contour}):
\begin{equation}\label{X1201}
X_{12}(z;n) = \alpha(z)\left( 1 -\di \frac{1}{4\pi^2}\int_{\Ga_1}b(\mu;n)\left[\int_{\Ga_0}a(\tau;n)
\frac{d\tau}{\tau-\mu}\right]\frac{d\mu}{\mu-z}+ O(\rho^{-4n})\right), \quad z \in \Omega_0\cup\Omega_1,
\end{equation}

\begin{equation}\label{X122infty}
X_{12}(z;n) = z^{-n}\alpha^{-1}(z)\left(\frac{1}{2\pi i} \int_{\Ga_0}a(\tau;n) 
\frac{d\tau}{\tau-z}+  O\left(\frac{\rho^{-3n}}{1+|z|}\right)\right), \quad z \in \Omega_2\cup\Omega_{\infty},
\end{equation}

\begin{equation}\label{X110}
X_{11}(z;n) = -\alpha^{-1}(z)\left(\frac{1}{2\pi i} \int_{\Ga_0}a(\tau;n) 
\frac{d\tau}{\tau-z}+  O(\rho^{-3n})\right), \quad z \in \Omega_0,
\end{equation}

$$
X_{11}(z;n) = z^n\alpha(z)\widehat{\phi}^{-1}(z)\left( 1 -\di \frac{1}{4\pi^2}\int_{\Ga_1}b(\mu;n)\left[\int_{\Ga_0}a(\tau;n)
\frac{d\tau}{\tau-\mu}\right]\frac{d\mu}{\mu-z}+ O(\rho^{-4n})\right)
$$
\begin{equation}\label{X111}
-\alpha^{-1}(z)\left(\frac{1}{2\pi i} \int_{\Ga_0}a(\tau;n) 
\frac{d\tau}{\tau-z}+  O(\rho^{-3n})\right),
 \quad z \in \Omega_1,
\end{equation}

$$
X_{11}(z;n) = z^n\alpha(z)\left( 1 -\di \frac{1}{4\pi^2}\int_{\Ga_1}b(\mu;n)\left[\int_{\Ga_0}a(\tau;n)
\frac{d\tau}{\tau-\mu}\right]\frac{d\mu}{\mu-z}+  O(\rho^{-4n})\right)
$$
\begin{equation}\label{X112}
-\alpha^{-1}(z)\widehat{\phi}^{-1}(z)\left(\frac{1}{2\pi i} \int_{\Ga_0}a(\tau;n) 
\frac{d\tau}{\tau-z}+  O(\rho^{-3n})\right),
 \quad z \in \Omega_2,
\end{equation}

\begin{equation}\label{X11infty}
X_{11}(z;n) = z^n\alpha(z)\left( 1 -\di \frac{1}{4\pi^2}\int_{\Ga_1}b(\mu;n)\left[\int_{\Ga_0}a(\tau;n)
\frac{d\tau}{\tau-\mu}\right]\frac{d\mu}{\mu-z}+  O\left(\frac{\rho^{-4n}}{1+|z|}\right)\right), \quad z \in \Omega_\infty.
\end{equation}

In the above equations, it is assumed that the circle  $\Gamma_1$ is centered at $z =0$ and has radius $\rho$,
the circle $\Gamma_0$ is centered at $z =0$ and  has  radius $\rho^{-1}$, and the inequality 
\begin{equation}\label{rhok}
k^{\frac{2}{3}} < \rho <  k
\end{equation}
holds. Note that the last equation implies that
\begin{equation}\label{rhok2}
\rho^{-3n} = k^{-(2 +\delta)n},
\end{equation}
where
$$
\delta = \frac{3\ln\rho- 2\ln k}{\ln k} > 0.
$$

Let us first  apply these formulae to the evolution of the higher term in the asymptotics
of the pure Toeplitz determinant $D_{N}[\widehat{\phi} \ ]$. To this end, taking into account (\ref{start2}),
(\ref{start3})  we need first to estimate the value $X_{12}(0)$. According to (\ref{X1201}), we have that
\begin{equation}\label{X1200}
X_{12}(0;n) = \alpha(0)\left( 1 -\di \frac{1}{4\pi^2}\int_{\Ga_1}b(\mu;n)\left[\int_{\Ga_0}a(\tau;n)
\frac{d\tau}{\tau-\mu}\right]\frac{d\mu}{\mu}+ O(\rho^{-4n})\right).
\end{equation}
\begin{proposition}\label{intas} The following estimates take place,
\begin{equation}\label{intas1}
\begin{split}
\frac{1}{2\pi i} \int_{\Ga_0}a(\tau;n) 
\frac{d\tau}{\tau-z} =
-\frac{1}{\sqrt{\pi}}\alpha^{2}(k^{-1})\frac{\sqrt{k-k^{-1}}}{k}
\frac{1}{k^{-1} -z}k^{-n -1/2}n^{-1/2} \left(1+O\left(\frac{1}{n}\right)\right), \quad n \to \infty, 
\end{split}
\end{equation}
for all $|z| > \rho^{-1}$, and
\begin{equation}\label{intas2}
\di \frac{1}{4\pi^2}\int_{\Ga_1}b(\mu;n)\left[\int_{\Ga_0}a(\tau;n)
\frac{d\tau}{\tau-\mu}\right]\frac{d\mu}{\mu-z} =
-\frac{1}{2\pi}
\frac{1}{k^{-1} - k}\frac{1}{k-z}
k^{-2n}n^{-2}\left( 1+ O\left(\frac{1}{n}\right)\right), \quad n \to \infty,
\end{equation}
for all $|z| < \rho$.
\end{proposition}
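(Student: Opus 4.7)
The plan is to prove both estimates by deforming each contour onto the branch cut of the integrand closest to the unit circle and then applying Watson's lemma at the branch-point endpoint. A preliminary simplification is useful: from $\alpha^{2}(z)=(1-k^{-1}z)^{-1}$ for $|z|<1$ and $\alpha^{-2}(z)=1-k^{-1}z^{-1}$ for $|z|>1$, combined with the formula for $\widehat\phi$, the integrands take the compact forms
\[
a(\tau;n)=-\tau^{n}\sqrt{\frac{k\tau}{(k-\tau)(\tau-k^{-1})}},\qquad b(\mu;n)=\mu^{-n}\sqrt{\frac{k^{-1}(k-\mu)(\mu-k^{-1})}{\mu}},
\]
each analytic on $\C\setminus([0,k^{-1}]\cup[k,\infty))$; on these cuts $a$ has an integrable $1/\sqrt{\cdot}\,$ singularity at $\tau=k^{-1}$, while $b$ vanishes like $\sqrt{\cdot}\,$ at $\mu=k$.

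For (\ref{intas1}), the condition $|z|>\rho^{-1}$ means $z$ lies outside $\Gamma_{0}$, so $\Gamma_{0}$ may be collapsed inward onto a keyhole contour around $[0,k^{-1}]$ without crossing $z$; the small loops at the endpoints vanish by the integrable singularity structure. Computing the jump of $a(\tau;n)$ across the cut reduces the left-hand side to $-\frac{1}{\pi}\int_{0}^{k^{-1}}\frac{\tau^{n}}{\sqrt{k^{-1}-\tau}}\sqrt{\frac{k\tau}{k-\tau}}\,\frac{d\tau}{\tau-z}$. Writing $\tau=k^{-1}-s$ gives $\tau^{n}\sim k^{-n}e^{-kns}$, and Watson's lemma, via $\int_{0}^{\infty}e^{-kns}s^{-1/2}\,ds=\sqrt{\pi/(kn)}$, yields the stated leading order; the subleading term is the linear Taylor coefficient of the smooth factor at $s=0$. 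A short algebraic check using $\alpha^{2}(k^{-1})=1/(1-k^{-2})=k/(k-k^{-1})$ recasts the prefactor in the form displayed in (\ref{intas1}).

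For (\ref{intas2}), one substitutes (\ref{intas1}) with $z$ replaced by $\mu\in\Gamma_{1}$; the substitution is uniform because $|\mu|=\rho$ stays bounded away from $k^{-1}$, so that the $1/(k^{-1}-\mu)$ factor in the leading term is bounded on $\Gamma_{1}$. This reduces the double integral to
\[
-\frac{i}{2\pi\sqrt{\pi}}\,\alpha^{2}(k^{-1})\frac{\sqrt{k-k^{-1}}}{k}\,k^{-n-1/2}n^{-1/2}\int_{\Gamma_{1}}\frac{b(\mu;n)\,d\mu}{(k^{-1}-\mu)(\mu-z)}
\]
up to a $(1+O(1/n))$ factor. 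The outer integral is handled by deforming $\Gamma_{1}$ outward: the contour at infinity contributes nothing since the integrand is $O(|\mu|^{-n-3/2})$; the pole at $\mu=z$ stays inside the deformed region (as $|z|<\rho<k$), so no residue appears; what remains is a keyhole around $[k,\infty)$. Watson's lemma at $\mu=k$, now with a $\sqrt{s}$ factor so the relevant Gamma integral is $\int_{0}^{\infty}e^{-ns/k}\sqrt{s}\,ds=\tfrac{1}{2}\sqrt{\pi}(k/n)^{3/2}$, gives the outer integral as $-i\sqrt{\pi}\,\sqrt{k-k^{-1}}\,k^{-n+1/2}/[(k^{-1}-k)(k-z)n^{3/2}]$ times $(1+O(1/n))$. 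Multiplying through and simplifying with $\alpha^{2}(k^{-1})(k-k^{-1})/k=1$ produces the claimed coefficient.

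The principal obstacle is the bookkeeping: tracking consistent branch choices on both cuts, orientations of the keyholes, signs of the factors of $i$ in the jumps, and the $\sqrt{\pi}$ factors from the Watson integrals, so that everything collapses to the clean constants on the right-hand sides. The $O(1/n)$ relative errors are standard Watson-lemma remainders, together with exponentially smaller contributions from the portions of the cuts away from the endpoint.
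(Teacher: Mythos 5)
Your proof is correct and follows essentially the same strategy as the paper: collapse $\Gamma_0$ (respectively $\Gamma_1$) onto the branch cut $[0,k^{-1}]$ (respectively $[k,\infty)$), then extract the leading Laplace-type asymptotics at the branch-point endpoint, substituting the single-integral estimate uniformly on $\Gamma_1$ to treat the double integral. The only cosmetic difference is that the paper evaluates the endpoint contribution for the $a$-integral through the exact Beta-function identity $\int_0^{k^{-1}}\tau^n(k^{-1}-\tau)^{l-1/2}\,d\tau = k^{-n-l-1/2}B(n+1,l+1/2)$ followed by Stirling, whereas you pass directly to Watson's lemma via $\tau^n \approx k^{-n}e^{-kns}$; these are equivalent.
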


\begin{proof}
Consider first the single, $a$ - integral. It can be deformed to the integral over the 
segment $[0, k^{-1}]$ so that we would have,
\begin{equation}\label{131313}
	\frac{1}{2\pi i} \int_{\Ga_0}a(\tau;n) 
	\frac{d\tau}{\tau-z} = -\frac{1}{\pi } \int_{0}^{k^{-1}}\tau^n(k^{-1} - \tau)^{-1/2}\Phi(\tau)d\tau,
\end{equation}
where
$$
\Phi(\tau) = \sqrt{\tau k^{-1}}\sqrt{k-\tau}\frac{\alpha^{2}(\tau)}{\tau - z}
$$
is holomorphic at  $\tau = k^{-1}$. Let
\begin{equation}\label{141414}
	\Phi(\tau) = \sum_{l=0}^{\infty}d_l(\tau-k^{-1})^l, \quad d_0 = \Phi(k^{-1}) = 
	\frac{1}{k}\sqrt{k-k^{-1}}\frac{\alpha^{2}(k^{-1})}{k^{-1} - z},
\end{equation}
be the Taylor series of $\Phi(\tau)$ at $\tau = k^{-1}$. Then, according to the standard Watson Lemma type arguments, we arrive at the asymptotic  formula,
\begin{equation}\label{hN1}
\frac{1}{2\pi i} \int_{\Ga_0}a(\tau;n) 
\frac{d\tau}{\tau-z} \sim -\frac{1}{\pi} \sum_{l=0}^{\infty}d_l\int_{0}^{k^{-1}}\tau^n(k^{-1}-\tau)^{l-1/2}d\tau,
\end{equation}
and this asymptotic is uniform in any compact subset of $\{z: |z| > \rho^{-1}\}$ and, in particular, for $z \in \Gamma_1$.
For the integrals in the right hand side of (\ref{hN1}), we have,
\begin{equation}\label{151515}
	\begin{split}
	\int_{0}^{k^{-1}}\tau^n(k^{-1}-\tau)^{l-1/2}d\tau  & = k^{-n-l-1/2}\int_{0}^{1}t^n(1-t)^{l-1/2}dt = k^{-n-l-1/2}B(n+1, l+1/2) \\ & = k^{-n-l-1/2}\frac{\Gamma(n+1)\Gamma(l+1/2)}{\Gamma(n+l+3/2)} = k^{-n-l-1/2}\Gamma(l+1/2)n^{-l-1/2}\left(1+O\left(\frac{1}{n}\right)\right),
	\end{split}
\end{equation}
as $n \to \infty$, and hence
\begin{equation}\label{hN2}
\frac{1}{2\pi i} \int_{\Ga_0}a(\tau;n) 
\frac{d\tau}{\tau-z} = 
-\frac{1}{\sqrt{\pi}}\alpha^{2}(k^{-1})\frac{\sqrt{k-k^{-1}}}{k}
\frac{1}{k^{-1} -z}k^{-n -1/2}n^{-1/2}\left(1+O\left(\frac{1}{n}\right)\right),
\end{equation}
as $n \to \infty$, uniformly in any compact subset of $\{z: |z| > \rho^{-1}\}$. This is the estimate (\ref{intas1}).

Consider now the double integral (\ref{intas2}). In view of (\ref{hN2}) we have at once that{\footnote{ We are also
taking into account that $\int_{\Gamma_1}b(\mu)f(\mu) d\mu = O(k^{-n})$ for any bounded $f(\mu)$ analytic in the annulus, $\rho -\epsilon
<|\mu| < k+\epsilon$.}} 
\begin{equation}\label{121212}\begin{split}
	\di \frac{1}{4\pi^2}\int_{\Ga_1}b(\mu;n)\left[\int_{\Ga_0}a(\tau;n)
\frac{d\tau}{\tau-\mu}\right]\frac{d\mu}{\mu-z} & = -\frac{i}{2\pi^{3/2}}\alpha^{2}(k^{-1})\frac{\sqrt{k-k^{-1}}}{k}k^{-n -1/2}n^{-1/2}
 \\ & \times \left[\int_{\Gamma_1}b(\mu;n)\frac{d\mu}{(k^{-1} - \mu)(\mu - z)} + O\left(\frac{k^{-n}}{n}\right)\right], \quad n \to \infty.
\end{split}
\end{equation}
Applying to the $b$ -integral in the right hand side of the last equation the same arguments as we 
used for derivation of the estimate (\ref{hN2}), we  obtain that
 \begin{equation}\label{hN3}
 \int_{\Gamma_1}b(\mu;n)\frac{d\mu}{(k^{-1} - \mu)(\mu - z)} = -i\sqrt{\pi}\alpha^{-2}(k)\frac{1}{(k-z)\sqrt{k-k^{-1}}}
\frac{1}{k^{-1} -k}k^{-n +3/2}n^{-3/2}\left(1+O\left(\frac{1}{n}\right)\right),
\end{equation}
as $
n \to \infty
$, uniformly in any compact subset of $\{z: |z| <  \rho\}$. This in turns yields the estimate, 
$$
\di \frac{1}{4\pi^2}\int_{\Ga_1}b(\mu;n)\left[\int_{\Ga_0}a(\tau;n)
\frac{d\tau}{\tau-\mu}\right]\frac{d\mu}{\mu-z} = -\frac{1}{2\pi}
\frac{\alpha^2(k^{-1})\alpha^{-2}(k)}{k^{-1} - k}\frac{1}{k-z}
k^{-2n}n^{-2}\left( 1+ O\left(\frac{1}{n}\right)\right), \quad n \to \infty,
$$
which, taking into account that 
$$
\alpha^{-1} (z)\alpha(z^{-1}) =1,
$$
implies (\ref{intas2})
\end{proof} 

Using (\ref{X1200}), (\ref{intas2}) and taking into account that $\alpha(0) =1$ we conclude  that
\begin{equation}\label{kappaas1}
\kappa^{-2} = X_{12}(0;n)
= 1 +\frac{1}{2\pi}
\frac{1}{1 - k^2}
k^{-2n}n^{-2}\left( 1+ O\left(\frac{1}{n}\right)\right), \quad n \to \infty,
\end{equation}
and, also,
\begin{equation}\label{kappaas2}
\ln\kappa^{-2} _n = \frac{1}{2\pi}
\frac{1}{1 - k^2}
k^{-2n}n^{-2}\left( 1+ O\left(\frac{1}{n}\right)\right), \quad n \to \infty.
\end{equation}
In view of  (\ref{start2}), we need now the asymptotics of the sum 
$$
\sum_{n=N}^{\infty}\frac{k^{-2n}}{n^p}, \quad p = 2, 3.
$$
This can be easily done by the summation by parts.  Indeed, put
$$
S_n = \sum_{l=n}^{\infty}k^{-2l} \equiv k^{-2n}\frac{1}{1-k^{-2}}.
$$
Then we have,
$$
\sum_{n=N}^{\infty}\frac{k^{-2n}}{n^p}= \sum_{n=N}^{\infty}(S_{n} -S_{n+1})\frac{1}{n^p}
= \sum_{n=N}^{\infty}S_{n}\frac{1}{n^p} - \sum_{n=N}^{\infty}S_{n+1}\frac{1}{n^p}
$$
$$
= S_{N}\frac{1}{N^p} + \sum_{n=N+1}^{\infty}S_{n}\frac{1}{n^p} - \sum_{n=N}^{\infty}S_{n+1}\frac{1}{n^p}
= S_N\frac{1}{N^p} + \sum_{n=N}^{\infty}S_{n+1}\left(\frac{1}{(n+1)^p} - \frac{1}{n^p}\right)
$$
\begin{equation}\label{sumas1}
= \frac{1}{1-k^{-2}}N^{-p}k^{-2N} + O\Bigl(N^{-p-1} k^{-2N}\Bigr).
\end{equation}
Combining (\ref{sumas1}) with  (\ref{kappaas2}) and (\ref{start2}) we arrive at the final formula
for the asymptotics of the Toeplitz determinant $D_N[\widehat{\phi} \ ]$ with the explicit second term,
\begin{equation}\label{DNasRH}
D_N[\widehat{\phi} \ ] = (1-k^{-2})^{1/4}\left(1 + \frac{1}{2\pi(1-k^{-2})^2}N^{-2}k^{-2N -2}\Bigl(1 +O(N^{-1})\Bigr)\right),
\quad N \to \infty.
\end{equation}
In the next section this formula will be also proven by the operator technique. 

Let us move now to the bordered Topelitz determinant $D^{B}_{N}[\widehat{\phi}; \widehat{\psi}]$ and consider first the case when
$$
|c_*| <1.
$$
In this case, according to (\ref{start1}), we will only need the asymptotics for $X_{12}(z)$ for $z \in \Omega_0\cup\Omega_1$,
i.e. formula (\ref{X1201}) with $z = c_*$ and $n = N-1$. Moreover, we can use the estimate (\ref{intas2}) for the double integral involved
and get at once that
$$
X_{12}(c_*;N-1) = \alpha(c_*)\left(1 +  \frac{1}{2\pi}
\frac{1}{(k^{-1} - k)(k-c_*)}
k^{-2N+2}N^{-2}\left( 1+ O\left(\frac{1}{N}\right)\right)\right), \quad N \to \infty.
$$
The last equation in conjunction with (\ref{start1}) yields the formula
\begin{equation}\label{DBN100}
D^{B}_{N}[\widehat{\phi}; \widehat{\psi}] =\frac{C_v}{S_v} \alpha(c_*) (1-k^{-2})^{1/4} \left(1 + \frac{1}{2\pi}\frac{1}{1-k^{-2}}\left(-\frac{k}{k-c_*} + \frac{1}{1-k^{-2}}\right)
N^{-2}k^{-2N}\Bigl(1+O(N^{-1})\Bigr)\right).
\end{equation}
This formula, taking into account the definitions (\ref{k-parameter}) and (\ref{hat psi}) of the parameters $k$ and 
$c_*$ and the equation (\ref{AL(c_*)}) for $\alpha(c_*)$ (the case we consider now is $J_v > J_h$) we can rewrite
(\ref{DBN100}) as
\begin{align}\label{magnetization200}
	D_N^B[\widehat{\phi};\widehat{\psi}] &=
	(1-k^{-2})^{1/4}\left(1+\frac{1}{2\pi(1-k^{-2})}\Big(\frac{1}{C_v^2}+\frac{1}{k^2-1}\Big)N^{-2}k^{-2N}\Big(1+O(N^{-1})\Big)
	\right).
\end{align}
This proves Theorem  \ref{th22} for the case $|c_*| < 1$.   The proof of Theorem \ref{th22} for the case $|c_*|>1$ (i.e. if $c_* \in \Om_2$ or $c_*\in \Om_{\infty}$, see Figure \ref{S_contour}) follows from almost identical considerations employed in the case $|c_*|<1$. Let us first discuss the case when $c_* \in \Om_{2}$. In this case we have
\begin{equation}\label{ContrX12Om2}
	\frac{C_v}{S_v} X_{12}(c_*;N-1) = \frac{C_v}{S_v} c^{-N+1}_* \al^{-1}(c_*)\left( \frac{1}{2\pi i} \int_{\Ga_0}a(\tau;N-1)
	\frac{d\tau}{\tau-c_*}+  O\left(\rho^{-3N}\right)\right),
\end{equation}
and
\begin{equation}\label{ContrX11Om2}
\begin{split}
		c^{-N+1}_* \di \frac{C_h}{S_h} X_{11}(c_*;N-1) & = \di \frac{C_h}{S_h} \al(c_*) \left( 1 -\di \frac{1}{4\pi^2}\int_{\Ga_1}b(\mu;N-1)\left[\int_{\Ga_0}a(\tau;N-1)
	\frac{d\tau}{\tau-\mu}\right]\frac{d\mu}{\mu-c_*}+  O(\rho^{-4N})\right) \\
	& -c^{-N+1}_* \di \frac{C_h}{S_h}\alpha^{-1}(c_*)\widehat{\phi}^{-1}(c_*)\left(\frac{1}{2\pi i} \int_{\Ga_0}a(\tau;N-1) 
	\frac{d\tau}{\tau-c_*}+  O(\rho^{-3N})\right).
\end{split}
\end{equation}
These are the terms needed to compute $D^{B}_{N}[\widehat{\phi}; \widehat{\psi}]$ in view of \eqref{start1}. Notice that the contribution from \eqref{ContrX12Om2} cancels the contribution from the second term on the right hand side of \eqref{ContrX11Om2}, as one can simply check that
\begin{equation}
	\frac{C_h}{S_h} \widehat{\phi}^{-1}(c_*) = \frac{C_v}{S_v}.
\end{equation}
Now, from \eqref{AL(c_*)}, \eqref{start1}, \eqref{intas2}, \eqref{DNasRH},  \eqref{ContrX11Om2} we can easily show that \eqref{magnetization2} holds when $c_* \in \Om_2$. Finally we discuss the case $c_* \in \Om_{\infty}$. Equation \eqref{ContrX12Om2} still holds in this case (see \eqref{X122infty}). Using \eqref{AL(c_*)} and \eqref{intas1} we can write
\begin{equation}\label{ContrX12OmInfty}
\frac{C_v}{S_v} X_{12}(c_*;N-1) = \frac{C_vC_h}{S_vS_h} c^{-N+1}_* \left( -\frac{1}{\sqrt{\pi}}\alpha^{2}(k^{-1})\frac{\sqrt{k-k^{-1}}}{k}
\frac{1}{k^{-1} -c_*}k^{-N +1/2}N^{-1/2} \right) \left(1+O\left(\frac{1}{N}\right)\right),
\end{equation}
as $ N \to \infty$. Using \eqref{X11infty} and \eqref{AL(c_*)} we have
\begin{equation}\label{ContrX11OmInfty}
c^{-N+1}_* \di \frac{C_h}{S_h} X_{11}(c_*;N-1) =  1 -\di \frac{1}{4\pi^2}\int_{\Ga_1}b(\mu;N-1)\left[\int_{\Ga_0}a(\tau;N-1)
\frac{d\tau}{\tau-\mu}\right]\frac{d\mu}{\mu-c_*}+  O(\rho^{-4N})
\end{equation}
The asymptotics of the integral on the right hand side can be computed from \eqref{121212}, however, we can not use \eqref{hN3} directly, because when $z \in \Om_{\infty}$ we also get a residue term. To that end by
 a straightforward calculation when $z \in \Om_{\infty}$ we find 
\begin{equation}\label{bintegral-c*inOmInfty}
	\int_{\Gamma_1}b(\mu;n)\frac{d\mu}{(k^{-1} - \mu)(\mu - c_*)} = 2\pi \ic z^{-n-1} \frac{\sqrt{k-z}}{\sqrt{k-z^{-1}}} - 2\ic \frac{\sqrt{k}}{z} \int^{k^{-1}}_0 \frac{t^n \sqrt{k^{-1}-t}}{(t-z^{-1})\sqrt{k-t}\sqrt{t}}\dd t .
\end{equation}
We notice that the residue term (combined with the prefactors coming from \eqref{121212}) exactly cancels out the contribution from \eqref{ContrX12OmInfty}. Finally, the asymptotic expansion of the second term in \eqref{bintegral-c*inOmInfty} can be written as a series involving Beta functions similar to what is shown in equations \eqref{131313} through \eqref{151515}. Finding the asymptotics of the first term in that series using Stirling's formula and then combining this with \eqref{start1}, \eqref{121212}, \eqref{DNasRH} and \eqref{ContrX11OmInfty} finishes the proof of Theorem \ref{th22} for $c_* \in \Om_{\infty}$. 

\section{Asymptotics of Bordered Toeplitz determinants: Operator Theory approach}\label{Section OT}

\subsection{General results}\label{Section OT 1}

For $\phi\in L^1(\T)$ we define the $N\times N$ Toeplitz matrix, 
\begin{equation}\label{toemat}
T_{N}(\phi) := \begin{pmatrix}
\phi_0& \phi_{-1} & \cdots &  \phi_{-N+1}\\
\phi_{1}& \phi_0 & \ddots  & \vdots  \\
\vdots & \ddots & \ddots & \phi_{-1}\\
\phi_{N-1} &  \cdots  & \phi_{1}&\phi_{0}
\end{pmatrix},
\end{equation}
where, as before, $\phi_n$ are the Fourier coefficients of $\phi$.
Occasionally, the notation
$$
[\phi]_n=\frac{1}{2\pi}\int_{0}^{2\pi} \phi(e^{i\theta}) e^{-in\theta}\, d\theta
$$
will be used as well. Clearly, $\det \, T_N(\phi)=D_N[\phi]$.

In what follows, $e_0$ stands for the column vector $(1,0,0,\dots,0)^T$ in $\C^N$, 
and $e_0^T$ signals its transpose, the row vector $(1,0,0,\dots,0)$.

\begin{proposition}\label{p.Bordered.Cramer}
	Let  $\phi,\psi\in L^1(\T)$, and assume that $T_N(\phi)$ is invertible.
	Then
	\begin{equation}\label{Bordered.Cramer}
	D_N^B[\phi;\psi]=
	D_N[\phi]\cdot e_0^T T_N^{-1}(\phi) T_N(\psi) e_0.
	\end{equation}
\end{proposition}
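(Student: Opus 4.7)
The plan is to recognize the right-hand side as a Cramer's rule computation. First, I would rewrite the scalar $e_0^T T_N^{-1}(\phi) T_N(\psi) e_0$ as the $0$-th entry of the vector $x := T_N^{-1}(\phi)\, b$, where $b := T_N(\psi) e_0 = (\psi_0, \psi_1, \dots, \psi_{N-1})^T$ is the leftmost column of $T_N(\psi)$. Applying Cramer's rule to the linear system $T_N(\phi)\, x = b$ (which has a unique solution since $T_N(\phi)$ is assumed invertible) yields
$$
e_0^T T_N^{-1}(\phi) T_N(\psi) e_0 = \frac{\det M}{\det T_N(\phi)},
$$
where $M$ is obtained from $T_N(\phi)$ by replacing its $0$-th column with $b$; explicitly, $M_{j,0} = \psi_j$ and $M_{j,k} = \phi_{j-k}$ for $1 \le k \le N-1$.

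Next I would check that $\det M$ coincides with $D_N^B[\phi;\psi]$. Writing $B$ for the matrix in \eqref{btd}, its entries are $B_{j,k} = \phi_{k-j}$ for $0 \le k \le N-2$ and $B_{j,N-1} = \psi_{N-1-j}$. If $J$ denotes the $N\times N$ reversal permutation matrix (sending $e_j \mapsto e_{N-1-j}$), then a direct index check shows $(JMJ)_{j,k} = M_{N-1-j, N-1-k}$, which reproduces exactly the entries of $B$. Since $J^2 = I$ and therefore $(\det J)^2 = 1$, one concludes $\det B = \det(JMJ) = \det M$. Combining this with the Cramer step gives \eqref{Bordered.Cramer}.

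This argument is essentially routine, so no real obstacle is expected. The only care needed is the bookkeeping to reconcile the two slightly different indexing conventions at play: $T_N(\phi)$ has $(j,k)$-entry $\phi_{j-k}$, whereas the bordered matrix in \eqref{btd} has $\phi_{k-j}$ on its leftmost $N-1$ columns. Conjugation by the reversal $J$ is precisely what absorbs this discrepancy, and since conjugation by a permutation has trivial effect on determinants, the identification $\det M = D_N^B[\phi;\psi]$ is immediate.
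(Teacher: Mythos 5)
Your proposal is correct and follows essentially the same route as the paper: recognize $T_N(\psi)e_0 = (\psi_0,\dots,\psi_{N-1})^T$, apply Cramer's rule to identify $e_0^T T_N^{-1}(\phi)T_N(\psi)e_0$ with $\det M/\det T_N(\phi)$ where $M$ is $T_N(\phi)$ with its first column replaced, and then match $\det M$ to the bordered determinant by conjugating with the ``flip'' (reversal) matrix. The only difference is that you spell out the index bookkeeping for $JMJ$ explicitly, whereas the paper leaves the row-and-column rearrangement as an asserted step.
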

\begin{proof}
Via a rearrangement of rows and columns in \eqref{btd} (or, more formally, by multiplying with a ``flip matrix'' from both sides) we see that
	\begin{equation}\label{btd-2}
	D^{B}_{N}[\phi; \psi] = \det \begin{pmatrix}
	\psi_0& \phi_{-1} & \cdots & \phi_{-N-1} \\
	\psi_{1}& \phi_0 & \ddots  & \vdots \\
	\vdots & \vdots & \ddots &  \phi_{-1} \\
	\psi_{N-1} & \phi_{N-2} & \cdots  &\phi_{0}
	\end{pmatrix}.
	\end{equation}
	By observing that $T_N(\psi)e_0$  is the column vector
	$(\psi_0,\dots,\psi_{N-1})^T$, the statement follows from Cramer's rule.
\end{proof}

In view of computing the asymptotics, the above formula reduces bordered Toeplitz determinants to  
usual Toeplitz determinants $D_N[\phi]$ and the scalar quantity 
\begin{eqnarray}\label{F.N}
F_N[\phi;\psi]:=e_0^T T_N^{-1}(\phi) T_N(\psi) e_0.
\end{eqnarray}
Under appropriate assumptions, the asymptotics of $D_N[\phi]$ is given by the Szeg\H{o}-Widom limit theorem, whereas the asymptotics of the scalar quantity follows from the asymptotics of the inverse of the Toeplitz matrix $T_N(\phi)$. 

We proceed to give some operator-theoretic background; for details we refer to \cite{BS} or \cite{BS1}.
For $\phi\in L^1(\T)$ we define the infinite Toeplitz and Hankel matrices
$$
T(\phi)\cong (\phi_{j-k}),\qquad H(\phi)\cong (\phi_{j+k+1}),\qquad 0\le j,k<\infty.
$$
In case $\phi\in L^\infty(\T)$ these represent bounded linear operators acting on $\ell^2(\ZZ)$.
Note that Toeplitz and Hankel operator satisfy the identity
\begin{eqnarray}\label{Tab}
T(\phi\psi) &=& T(\phi)T(\psi)+H(\phi)H(\tilde{\psi}),
\end{eqnarray}
where $\phi,\psi\in L^\infty(\T)$ and $\tilde{\psi}(z):=\psi(z^{-1})$.
In particular,
\begin{equation}\label{Tabc}
T(\psi_-\phi\psi_+)=T(\psi_-)T(\phi)T(\psi_+)
\end{equation}
if $\psi_\pm\in H^\infty_\pm$, where
$$
H^\infty_\pm:=
\left\{ \, f\in L^\infty(\T)\,:\, f_n=0 \mbox{ for all } \mp n > 0\right\}
$$
are the usual Hardy spaces. We will identify functions in $H^\infty_\pm(\T)$ with their analytic extensions onto
the inside or outside, resp., of $\T$.

Among the various versions of Wiener-Hopf factorization we are going to use the following one.
We say that a function $\phi\in C(\T)$ has a {\em continuous canonical Wiener-Hopf factorization} if it can be written as
$$
\phi(z)=\phi_-(z)\phi_+(z),\qquad|z|=1,
$$
where
$$
\phi_\pm,\phi_\pm^{-1}\in  H^\infty_\pm\cap C(\T).
$$
A sufficient criterium for the existence of such a factorization is that 
$\phi$ belongs to the H\"older class $C^{\eps}(\T)$ for some 
$\eps>0$, is nonvanishing on $\T$ and has winding number zero  (see, e.g., \cite[Sect.~10.2]{BS}).
In this case, the Wiener-Hopf factorization is given by
\begin{equation}\label{eqn.WHfact}
\begin{aligned}
\phi_+(z) &= \exp\left(\sum_{n=0}^\infty z^{n}[\log \phi]_{n}\right),
\quad &
\phi_-(z) & = \exp\left(\sum_{n=1}^\infty z^{-n}[\log \phi]_{-n}\right).
\end{aligned}
\end{equation}
On the other hand, a necessary condition is that $\phi$ is continuous and nonvaninshing on $\T$ and has winding number zero.

For $\phi\in C(\T)$ the Toeplitz operator $T(\phi)$  on $\ell^2(\ZZ)$ is invertible if and only if $\phi$ does not vanish on $\T$ and has winding number zero.
If $\phi$ admits a  continuous canonical Wiener-Hopf factorization then the inverse of $T(\phi)$ is given by
\begin{equation}\label{Tinv}
T^{-1}(\phi) = T(\phi_+^{-1})T(\phi_-^{-1}).
\end{equation}
as can be seen from \eqref{Tabc}.

Finally, let us introduce the finite section projection
$$
P_N:(f_0,f_1,\dots)^T\mapsto (f_0,f_1,\dots, f_{N-1},0,0,\dots)^T
$$
acting on $\ell^2(\ZZ)$. As usual, we will identify $\C^N$ with the image of $P_N$. Correspondingly
we have $P_N T(\phi) P_N= T_N(\phi)$. The complementary projection is
$Q_N=I-P_N$, and we remark that 
$Q_N=V_{N}V_{-N}$ where $V_{N}=T(z^N)$ and $V_{-N}=T(z^{-N})$ are forward and backward shift operators.
Despite having used the notation $e_0$ for finite vectors already, we will also
use it to refer to the infinite column vector,
$$
e_0=(1,0,0,\dots)^T\in\ell^2(\ZZ).
$$
Correspondingly, $e_0^T$ stands for infinite row vector $(1,0,0,\dots)$ or the respective linear functional on $\ell^2(\ZZ)$.

\begin{proposition}\label{p.asymp-1}
	Let $\psi\in L^2(\T)$, 
	and assume that $\phi\in C(\T)$  does not vanish on $\T$ and has winding number zero.
	Then
	$$
	F_N[\phi;\psi] \to F[\phi;\psi]\qquad \mbox{as}\quad N\to\infty,
	$$
	where the constant 
	\begin{equation}\label{eqn.Fconst}
	F[\phi;\psi]:= e_0^T T^{-1}(\phi)T(\psi)e_0.
	\end{equation}
	If, in addition,  $\phi$ has a continuous canonical Wiener-Hopf factorization $\phi=\phi_-\phi_+$, then 
	$$
	F[\phi;\psi]=\frac{[\phi_-^{-1}\psi]_0}{[\phi_+]_0}.
	$$
\end{proposition}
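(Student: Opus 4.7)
The plan is to handle the two assertions separately, both based on the operator-theoretic setup already recalled in the excerpt.

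For the convergence $F_N[\phi;\psi] \to F[\phi;\psi]$, the backbone is the classical Gohberg--Feldman stability theorem for the finite section method: under the hypotheses $\phi \in C(\T)$, nonvanishing with zero winding number, $T(\phi)$ is invertible on $\ell^2(\ZZ)$, $T_N(\phi)$ is invertible for all sufficiently large $N$, and the truncated inverses $T_N^{-1}(\phi) P_N$ converge strongly to $T^{-1}(\phi)$. Next I would observe that the first column of $T_N(\psi)$ is simply the truncation of the first column of the infinite matrix $T(\psi)$, so $T_N(\psi) e_0 = P_N T(\psi) e_0$; this vector is just $(\psi_0,\psi_1,\psi_2,\dots)^T$, which belongs to $\ell^2$ by Parseval since $\psi \in L^2(\T)$. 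Combining these facts,
\begin{equation*}
F_N[\phi;\psi] = \langle T_N^{-1}(\phi) P_N T(\psi) e_0,\, e_0\rangle \longrightarrow \langle T^{-1}(\phi) T(\psi) e_0,\, e_0\rangle = F[\phi;\psi],
\end{equation*}
since pairing with $e_0$ is a bounded linear functional and strong convergence is preserved under it.

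For the explicit formula, the strategy is to use the Wiener--Hopf factorization of $T^{-1}(\phi)$ in \eqref{Tinv} and collapse one of the factors via \eqref{Tab}. Because $\phi_-^{-1}\in H^\infty_-$, the Hankel operator $H(\phi_-^{-1})$ vanishes (its entries $[\phi_-^{-1}]_{j+k+1}$ involve only strictly positive indices, whereas $\phi_-^{-1}$ has no positive Fourier coefficients). Identity \eqref{Tab} therefore reduces to $T(\phi_-^{-1})T(\psi) = T(\phi_-^{-1}\psi)$, giving
\begin{equation*}
T^{-1}(\phi)T(\psi) e_0 = T(\phi_+^{-1})\, T(\phi_-^{-1}\psi)\, e_0.
\end{equation*}
Pairing with $e_0^T$ produces $\sum_{k\ge 0} [\phi_+^{-1}]_{-k}\,[\phi_-^{-1}\psi]_k$; since $\phi_+^{-1}\in H^\infty_+$ the only surviving term is $k=0$, and $[\phi_+^{-1}]_0 = \phi_+^{-1}(0) = 1/\phi_+(0) = 1/[\phi_+]_0$, which yields $F[\phi;\psi] = [\phi_-^{-1}\psi]_0/[\phi_+]_0$.

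The only delicate point is that $T(\psi)$ need not be bounded when $\psi$ is merely in $L^2(\T)$, so the identity $T(\phi_-^{-1})T(\psi) = T(\phi_-^{-1}\psi)$ must be read on a suitable dense domain. Applied to $e_0$ this is harmless, because $T(\psi)e_0\in\ell^2$ and $T(\phi_-^{-1})$ is bounded (owing to $\phi_-^{-1}\in L^\infty(\T)$); if one wishes to be formal, one can approximate $\psi$ in $L^2$ by its Fej\'er partial sums, for which everything is bounded, verify the identity, and then pass to the limit. Otherwise the argument is a clean combination of Gohberg--Feldman stability with the algebra of Toeplitz and Hankel operators.
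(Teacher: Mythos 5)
Your proof is correct and follows essentially the same route as the paper: the first limit via stability of the finite section method (strong convergence $T_N^{-1}(\phi)P_N \to T^{-1}(\phi)$) paired with $T_N(\psi)e_0 = P_NT(\psi)e_0 \in \ell^2$, and the constant via the factorization $T^{-1}(\phi)T(\psi) = T(\phi_+^{-1})T(\phi_-^{-1}\psi)$ with the triangular structure of $T(\phi_+^{-1})$. The only cosmetic difference is that you derive $T(\phi_-^{-1})T(\psi) = T(\phi_-^{-1}\psi)$ from \eqref{Tab} with a vanishing Hankel factor rather than quoting \eqref{Tabc} directly, and you add a careful (and worthwhile) remark about the unboundedness of $T(\psi)$ for $\psi\in L^2$ that the paper leaves implicit.
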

\begin{proof}
	Under above the assumptions on $\phi$, it is well-known (see, e.g., \cite[Sect.~1.5 and 2.3]{BS1})
	that the Toeplitz operator $T(\phi)$ is invertible, that the matrix $T_N(\phi)$ is invertible for sufficiently large $N$, 
	and that $T_N^{-1}(\phi)$ converges to $T^{-1}(\phi)$ strongly on $\ell^2(\ZZ)$ as $N\to\infty$.
	Here we use the afore-mentioned identification of $\C^N$ with a subspace of $\ell^2(\ZZ)$ and the corresponding identification of an $N\times N$ matrix with an operator on $\ell^2(\ZZ)$. Obviously, $T_N(\psi)e_0 \to T(\psi)e_0$ in the norm of $\ell^2(\ZZ)$.
	Therefore, again in the norm
	$$
	T_N^{-1}(\phi) T_N(\psi)e_0 \to T^{-1}(\phi) T(\psi) e_0\qquad \mbox{as} \quad N\to\infty.
	$$
	This proves the first assertion. As to the evaluation of the constant we use \eqref{Tabc} and \eqref{Tinv} 
	to see that 
	$$
	T\iv(\phi) T(\psi)=T(\phi_+^{-1})T(\phi_-^{-1})T(\psi)=T(\phi_+^{-1})T(\phi_-^{-1}\psi)
	$$
	and
	$$
	F[\phi;\psi] = e_0^T T^{-1}(\phi)T(\psi)e_0 = e_0^T T(\phi_+^{-1})T(\phi_-^{-1}\psi)e_0,
	$$
	where in the last expression we interpret  the operators on $\ell^2(\ZZ)$ as infinite matrices.
	Since $T(\phi_+^{-1})$ is lower triangular 
	it follows that $F[\phi;\psi] =[\phi_+^{-1}]_0\cdot [\phi_-^{-1}\psi]_0$.  Observe that $[\phi_+^{-1}]_0=\phi_+^{-1}(0)=1/[\phi_+]_0$.
\end{proof}

The previous results combined with the Szeg\H{o}-Widom limit theorem (\ref{Szego Theorem}) establishes the first order (or leading order) asymptotics for bordered Toeplitz determinants. 
In fact, if we assume that $\phi$ 
is in H\"older class $C^{1+\epsilon}$, does not vanish on the unit circle $\T$, and has winding number zero, then
\begin{equation}
D_N^B[\phi;\psi]=G[\phi]^N E[\phi]\left(F[\phi;\psi]+o(1)\right),\qquad N\to\infty,
\end{equation}
where 
$$
G[\phi]=\exp([\log\phi]_0),\quad 
E[\phi]=\det\, T(\phi)T(\phi^{-1})=\exp\left(\sum_{n\ge 1}n[\log\phi]_n[\log\phi]_{-n}\right).
$$
Thus, what has been stated in Remark \ref{rem.1.3} regarding \eqref{main2} is proved, which is Theorem \ref{thm 1.2} except for the claim
that the error term is decaying exponentially.

Let us emphasize at this point that it can happen that $F[\phi;\psi]$ is zero. 
In this case, the subleading terms in the asymptotics might be of interest as well. 
Later in this section will take up this question.

Let $\phi$ be a function with a continuous canonical Wiener-Hopf factorization $\phi=\phi_-\phi_+$. Each function 
$\psi\in L^2(\T)$ has a unique representation of the form
\begin{equation}\label{RH.dec}
\psi=\phi p_++p_-
\quad \mbox{ with }p_+\in H^2(\T),\quad p_-\in H^2_-(\T),
\end{equation}
where
\begin{align*}
H^2(\T) &=\Big\{f\in L^2(\T)\,:\, f_n=0 \mbox{ for all } n<0\Big\},
\\
H^2_-(\T) &=\Big\{f\in L^2(\T)\,:\, f_n=0 \mbox{ for all } n\ge0\Big\}
\end{align*}
are the corresponding Hardy spaces. Indeed, \eqref{RH.dec} is equivalent to 
\begin{equation}\label{RH1.dec}
\phi_-^{-1}\psi = \phi_+ p_++\phi_-^{-1}p_-,
\end{equation}
from which it can be seen that the terms $p_+$ and $p_-$ are uniquely given by
$$
p_+ =\phi_+^{-1}P[\phi_-^{-1}\psi],\qquad 
p_-= \phi_-(I-P)[\phi_-^{-1}\psi].
$$
Here $P$ is the Riesz projection (i.e., the orthogonal projection on $L^2(\T)$ with range equal to $H^2(\T)$).
We remark that if we consider the Toeplitz operator $T(\phi)$ on $H^2(\T)$ (rather than on $\ell^2(\ZZ)$), then
\begin{align}\label{p+Tiv}
p_+ = T^{-1}(\phi) P[\psi].
\end{align}

\begin{theorem}\label{thm3.3}
Let $\phi\in C(\T)$ 
have a continuous canonical 
Wiener-Hopf factorization $\phi=\phi_-\phi_+$. Assume that  $\psi=\phi p_++p_-$ with $p_+\in H^2(\T)$ and $p_-\in H^2_-(\T)$. Then 
\begin{equation}\label{f.F}
	F[\phi;\psi]=[p_+]_0.
\end{equation}
In particular, 
\begin{align}\label{3.15}
F[\phi; \left( a_{0} + a_1z + \frac{b_0}{z}+\sum_{j=1}^m \frac{b_j z}{z-c_j}\right)\phi] 
&= 
a_{0}+ b_0[\log \phi]_{1} + \sum_{|c_j|<1} b_j\frac{\phi_+(c_j)}{\phi_+(0)},
\\
\label{3.16}
F[\phi; \left(a_0+a_1z+ \frac{b_0}{z} + \sum_{j=1}^m \frac{b_j }{z-c_j}\right)] 
&= 
\frac{a_0}{\phi_+(0)\phi_-(\infty)} - a_1\frac{[\log \phi]_{-1}}{\phi_{+}(0)\phi_-(\infty)}   -\sum_{|c_j|>1} \frac{b_j}{c_j\phi_+(0)\phi_-(c_j)}.
\end{align}
\end{theorem}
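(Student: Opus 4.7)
The plan is to reduce everything to Proposition~\ref{p.asymp-1}, which gives the closed form
$F[\phi;\psi]=[\phi_-^{-1}\psi]_0/[\phi_+]_0$, and then do three Fourier-coefficient computations: one abstract decomposition argument to obtain \eqref{f.F}, and two explicit calculations based on the Laurent expansions of the rational pieces on $\T$ to obtain \eqref{3.15} and \eqref{3.16}.

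For \eqref{f.F}, I would start from $\psi=\phi p_++p_-$ and multiply through by $\phi_-^{-1}$ to get $\phi_-^{-1}\psi=\phi_+ p_++\phi_-^{-1}p_-$. Since $\phi_+\in H^\infty_+$ and $p_+\in H^2(\T)$, the product $\phi_+p_+$ lies in $H^2(\T)$ and its zeroth Fourier coefficient equals $[\phi_+]_0[p_+]_0$. On the other hand, $\phi_-^{-1}\in H^\infty_-$ has only nonpositive modes while $p_-\in H^2_-(\T)$ has only strictly negative modes, so a direct convolution count shows $[\phi_-^{-1}p_-]_0=0$. Combining these yields $[\phi_-^{-1}\psi]_0=[\phi_+]_0[p_+]_0$, and dividing by $[\phi_+]_0$ gives $F[\phi;\psi]=[p_+]_0$.

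For \eqref{3.15} I would use that $\phi_-^{-1}\psi=q_1\phi_+$ and compute $[q_1\phi_+]_0$ term by term, exploiting $\phi_+\in H^\infty_+$. The term $a_0$ contributes $a_0[\phi_+]_0$; the term $a_1 z$ contributes $a_1[\phi_+]_{-1}=0$; the term $b_0/z$ contributes $b_0[\phi_+]_1$, and the identity $[\phi_+]_1=[\phi_+]_0[\log\phi]_1$ comes from the Taylor series of $\phi_+$ obtained by exponentiating \eqref{eqn.WHfact}. For the rational summands, I would use the Laurent expansion on $\T$:
\begin{equation}
\frac{z}{z-c_j}=\sum_{n\ge 0}c_j^n z^{-n}\quad (|c_j|<1),\qquad
\frac{z}{z-c_j}=-\sum_{n\ge 1}c_j^{-n}z^n\quad (|c_j|>1).
\end{equation}
In the first case $[\frac{z}{z-c_j}\phi_+]_0=\sum_{n\ge 0}c_j^n[\phi_+]_n=\phi_+(c_j)$; in the second case it is $0$ since $\phi_+$ has no negative modes. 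Summing and dividing by $[\phi_+]_0$ produces \eqref{3.15}.

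For \eqref{3.16}, the symbol is already $\psi=q_2$, so I compute $[\phi_-^{-1}q_2]_0$ using that $\phi_-^{-1}\in H^\infty_-$, namely has only nonpositive modes, with $[\phi_-^{-1}]_0=\phi_-^{-1}(\infty)=1/\phi_-(\infty)$ and $[\phi_-^{-1}]_{-1}=-[\log\phi]_{-1}/\phi_-(\infty)$ from exponentiating \eqref{eqn.WHfact}. Then the $a_0$ term contributes $a_0/\phi_-(\infty)$, the $a_1z$ term contributes $a_1[\phi_-^{-1}]_{-1}=-a_1[\log\phi]_{-1}/\phi_-(\infty)$, and the $b_0/z$ term contributes $b_0[\phi_-^{-1}]_1=0$. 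For the rational summands, I again use
\begin{equation}
\frac{1}{z-c_j}=\sum_{n\ge 1}c_j^{n-1}z^{-n}\quad (|c_j|<1),\qquad
\frac{1}{z-c_j}=-\sum_{n\ge 0}c_j^{-n-1}z^n\quad (|c_j|>1).
\end{equation}
In the first case the inner product with $\phi_-^{-1}$ produces modes $[\phi_-^{-1}]_n$ for $n\ge 1$, which vanish. In the second case I obtain $-\sum_{n\ge 0}c_j^{-n-1}[\phi_-^{-1}]_{-n}$, and this series is exactly $-c_j^{-1}\phi_-^{-1}(c_j)=-1/(c_j\phi_-(c_j))$ by recognising the expansion of $\phi_-^{-1}$ at $z=c_j$ (valid since $|c_j|>1$ lies in the domain of analyticity of $\phi_-^{-1}$). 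Dividing by $[\phi_+]_0=\phi_+(0)$ gives \eqref{3.16}.

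There is no real obstacle beyond bookkeeping; the only subtle step is the recognition in the last computation that $\sum_{n\ge 0}c_j^{-n}[\phi_-^{-1}]_{-n}=\phi_-^{-1}(c_j)$, which requires $|c_j|>1$ so that the series converges to the analytic extension. All formulas are automatically invariant under the scaling ambiguity $(\phi_+,\phi_-)\mapsto(c\phi_+,c^{-1}\phi_-)$ of the Wiener-Hopf factorization, since only the products $\phi_+(0)\phi_-(\infty)$ and $\phi_+(0)\phi_-(c_j)$ appear in \eqref{3.16}.
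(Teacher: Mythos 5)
Your proof is correct. It differs mildly from the paper's route: the paper evaluates $F[\phi;\psi]$ by first identifying, for each elementary building block of $\psi$, the explicit $H^2$ function $p_+$ in the decomposition $\psi=\phi p_++p_-$ and then reading off $[p_+]_0$; you instead bypass $p_+$ for \eqref{3.15}--\eqref{3.16} and go straight to Proposition~\ref{p.asymp-1}'s closed form $F[\phi;\psi]=[\phi_-^{-1}\psi]_0/[\phi_+]_0$, computing the zeroth Fourier coefficient by expanding the rational factors in Laurent series on $\T$ and pairing modes against the nonnegative-mode expansion of $\phi_+$ (resp.\ the nonpositive-mode expansion of $\phi_-^{-1}$). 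The two methods are essentially dual: identifying $p_+$ is a split of $\phi_-^{-1}\psi$ into $\phi_+ p_+ \in H^2$ plus something in $H^2_-$, while your calculation effects the same split term by term at the level of Fourier modes. Your route is a bit more streamlined because it never has to write down the full $p_+$ function, only its relevant coefficient; the paper's route has the side benefit of producing explicit $p_+$'s (e.g.\ $p_+(z)=\frac{z-\phi_+^{-1}(z)\phi_+(c)c}{z-c}$), which are reused in the higher-order analysis of Theorem~\ref{thm.3.4}. Your closing observation about invariance under the rescaling $(\phi_+,\phi_-)\mapsto(\gamma\phi_+,\gamma^{-1}\phi_-)$ is a good consistency check that the paper does not make explicit.
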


\begin{proof}
To prove \eqref{f.F} we note that identity \eqref{RH1.dec} implies that 
	$$
	[ \phi_-^{-1}\psi]_0 = [\phi_+ p_+]_0=[\phi_+]_0[ p_+]_0,
	$$
and thus $F[\phi;\psi]=[p_+]_0$.
We remark that this can also be obtained from \eqref{p+Tiv}.

For the evaluations of $F[\phi;\psi]$ for concrete $\psi$ we  compute the corresponding 
function $p_+(z)$ and then obtain $[p_+]_0=p_+(0)$.
This function can be obtained most conveniently  by writing down the decomposition \eqref{RH1.dec}, 
 $\phi_-^{-1}\psi=\phi_+p_++\phi_-^{-1}p_-$, explicitly.

We start with considering the cases related to \eqref{3.15}.
For $\psi=\frac{z}{z-c}\phi$ with $|c|>1$, we have $p_-=0$, i.e.,
$$
\phi_-^{-1}\psi=
\frac{\phi_+(z)z}{z-c} = \phi_+ p_+,\quad p_+(z)=\frac{z}{z-c},\quad [p_+]_0=p_+(0)=0.
$$
The same conclusion is obtained in the case $\psi=z\phi$. Hence the corresponding terms $a_1$ and $b_j$ (whenever $|c_j|>1$) do not occur on the right hand side of \eqref{3.15}.

For $\psi=\frac{z}{z-c}\phi$ with $|c|<1$, the decomposition is
$$
\phi_-^{-1}\psi=
\frac{\phi_+(z)z}{z-c} = \frac{\phi_+(z)z-\phi_+(c)c}{z-c}+\frac{\phi_+(c)c}{z-c}.
$$
Hence
$$
p_+(z)=\frac{z-\phi_+^{-1}(z)\phi_+(c)c}{z-c},\qquad p_+(0)=\frac{\phi_+(c)}{\phi_+(0)}.
$$
The case $c=0$ covers the case $\psi=\phi$ (related with the coefficient $a_0$) as well.

Lastly, if $\psi=\phi/z$ we observe that 
$$
\phi_-^{-1} \psi =\frac{\phi_+(z)}{z}=\frac{\phi_+(z)-\phi_+(0)}{z}+\frac{\phi_+(0)}{z},
$$
whence 
$$
p_+(z)=\phi_+^{-1}(z)\frac{\phi_+(z)-\phi_+(0)}{z},\qquad [p_+]_0=p_+(0)=\frac{\phi_+'(0)}{\phi_+(0)}=(\log \phi_+(z))'|_{z=0}
=[\log \phi]_1.
$$
Here recall the definition the Wiener-Hopf factor in \eqref{eqn.WHfact}.

Now let us turn to the cases related to \eqref{3.16}.
For $\psi=\frac{1}{z-c}$ with $|c|<1$ or $\psi=\frac{1}{z}$, we will have $p_+=0$ and $p_-=\psi$.
Hence the terms $b_0$ and $b_j$ (whenever $|c_j|<1$) do not occur on the right hand side of \eqref{3.16}.

For $\psi=\frac{1}{z-c}$ with $|c|>1$, the decomposition is 
$$
\phi_-^{-1}\psi=\frac{\phi_-^{-1}(z)}{z-c}=\frac{\phi_-^{-1}(c)}{z-c}+\frac{\phi_-^{-1}(z)-\phi_-^{-1}(c)}{z-c}.
$$
Thus
$$
p_+(z)=\frac{\phi_+^{-1}(z)\phi_-^{-1}(c)}{z-c},\quad\mbox{ and }\quad 
p_+(0)=-\frac{1}{c\phi_+(0)\phi_-(c)}.
$$
The case $\psi=1$ is treated in the same way. Finally, for $\psi=z$ we decompose
$$
\phi_-^{-1}\psi=\eta_+'(0) +(\phi_-^{-1}(z)-\eta_+'(0)\frac{1}{z})z, \quad  \mbox{ with }\quad \eta_+(z)=\phi_-^{-1}(z^{-1}). 
$$
Hence $p_+(z)=\phi_+^{-1}(z) \eta_+'(0)$ and $p_+(0)=\phi_+^{-1}(0)\eta_+'(0)$, where 
$$
\frac{\eta_+'(0)}{\eta_+(0)}=(\log \eta_+(z))'|_{z=0}=-[\log \phi]_{-1}, \quad \mbox{ and }
\quad \eta_+(0)=\phi_-^{-1}(\infty),$$ using 
 \eqref{eqn.WHfact}.
This complete the proof.
\end{proof}

The previous theorem combined with the Szeg\H{o}-Widom Theorem
implies Theorems \ref{main thm}, up to the claim that the error terms are exponentially decaying.
In fact, what has been claimed in Remark \ref{rem.1.3} regarding \eqref{main formula} is proved. 
The identification of the constant $F[\phi;\psi]$ as given in \eqref{3.15} and \eqref{3.16} coincides with the
expression for \eqref{ConstantF2} by taking the relations  \eqref{alpha.phi.pm}  and \eqref{WH.norm.allpah.G}
into account.

\subsection{Higher order asymptotics}

The following ``exact formula'' is the key to the higher order asymptotics of  $F_N[\phi;\psi]$.

\begin{proposition}\label{prop.KN}
	Let $\phi\in C^{1/2+\eps}(\T)$ be a nonvanishing function on the unit circle with winding number zero.
	Assume that $\phi= \phi_-\phi_+$ is its Wiener-Hopf factorization. Let
	$$
	\lambda_N(z)=z^{-N}\lambda(z),\qquad \lambda(z) =\frac{\phi_-(z)}{\phi_+(z)} ,
	$$
	and put $K_N=H(\lambda_N)H(\tilde{\lambda}_N^{-1})$. Then
	\begin{itemize}
		\item[(i)]
		$D_N[\phi]=G[\phi]^NE[\phi]\det(I-K_N)$.
		\item[(ii)]
		$T_N(\phi)$ is invertible if and only if $I-K_N$ is invertible.
		\item[(iii)]
		In this case,
		\begin{equation}\label{TNinv}
		T_N^{-1}(\phi)  =
		T_N(\phi_+^{-1})P_N\left(I- T(\lambda_N^{-1})(I-K_N)^{-1}T(\lambda_N)\right)P_N T_N(\phi_-^{-1}).
		\end{equation}
	\end{itemize}
\end{proposition}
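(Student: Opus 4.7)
All three parts of Proposition \ref{prop.KN} will follow from a single exact factorization of the form $T_N(\phi) = T_N(\phi_-)\, A_N\, T_N(\phi_+)$, where $A_N$ is a concrete operator on $P_N\ell^2(\ZZ)$ whose determinant equals $E[\phi]\det(I-K_N)$ and whose inverse is the bracketed expression in \eqref{TNinv}. The starting point is that, because $\phi=\phi_-\phi_+$ is a continuous canonical Wiener-Hopf factorization, the Hankel operators $H(\phi_\pm^{\pm1})$ and $H(\tilde\phi_\pm^{\pm1})$ all vanish (each factor lies in the appropriate Hardy space), so by \eqref{Tab} one has $T(\phi)=T(\phi_-)T(\phi_+)$ and $T(\phi_\pm)T(\phi_\pm^{-1})=T(\phi_\pm^{-1})T(\phi_\pm)=I$. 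The finite sections $T_N(\phi_-)$ and $T_N(\phi_+)$ are upper and lower triangular respectively, with constant diagonals $\phi_-(\infty)$ and $\phi_+(0)$; they are invertible with $T_N(\phi_\pm)^{-1}=T_N(\phi_\pm^{-1})$, and $\det T_N(\phi_-)T_N(\phi_+)=G[\phi]^N$.

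The factorization itself is obtained by inserting $I=P_N+Q_N$ between the factors of $T(\phi)=T(\phi_-)T(\phi_+)$ and sandwiching between $P_N$, giving
\[
T_N(\phi) = T_N(\phi_-)T_N(\phi_+) + P_N T(\phi_-) Q_N T(\phi_+) P_N.
\]
Pulling out $T_N(\phi_-)$ on the left and $T_N(\phi_+)$ on the right rewrites this as $T_N(\phi)=T_N(\phi_-)A_N T_N(\phi_+)$ with $A_N = I + T_N(\phi_-^{-1}) P_N T(\phi_-) Q_N T(\phi_+) P_N T_N(\phi_+^{-1})$. To bring $A_N$ into the Hankel form dictated by the statement, I would use $Q_N=V_N V_{-N}$ with $V_{\pm N}=T(z^{\pm N})$ and the Hardy-space intertwining identities $T(\phi_-)V_N=T(z^N\phi_-)$ and $V_{-N}T(\phi_+)=T(z^{-N}\phi_+)$, so that the $z^{\pm N}$ factors get absorbed into $\lambda_N=z^{-N}\phi_-/\phi_+$. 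The standard identification of the off-diagonal block $P_N T(f) Q_N$ with a Hankel operator (after the appropriate reflections) then produces the Hankel product $K_N=H(\lambda_N)H(\tilde\lambda_N^{-1})$ inside $A_N$. A Sylvester step $\det(I-UV)=\det(I-VU)$ combined with the identity $T(\lambda_N)T(\lambda_N^{-1})=I-K_N$ yields $\det A_N=E[\phi]\det(I-K_N)$, and part (i) follows from $\det T_N(\phi)=G[\phi]^N\det A_N$.

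Parts (ii) and (iii) are then immediate consequences of this factorization. Since $T_N(\phi_\pm)$ are invertible, $T_N(\phi)$ is invertible if and only if $A_N$ is, and because $A_N$ depends on $(I-K_N)^{-1}$ in the manner dictated by \eqref{TNinv}, this is in turn equivalent to invertibility of $I-K_N$, proving (ii). For (iii), inverting the factorization gives $T_N(\phi)^{-1}=T_N(\phi_+^{-1})A_N^{-1}T_N(\phi_-^{-1})$, and a Sherman-Morrison/Woodbury-type computation for $A_N^{-1}$ produces exactly the bracketed expression in \eqref{TNinv}. The main obstacle is the algebraic bookkeeping in the second paragraph: tracking the projections $P_N, Q_N$ through the Hankel/Toeplitz identities so that $A_N$ assumes the correct Hankel form and, crucially, produces the constant $E[\phi]$ rather than $1$ in its determinant. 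This is essentially the Borodin-Okounkov-Case-Geronimo reformulation of the strong Szeg\H{o} theorem, and the precise placement of the shift operators $V_{\pm N}$ and the reflections turning Toeplitz off-diagonal blocks into Hankel operators is what makes the calculation delicate.
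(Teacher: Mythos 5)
The paper does not actually prove Proposition \ref{prop.KN}: it simply cites \cite[Sect.~10.40]{BS} for the Borodin--Okounkov--Case--Geronimo identity (i), notes (ii) as a consequence, and cites formulas (10.27)/(10.47) of \cite{BS} for (iii). Your attempt to reconstruct a proof from scratch via the factorization $T_N(\phi)=T_N(\phi_-)A_N T_N(\phi_+)$, with $A_N=I+T_N(\phi_-^{-1})P_N T(\phi_-)Q_N T(\phi_+)P_N T_N(\phi_+^{-1})$, starts in the right place and is indeed akin to the derivations in \cite{BS}. The elementary facts you state --- $T(\phi)=T(\phi_-)T(\phi_+)$, triangularity, $T_N(\phi_\pm)^{-1}=T_N(\phi_\pm^{-1})$, $\det T_N(\phi_-)\det T_N(\phi_+)=G[\phi]^N$, and $T(\lambda_N)T(\lambda_N^{-1})=I-K_N$ --- are all correct.

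However, there is a genuine gap at the very step that carries the content of the proposition. You assert that ``a Sylvester step $\det(I-UV)=\det(I-VU)$ combined with the identity $T(\lambda_N)T(\lambda_N^{-1})=I-K_N$ yields $\det A_N=E[\phi]\det(I-K_N)$,'' but this cannot work as stated: $A_N$ is a finite $N\times N$ matrix and the Sylvester commutation relates its determinant to the Fredholm determinant of a \emph{single} trace-class perturbation of the identity, giving an equality of the form $\det A_N=\det(I+VU)$, \emph{without} a free multiplicative constant. If the manipulation genuinely produced $VU=-K_N$, you would have $\det A_N=\det(I-K_N)$, which is wrong --- the Widom constant $E[\phi]=\det T(\phi)T(\phi^{-1})$ is missing. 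That constant does not fall out of the algebra; it has to be extracted by a more delicate argument (e.g., the Jacobi/Schur-complement identity applied to $T(\phi)T(\phi^{-1})$, or the Helton--Howe--Pincus identity together with trace-class estimates for $H(\lambda)H(\tilde\lambda^{-1})$, which is exactly where the hypothesis $\phi\in C^{1/2+\eps}(\T)$ enters). Your remark that the step ``is essentially the Borodin--Okounkov--Case--Geronimo reformulation of the strong Szeg\H{o} theorem'' is accurate, but this is precisely what item (i) asserts, so appealing to it here is circular. A related issue affects part (iii): $A_N-I$ is generically of full rank $N$ (the block $P_NT(\phi_-)Q_N T(\phi_+)P_N$ has no reason to be low rank), so the Sherman--Morrison/Woodbury identity does not apply in its standard low-rank form; the passage from the Woodbury-type inverse $(I+UV)^{-1}=I-U(I+VU)^{-1}V$ to the specific bracketed expression $I-T(\lambda_N^{-1})(I-K_N)^{-1}T(\lambda_N)$ requires a careful, nontrivial identification of $U$, $V$, and $VU$ that you have not supplied.

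In short, your sketch correctly identifies the framework, but hand-waves past the two delicate computations --- the emergence of $E[\phi]$ in $\det A_N$ and the explicit inversion yielding \eqref{TNinv} --- which are exactly what the paper sidesteps by citing \cite{BS}.
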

\begin{proof}
	We note that (i) is the Borodin-Okounkov-Case-Geronimo (BOCG) identity (see, e.g., \cite[Sect.~10.40]{BS}),
	and (ii) is an obvious consequence of it. 
	Formula \eqref{TNinv} is basically formula (10.27) or (10.47) in \cite{BS}.
\end{proof}

\begin{theorem}\label{thm.3.1}
	Let $\phi$ have a continuous canonical Wiener-Hopf factorization $\phi=\phi_-\phi_+$, and assume that  $\psi=\phi p_++p_-$ with $p_+\in H^2(\T)$ and $p_-\in H^2_-(\T)$. Then 
\begin{equation}\label{eqn.expan}
	F_N[\phi;\psi]=
	F[\phi;\psi]-\frac{1}{[\phi_+]_0}\cdot  e_0^T T(\lambda^{-1}_N)(I-K_N)^{-1} T(\phi_- z^{-N} p_+)e_0.
\end{equation}
\end{theorem}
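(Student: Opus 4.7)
The plan is to substitute the inversion formula \eqref{TNinv} for $T_N^{-1}(\phi)$ from Proposition \ref{prop.KN}(iii) into the Cramer expression $F_N[\phi;\psi] = e_0^T T_N^{-1}(\phi)T_N(\psi)e_0$ of Proposition \ref{p.Bordered.Cramer}, and then to carry out enough algebraic manipulation with the Toeplitz/Hankel identity $T(ab) = T(a)T(b)+H(a)H(\tilde b)$ to collapse the result into the claimed form. Since $\phi_+^{-1}\in H^\infty_+$, the operator $T(\phi_+^{-1})$ is lower triangular with diagonal entry $(\phi_+^{-1})_0 = 1/[\phi_+]_0$; hence $e_0^T T_N(\phi_+^{-1}) = [\phi_+]_0^{-1} e_0^T$, and setting $\eta_N := T_N(\phi_-^{-1}) T_N(\psi) e_0 \in \C^N$, the formula reduces to
\begin{equation*}
F_N = \tfrac{1}{[\phi_+]_0}\bigl(e_0^T \eta_N - e_0^T\, T(\lambda_N^{-1})(I-K_N)^{-1} T(\lambda_N)\, \eta_N\bigr).
\end{equation*}

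Next I would compare $\eta_N$ with its infinite-operator analogue $\eta := T(\phi_-^{-1})T(\psi) e_0$. Because $\phi_-^{-1}\in H^\infty_-$ gives $H(\phi_-^{-1})=0$, one has $T(\phi_-^{-1})T(\psi) = T(\phi_-^{-1}\psi)$; together with $\phi_-^{-1}\psi = \phi_+ p_+ + \phi_-^{-1} p_-$ and the fact that $T(\phi_-^{-1}p_-)e_0 = 0$ (as $\phi_-^{-1}p_-\in H^2_-$), this yields $\eta = T(\phi_+ p_+) e_0$, whose $0$-th Fourier coefficient equals $[\phi_+]_0[p_+]_0 = [\phi_+]_0\,F[\phi;\psi]$. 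The key algebraic fact for the second piece is the identity
\begin{equation*}
T(\lambda_N)\, T(\phi_+ p_+)\, e_0 \;=\; T(\phi_- z^{-N} p_+)\, e_0,
\end{equation*}
which follows by factoring $T(\lambda_N) = V_{-N} T(\phi_-) T(\phi_+^{-1})$ and applying the chain of cancellations $T(\phi_+^{-1}) T(\phi_+ p_+) e_0 = T(p_+)e_0$ (from $H(\widetilde{\phi_+ p_+})=0$, since both $\phi_+,p_+\in H^\infty_+$), $T(\phi_-)T(p_+)e_0 = T(\phi_- p_+)e_0$ (from $H(\phi_-)=0$), and $V_{-N}T(\phi_- p_+)e_0 = T(z^{-N}\phi_- p_+)e_0 = T(\phi_-z^{-N}p_+)e_0$ (from $H(z^{-N})=0$).

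The remainder of the proof is to rearrange the $Q_N$-corrections. Writing $\eta_N = P_N\eta - P_N T(\phi_-^{-1}) Q_N T(\psi) e_0$, one has the decomposition $\eta - \eta_N = Q_N T(\phi_+ p_+)e_0 + P_N T(\phi_-^{-1})Q_N T(\psi) e_0$; using $T(\lambda_N)Q_N = T(\lambda)V_{-N}$ and another application of $T(ab) = T(a)T(b)+H(a)H(\tilde b)$, this gives $T(\phi_-z^{-N}p_+)e_0 - T(\lambda_N)\eta_N = T(\lambda_N)(\eta - \eta_N)$ as an explicit combination of Hankel pieces. When the resulting expression is premultiplied by $e_0^T T(\lambda_N^{-1})(I-K_N)^{-1}$, the BOCG relation $T(\lambda_N)T(\lambda_N^{-1}) = I - K_N$ (itself an instance of the Toeplitz/Hankel identity) allows the Hankel corrections to be absorbed into the $(I-K_N)^{-1}$ factor, and the tail sum $\sum_{k\ge N}(\phi_-^{-1})_{-k}\psi_k$ arising from $e_0^T\eta_N = [\phi_+]_0\,F[\phi;\psi] - \sum_{k\ge N}(\phi_-^{-1})_{-k}\psi_k$ cancels against them, leaving the single term $-\frac{1}{[\phi_+]_0} e_0^T T(\lambda_N^{-1})(I-K_N)^{-1} T(\phi_-z^{-N}p_+)e_0$ as claimed. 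The main obstacle is precisely this last bookkeeping step: systematically tracking the various $P_N/Q_N$ splittings and the Hankel corrections produced by each factoring, and verifying that they collapse correctly on contact with $(I-K_N)^{-1}$.
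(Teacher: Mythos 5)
Your overall strategy is exactly the paper's: insert \eqref{TNinv} into the Cramer expression, extract $[\phi_+]_0^{-1}$ by lower-triangularity of $T(\phi_+^{-1})$, establish the key identity $T(\lambda_N)T(\phi_+ p_+)e_0=T(\phi_- z^{-N}p_+)e_0$ by factoring $T(\lambda_N)=V_{-N}T(\phi_-)T(\phi_+^{-1})$ and killing Hankel pieces, and then deal with the remainder. The step you flag as ``the main obstacle'' is not one, and it simplifies more than your two-piece decomposition of $\eta-\eta_N$ suggests: since $T(\phi_-^{-1})$ is upper triangular, $P_N T(\phi_-^{-1})P_N=T(\phi_-^{-1})P_N$, so $\eta-\eta_N=T(\phi_-^{-1})Q_N T(\psi)e_0$ in a single piece. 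The operative lemma is then $T(\phi_-^{-1})Q_N=T(\lambda_N^{-1})T(\phi_+^{-1})V_{-N}$ (equivalently, $T(\lambda_N^{-1})T(\phi_+^{-1})=T(\phi_-^{-1})V_N$), not the identity $T(\lambda_N)Q_N=T(\lambda)V_{-N}$ you cite. From it, $T(\lambda_N)(\eta-\eta_N)=T(\lambda_N)T(\lambda_N^{-1})T(\phi_+^{-1})V_{-N}T(\psi)e_0=(I-K_N)T(\phi_+^{-1})V_{-N}T(\psi)e_0$; premultiplying by $e_0^T T(\lambda_N^{-1})(I-K_N)^{-1}$ collapses to $e_0^T T(\phi_-^{-1})V_N V_{-N}T(\psi)e_0=e_0^T(\eta-\eta_N)$, which added to $e_0^T\eta_N$ gives $e_0^T\eta=[\phi_+]_0[p_+]_0$, while the leftover piece $-e_0^T T(\lambda_N^{-1})(I-K_N)^{-1}T(\lambda_N)\eta=-e_0^T T(\lambda_N^{-1})(I-K_N)^{-1}T(\phi_-z^{-N}p_+)e_0$ is the correction term. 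This is precisely the three-line computation the paper performs, so your proposal is correct in substance but under-sells how clean the final step is once the right factorization is named.
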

\begin{proof}
Note that $T_N(p_-)e_0=0$. Using \eqref{TNinv} we consider
\begin{align*}
	\lefteqn{e_0^T T_N^{-1}(\phi)T_N(\phi p_+)e_0}\qquad\\
	&=
	e_0^T T_N(\phi_+^{-1})P_N\left(I- T(\lambda_N^{-1})(I-K_N)^{-1}T(\lambda_N)\right)P_N T_N(\phi_-^{-1}) T_N(\phi p_+)e_0
	\\
	&= [\phi_+^{-1}]_0\cdot e_0^T
	\left(I- T(\lambda_N^{-1})(I-K_N)^{-1}T(\lambda_N)\right) T(\phi_-^{-1})P_N T(\phi p_+)e_0.
\end{align*}
Apart from the factor $[\phi_+^{-1}]_0=[\phi_+]_0^{-1}$, this decomposes into
\begin{align*}
	\lefteqn{e_0^T T(\phi_-^{-1})P_N T(\phi p_+)e_0
		+e_0^T T(\lambda_N^{-1})(I-K_N)^{-1}T(\lambda_N) T(\phi_-^{-1})Q_N T(\phi p_+)e_0
	}\hspace{2cm}&\\
	&-e_0^T T(\lambda_N^{-1})(I-K_N)^{-1}T(\lambda_N)T(\phi_-^{-1}) T(\phi p_+)e_0.
\end{align*}
The first two terms equal
\begin{align*}
	\lefteqn{e_0^T T(\phi_-^{-1})P_N T(\phi p_+)e_0
		+e_0^T T(\lambda_N^{-1})(I-K_N)^{-1}T(\lambda_N) T(\phi_-^{-1})Q_N T(\phi p_+)e_0}\qquad
	\\
	&=
	e_0^T T(\phi_-^{-1})P_N T(\phi p_+)e_0
	+e_0^T T(\lambda_N^{-1})(I-K_N)^{-1}T(\lambda_N) T(\lambda_N^{-1}) T(\phi_+^{-1})V_{-N}T(\phi p_+)e_0
	\\
	&=
	e_0^T T(\phi_-^{-1})P_N T(\phi p_+)e_0
	+e_0^T T(\lambda_N^{-1}) T(\phi_+^{-1})V_{-N}T(\phi p_+)e_0
	\\
	&=
	e_0^T T(\phi_-^{-1})P_N T(\phi p_+)e_0
	+e_0^T T(\phi_-^{-1}) V_N V_{-N}T(\phi p_+)e_0
	\\
	&=
	e_0^T T(\phi_-^{-1}) T(\phi p_+)e_0=
	e_0^T  T(\phi_+ p_+)e_0 = [\phi_+]_0\cdot [p_+]_0,
\end{align*}
which give the first (constant) term in \eqref{eqn.expan}.
The third term from above equals
\begin{align*}
	\lefteqn{-e_0^T T(\lambda_N^{-1})(I-K_N)^{-1}T(\lambda_N)T(\phi_-^{-1}) T(\phi p_+)e_0}
	\qquad
	\\
	&=
	-e_0^T T(\lambda_N^{-1})(I-K_N)^{-1}T(\lambda_N)T(\phi_+ p_+)e_0
	\\&=
	-e_0^T T(\lambda_N^{-1})(I-K_N)^{-1}T(\phi_- z^{-N} p_+)e_0,
\end{align*}
which provides the second term in \eqref{eqn.expan}.
\end{proof}

The previous result allows to obtain improvements of Proposition \ref{p.asymp-1} by expanding 
the term $(I-K_N)^{-1}$ in formula \eqref{eqn.expan} into the Neumann series.
Notice that $K_N=V_{-N}H(\lambda)H(\tilde{\lambda}^{-1})V_{N}$ and $H(\lambda)H(\tilde{\lambda}^{-1})$ is compact since $\lambda(z)$ is continuous. Hence $K_N$ converges in the operator norm to zero.
In particular, the following conclusions can be drawn.

\begin{corollary}\label{c.32}
	Under the same assumptions as in the previous theorem,
	\begin{align}\label{FN.asym}
	F_N[\psi;\phi] 
	&=
	[p_+]_0
	-\frac{1}{[\phi_+]_0}\cdot  e_0^T T(\lambda^{-1}_N)T(\phi_- z^{-N} p_+)e_0
	\\
	&\qquad
	+O\left(
	\|K_N\|_{L(\ell^2(\ZZ))}\|P[\tilde{\lambda}^{-1}_N]\|_{H^2} \|P[\phi_-z^{-N}p_+]\|_{H^2}
	\right)
	\quad\mbox{ as }N\to\infty.
	\nonumber
	\end{align}
\end{corollary}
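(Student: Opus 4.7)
The plan is to start from the exact formula in Theorem \ref{thm.3.1} and split the resolvent $(I-K_N)^{-1}$ into its leading term plus a remainder. Writing
\[
(I-K_N)^{-1} = I + (I-K_N)^{-1}K_N,
\]
and noting from \eqref{f.F} that $F[\phi;\psi]=[p_+]_0$, the $I$-piece immediately produces the explicit second term on the right-hand side of \eqref{FN.asym}. All the work is then reduced to controlling the remainder
\[
R_N := -\frac{1}{[\phi_+]_0}\cdot e_0^T T(\lambda_N^{-1})(I-K_N)^{-1}K_N T(\phi_- z^{-N}p_+)e_0.
\]

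The next step is to interpret $R_N$ as a bilinear pairing on $H^2(\T)$. Observe that for any vector $X\in\ell^2(\ZZ)$ we have the identity
\[
e_0^T T(\lambda_N^{-1}) X = \sum_{k\ge 0}[\lambda_N^{-1}]_{-k}X_k = \bigl\langle X, P[\tilde\lambda_N^{-1}]\bigr\rangle_{\ell^2},
\]
understood as the standard (bilinear) pairing, so by Cauchy--Schwarz
\[
|R_N| \le \frac{1}{|[\phi_+]_0|}\,\bigl\|P[\tilde\lambda_N^{-1}]\bigr\|_{H^2}\,\bigl\|(I-K_N)^{-1}K_N T(\phi_- z^{-N}p_+)e_0\bigr\|_{\ell^2}.
\]
Since $T(g)e_0$ is precisely the vector of nonnegative Fourier coefficients of $g$, we have $\|T(\phi_- z^{-N}p_+)e_0\|_{\ell^2}=\|P[\phi_- z^{-N}p_+]\|_{H^2}$. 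Factoring out the operator norm of $K_N$ then yields
\[
|R_N|\le \frac{1}{|[\phi_+]_0|}\,\bigl\|(I-K_N)^{-1}\bigr\|_{L(\ell^2)}\,\|K_N\|_{L(\ell^2)}\,\bigl\|P[\tilde\lambda_N^{-1}]\bigr\|_{H^2}\,\bigl\|P[\phi_- z^{-N}p_+]\bigr\|_{H^2}.
\]

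The final step is to absorb the $(I-K_N)^{-1}$ factor into the implicit constant in the $O(\cdot)$. As remarked in the statement preceding the corollary, $K_N = V_{-N}H(\lambda)H(\tilde\lambda^{-1})V_N$ with $H(\lambda)H(\tilde\lambda^{-1})$ compact (because $\lambda\in C(\T)$), and the isometries $V_{\pm N}$ converge weakly to zero, so $\|K_N\|_{L(\ell^2)}\to 0$. Consequently $\|(I-K_N)^{-1}\|_{L(\ell^2)}$ is uniformly bounded for $N$ large, and the bound above gives exactly the error term claimed in \eqref{FN.asym}.

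The only slightly delicate point is the direction-of-pairing bookkeeping in the first step (the identification of $e_0^T T(\lambda_N^{-1})$ with the functional $\langle\,\cdot\,,P[\tilde\lambda_N^{-1}]\rangle$ via the transpose $T(f)^{T}=T(\tilde f)$); once that is in place, the rest is a clean application of Cauchy--Schwarz together with the decay $\|K_N\|\to 0$. No subtle analytic estimate is needed beyond the compactness observation already built into Proposition \ref{prop.KN}.
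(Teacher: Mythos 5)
Your proof is correct and follows exactly the route the paper intends: the paper itself gives no separate proof of Corollary~\ref{c.32}, only the remark preceding it that one should expand $(I-K_N)^{-1}$ in formula \eqref{eqn.expan} into its Neumann series and use that $K_N=V_{-N}H(\lambda)H(\tilde{\lambda}^{-1})V_{N}\to 0$ in operator norm. Your decomposition $(I-K_N)^{-1}=I+(I-K_N)^{-1}K_N$, the identifications $e_0^T T(\lambda_N^{-1})X=\sum_{k\ge0}[\lambda_N^{-1}]_{-k}X_k=\sum_{k\ge0}[\tilde\lambda_N^{-1}]_{k}X_k$ and $\|T(g)e_0\|_{\ell^2}=\|P[g]\|_{H^2}$, and the Cauchy--Schwarz estimate that absorbs the uniformly bounded $\|(I-K_N)^{-1}\|$ into the $O(\cdot)$ constant, deliver precisely the claimed error term. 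One small imprecision: you say $V_{\pm N}$ "converge weakly to zero"; what is actually used is that $V_{-N}\to 0$ \emph{strongly}, which combined with compactness of $H(\lambda)H(\tilde\lambda^{-1})$ gives $\|V_{-N}H(\lambda)H(\tilde\lambda^{-1})\|\to 0$ and hence $\|K_N\|\to 0$ (exactly as the paper states). This does not affect the validity of the argument.
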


Therein, the first term $[p_+]_0$ is the constant, whereas the second one, which can be written as the sum
$$
-\frac{1}{[\phi_+]_0}\sum_{n=N}^\infty 
\left[\frac{\phi_+}{\phi_-}\right]_{-n}\Big[ \phi_-p_+\Big]_{n},
$$
converges to zero as $N\to \infty$.  One should expect that in many cases,  (i.e., unless some ``cancellation'' occurs in the previous sum), the third (or error) term converges faster to zero because it contains $\|K_N\|$.

In the case that the generating functions $\phi$ and $\psi$ are analytic in a neighborhood of $\T$,
exponentially fast convergence can be derived.

\begin{corollary}\label{c.32b}
	Let $\phi(z)$ be analytic an nonvanishing function on the annulus $a_1<|z|<b_1$ with winding number zero,
	and let $\psi(z)$ be analytic on the annulus $a_2<|z|<b_2$, where $a_i<1<b_i$. Then, for each
	$\kappa$ with $\kappa>a_1\max\{b_1^{-1},b_2^{-1}\}$, we have 
	\begin{align}\label{FN.asym2}
	F_N[\phi;\psi] 
	&= F[\phi;\psi]+O(\kappa^N),\qquad N\to \infty,
	\end{align}
and 
	\begin{align}\label{FN.asym3}
	D_N^B[\phi;\psi] 
	&= G[\phi]^NE[\phi]\Big(F[\phi;\psi]+O(\kappa^N)\Big),\qquad N\to \infty.
	\end{align}
\end{corollary}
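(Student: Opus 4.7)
The plan is to apply the exact expansion \eqref{FN.asym} from Corollary \ref{c.32} and show that under the analyticity hypotheses each piece of the remainder decays at the claimed exponential rate, then tack on the Szeg\H{o}--Widom part via Proposition \ref{prop.KN}(i). First I would locate the domains of analyticity of all objects. Since $\phi$ is analytic and non-vanishing on $a_1<|z|<b_1$ with winding number zero, the explicit formulas \eqref{eqn.WHfact} extend to give $\phi_+$ analytic and non-vanishing on $|z|<b_1$ and $\phi_-$ analytic and non-vanishing on $|z|>a_1$ (including $\infty$). Hence $\lambda=\phi_-/\phi_+$ and $\lambda^{-1}$ are analytic on the annulus $a_1<|z|<b_1$. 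Next, $\phi_-^{-1}\psi$ is analytic on $\max(a_1,a_2)<|z|<b_2$, and since the Riesz projection of a function analytic on an annulus containing $\T$ is analytic on the disk bounded by the outer radius, $P[\phi_-^{-1}\psi]$ is analytic on $|z|<b_2$. Therefore $p_+=\phi_+^{-1}P[\phi_-^{-1}\psi]$ is analytic on $|z|<b:=\min(b_1,b_2)$, and $\phi_-p_+$ is analytic on the annulus $a_1<|z|<b$.

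The key step is the bound on the middle term of \eqref{FN.asym}. Using $[\lambda_N^{-1}]_{-k}=[\lambda^{-1}]_{-k-N}$ and $[\phi_-z^{-N}p_+]_k=[\phi_-p_+]_{k+N}$, the pairing evaluates to
\begin{equation*}
  e_0^T T(\lambda_N^{-1})T(\phi_-z^{-N}p_+)e_0
  =\sum_{m=N}^{\infty}[\lambda^{-1}]_{-m}\,[\phi_-p_+]_{m}.
\end{equation*}
Cauchy estimates on the analyticity annuli give, for every small $\eps>0$, $|[\lambda^{-1}]_{-m}|\le C(a_1+\eps)^m$ and $|[\phi_-p_+]_m|\le C(b-\eps)^{-m}$, so the sum is bounded by $C'\bigl(\tfrac{a_1+\eps}{b-\eps}\bigr)^N$. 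Since $a_1/b=a_1\max\{b_1^{-1},b_2^{-1}\}$, one obtains $O(\kappa^N)$ for any $\kappa$ strictly above this ratio.

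Now the error in \eqref{FN.asym}: the same Cauchy bounds yield $\|P[\tilde\lambda_N^{-1}]\|_{H^2}=O(a_1^N)$ and $\|P[\phi_-z^{-N}p_+]\|_{H^2}=O(b^{-N})$, while a Hilbert--Schmidt computation gives $\|H(\lambda_N)\|_2=O(b_1^{-N})$ and $\|H(\tilde\lambda_N^{-1})\|_2=O(a_1^{N})$, so $\|K_N\|\le \|K_N\|_1=O((a_1/b_1)^N)$. Their product is $O\bigl(a_1^{2N}b_1^{-N}b^{-N}\bigr)$, which decays strictly faster than the main term, and Combining these two estimates with $F[\phi;\psi]=[p_+]_0$ from Theorem \ref{thm3.3} proves \eqref{FN.asym2}. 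Finally, \eqref{FN.asym3} follows by multiplying \eqref{FN.asym2} by $D_N[\phi]$ and using Proposition \ref{prop.KN}(i): the Fredholm determinant expansion $\det(I-K_N)=1+O(\|K_N\|_1)$ gives $D_N[\phi]=G[\phi]^NE[\phi]\bigl(1+O((a_1/b_1)^N)\bigr)$, absorbing the extra factor into the $O(\kappa^N)$ remainder.

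I expect the one subtle point to be justifying that $p_+$ really is analytic all the way out to $|z|<b$, which rests on the observation that the Riesz projection preserves the outer radius of convergence; once that is in place, everything else reduces to bookkeeping with Cauchy estimates, and the rate $a_1\max\{b_1^{-1},b_2^{-1}\}$ falls out as the geometric ratio controlling the dominant tail.
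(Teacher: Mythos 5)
Your proof is correct and follows essentially the same route as the paper: apply the exact expansion from Corollary~\ref{c.32}, observe that $\lambda^{-1}=\phi_+/\phi_-$ is analytic on $a_1<|z|<b_1$ and $\phi_-p_+$ is analytic on $a_1<|z|<\min\{b_1,b_2\}$, deduce the exponential decay of the Fourier coefficients, and bound the resulting series and the error term; then multiply by $D_N[\phi]$ using the BOCG identity. Your treatment is somewhat more explicit (you spell out the analyticity of $p_+$ via the Riesz projection and give separate bounds for the third term and for $\det(I-K_N)$), and you correctly invoke Proposition~\ref{prop.KN}(i) for the determinant factor where the paper's text mistakenly cites part (ii).
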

\begin{proof}
The function  $\lambda(z)^{-1}=\phi_+(z)/\phi_-(z)$ is analytic on $a_1<|z|<b_1$ as well, and hence the Fourier coefficients $[\lambda^{-1}]_{-n}=O(\kappa_1^n)$ as $n\to+\infty$ for each $\kappa_1>a_1$.
As $\phi_-p_+=\phi_-\phi_+^{-1}P[\phi_-^{-1}\psi]$, the function $P[\phi_-^{-1}\psi]$ is analytic on the disc
$|z|<b_2$ and $\phi_-p_+$ is analytic on the annulus $a_1<|z|<\min\{b_1,b_2\}$. Thus, for every 
$\kappa_2>\max\{b_1^{-1},b_2^{-1}\}$, the Fourier coefficients $[\phi_-p_+]_{n}=O(\kappa_2^n)$
as $n\to+\infty$.
Using this information about the Fourier coeffcients, it is easily seen that the second and third term on the right hand side of \eqref{FN.asym} decays as $O(\kappa_1^N\kappa_2^N)$ as $N\to \infty$. This implies \eqref{FN.asym2}. For \eqref{FN.asym3}, we notice that this follows from \eqref{FN.asym2} combined with Proposition\ref{prop.KN}(ii) since a similar estimate can be made for the $\det( I - K_{N})$ term.
\end{proof}

This together with Theorem \ref{thm3.3} completes the proofs of Theorems \ref{main thm} and Theorem \ref{thm 1.2}.

\subsection{Concrete evaluations}

The functions that are of interest in the Ising model are $\phi=\widehat{\phi}$ given by \eqref{Isingphi},
\begin{align}\label{f.phi}
\phi(z)= 
\sqrt{\frac{1-k^{-1}z^{-1}}{1-k^{-1}z}},  \qquad k>1,
\end{align}
and the function $\hat{\psi}$ given by \eqref{hat psi}. Apart from a constant factor, this function can be written as 
$$
\psi(z)=\frac{\phi(z)z-\phi(c)c}{z-c}
$$
with $c=c_*<0$.
Notice that $\phi$ is analytic (and nonzero) on $\C$ except on the branch cut
$$
\Gamma_k:=[0,k^{-1}]\cup [k,+\infty).
$$ 
Therefore, being more general than necessary for the Ising model, we can also allow for complex values $c\notin \Gamma_k$.   Indeed, $\phi(c)$ is well-defined, and therefore $\psi(z)$ is analytic on $\C\setminus \Gamma_k$.

We can apply the formulas established in  Theorem \ref{thm.3.1} and Corollary \ref{c.32} directly to $\psi$, and this is what we will do below.
Alternatively, we could split $\psi$ into two terms
$$
\psi(z)=\phi(z) \frac{z}{z-c} -\frac{\phi(c)c}{z-c}
$$
This basically means that we deal with the functions
$$
\phi(z) \frac{z}{z-c} \quad \mbox{ and }\quad \frac{1}{z-c}.
$$
We do not have to exclude  the values $c\in \Gamma_k$, but to exclude $|c|=1$ and distinguish the cases $|c|>1$ and $|c|<1$.
In the latter case, the asymptotics can be gleaned from Theorem \ref{thm.3.4} and it should be noted that $F_N[\phi;\frac{1}{z-c}]=0$. 
In the former case, the asymptotics of $F_N[\phi;\phi\frac{z}{z-c}]$ is discussed numerically in Section \ref{sec:4:2}, but we refrain from providing the rigorous details.

Note that $\phi$ has Wiener-Hopf factors given by 
\begin{align}\label{f.phi.pm}
\phi_+(z)=(1-k^{-1}z)^{-1/2},\qquad
\phi_-(z)=(1-k^{-1}z^{-1})^{1/2}.
\end{align}
We see that 
\begin{align}\label{f.lambda}
\lambda_N(z)=z^{-N}\lambda(z),\quad \lambda(z)=\sqrt{(1-k^{-1}z^{-1})(1-k^{-1}z)}
\end{align}
We start with the asymptotics of $D_N[\phi]$.

\begin{theorem}
	For $\phi$ given by \eqref{f.phi} with $k>1$, we have that 
	$$
	D_N[\phi]=(1-k^{-2})^{1/4}\left(1+\frac{1}{2\pi(1-k^{-2})^2}N^{-2}k^{-2N-2}(1+O(N^{-1}))\right),\qquad N\to\infty.
	$$
\end{theorem}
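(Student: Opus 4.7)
The plan is to invoke the Borodin–Okounkov–Case–Geronimo identity from Proposition \ref{prop.KN}(i) and to analyze the operator determinant $\det(I-K_N)$ through its trace expansion.

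First, I would compute $G[\phi]$ and $E[\phi]$ directly from the Wiener–Hopf factors \eqref{f.phi.pm}. A short calculation yields $[\log\phi]_0=0$ and, for $n\geq 1$, $[\log\phi]_n=\frac{k^{-n}}{2n}$, $[\log\phi]_{-n}=-\frac{k^{-n}}{2n}$. Hence $G[\phi]=1$ and
$$
E[\phi]=\exp\Big(\sum_{n\geq 1} n\,[\log\phi]_n[\log\phi]_{-n}\Big)=\exp\Big(-\tfrac{1}{4}\sum_{n\geq 1}\tfrac{k^{-2n}}{n}\Big)=(1-k^{-2})^{1/4},
$$
so that Proposition \ref{prop.KN}(i) reduces the problem to the asymptotics of $\det(I-K_N)$.

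Since $\lambda(z)=\sqrt{(1-k^{-1}z)(1-k^{-1}z^{-1})}$ is analytic and bounded away from zero on the annulus $k^{-1}<|z|<k$, the Fourier coefficients of $\lambda$ and $\lambda^{-1}$ decay like $(k-\varepsilon)^{-m}$, so $K_N$ is trace class with trace norm $O(k^{-2N})$ up to polynomial factors in $N$. The Fredholm expansion then gives
$$
\det(I-K_N)=1-\mathrm{tr}(K_N)+O(\|K_N\|_{\mathrm{tr}}^{2})=1-\mathrm{tr}(K_N)+O(k^{-4N}),
$$
so it suffices to compute $\mathrm{tr}(K_N)$ with relative error $O(1/N)$. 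Using $\tilde\lambda=\lambda$ and the standard Hankel trace identity $\mathrm{tr}(H(a)H(b))=\sum_{n\geq 1}n\,a_n b_n$, this reads
$$
\mathrm{tr}(K_N)=\sum_{n\geq 1} n\,[\lambda]_{n+N}\,[\lambda^{-1}]_{n+N}.
$$

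The core of the argument is to derive sharp asymptotics for the Fourier coefficients $[\lambda]_m$ and $[\lambda^{-1}]_m$ as $m\to\infty$. Expanding $(1-k^{-1}z)^{\pm1/2}$ and $(1-k^{-1}z^{-1})^{\pm1/2}$ as binomial series and convolving, the dominant term (that with $p=0$, in the notation of Section~3.1) together with Stirling's formula $\binom{\pm 1/2}{m}\sim \mp\frac{(-1)^m}{2^{(1\pm1)/2}\sqrt{\pi}\,m^{(3\pm 0)/2}}$ gives
$$
[\lambda]_m=-\frac{k^{-m}}{2\sqrt{\pi}\,m^{3/2}}\bigl(1+O(1/m)\bigr),\qquad [\lambda^{-1}]_m=\frac{k^{-m}}{\sqrt{\pi m}}\bigl(1+O(1/m)\bigr).
$$
Equivalently this is singularity analysis at the branch point $z=k^{-1}$, and it is in fact the same Watson-lemma computation carried out in Proposition \ref{intas} of the Riemann-Hilbert proof. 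Substituting and rearranging,
$$
\mathrm{tr}(K_N)=-\frac{k^{-2N}}{2\pi}\sum_{n\geq 1}\frac{n\,k^{-2n}}{(n+N)^2}\bigl(1+O(1/N)\bigr).
$$
The dominant contribution to this sum comes from $n=O(1)$, where $(n+N)^{-2}=N^{-2}(1+O(n/N))$; using $\sum_{n\geq1}n\,k^{-2n}=k^{-2}/(1-k^{-2})^2$, together with a summation-by-parts bound on the tail exactly analogous to \eqref{sumas1}, one obtains
$$
\mathrm{tr}(K_N)=-\frac{1}{2\pi(1-k^{-2})^2}\,N^{-2}k^{-2N-2}\bigl(1+O(1/N)\bigr).
$$

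The main obstacle is making the asymptotic analysis of $[\lambda]_m$ and $[\lambda^{-1}]_m$ uniform enough in $m$ to control the $O(1/N)$ relative error in the sum defining $\mathrm{tr}(K_N)$, and to justify cleanly that the higher-order Fredholm contributions $\tfrac12((\mathrm{tr}\,K_N)^2-\mathrm{tr}(K_N^2))$ are genuinely of order $O(k^{-4N})$ — which follows by applying the same coefficient estimates to $\mathrm{tr}(K_N^2)$. Combining these pieces with $D_N[\phi]=(1-k^{-2})^{1/4}\det(I-K_N)$ yields the claimed expansion.
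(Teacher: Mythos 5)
Your proof follows the paper's operator-theoretic argument essentially step for step: BOCG identity via Proposition \ref{prop.KN}(i), the expansion $\det(I-K_N)=1-\mathrm{tr}\,K_N+O(\|K_N\|_1^2)$, the Hankel trace identity $\mathrm{tr}(H(a)H(b))=\sum_{m\ge1}m\,a_mb_m$, Fourier-coefficient asymptotics for $\lambda^{\pm1}$, and the same final sum estimate (the paper's Lemma \ref{l.sum.asym2}). The one slip is in the individual Fourier coefficient asymptotics. Writing $\lambda(z)=(k-z)^{1/2}\cdot k^{-1/2}\sqrt{1-k^{-1}z^{-1}}$ near the branch point $z=k$, Lemma \ref{l.Fc.asym} gives
$$
[\lambda]_m=-\frac{\sqrt{1-k^{-2}}}{2\sqrt{\pi}}\,m^{-3/2}k^{-m}\bigl(1+O(m^{-1})\bigr),
\qquad
[\lambda^{-1}]_m=\frac{1}{\sqrt{\pi}\sqrt{1-k^{-2}}}\,m^{-1/2}k^{-m}\bigl(1+O(m^{-1})\bigr),
$$
whereas your formulas drop the factor $\sqrt{1-k^{-2}}$ (and its reciprocal). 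The reason is that the ``dominant $p=0$ term'' heuristic retains only the $p=0$ term of the convolution $[\lambda]_m=\sum_{p\ge0}a_{m+p}a_p$; the full sum over small $p$ contributes the branch value $\sum_{p\ge0}a_p k^{-p}=\sqrt{1-k^{-2}}$ as an extra multiplicative constant. You are lucky here: the two missing constants are mutual reciprocals, so they cancel in the product $[\lambda]_m[\lambda^{-1}]_m=-\frac{1}{2\pi}m^{-2}k^{-2m}(1+O(m^{-1}))$, and hence $\mathrm{tr}\,K_N$ and the final asymptotics come out correctly. If you had needed only one of $[\lambda]_m$ or $[\lambda^{-1}]_m$ (as in the computation of $F_N$ in Theorem \ref{thm.3.4}), this omission would have propagated into a wrong constant. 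With that correction, your argument coincides with the paper's proof.
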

\begin{proof}
	We are going to use the BOCG identity stated in Proposition \ref{prop.KN}(i).
	A straightforward evaluation of the constants gives $G[\phi]=1$ and $E[\phi]=(1-k^{-2})^{1/4}$.
	Thus we are left with analyzing
	$$
	\det(I-K_N)=1-\mathrm{trace}\, K_N +O(\|K_N\|_1^2), \qquad N\to\infty.
	$$
	Let us first estimate the trace norm of the operator  
	$K_N=V_{-N}H(\lambda)H(\tilde{\lambda}^{-1})V_N$.
	Since $\lambda(z)$ is analytic on the annulus  $k^{-1}<|z|<k$, the Fourier coefficients decay as
	$[\lambda]_n=O(\kappa^{|n|})$ as $|n|\to\infty$ for each fixed $\kappa>k^{-1}$.
	A straightforward computation of the Hilbert-Schmidt norm of the Hankel operators appearing in $K_N$
	implies that the trace norm of $K_N$ decays exponentially as
	$$
	\|K_N\|_1=O(\kappa^{2N}),\qquad N\to\infty.
	$$
	As a consequence the term $O(\|K_N\|_1^2)$ is negligible in comparison to the other expected terms.
	
	Let us finally compute the asymptotics of the trace of $K_N$.
	Obviously,
	\begin{align*}
	\mathrm{trace}\,  K_N &=\sum_{n=N}^\infty\sum_{j=0}^\infty [\lambda]_{n+j+1}
	[\lambda^{-1}]_{-n-j-1}
	=\sum_{n,j=0}^\infty [\lambda]_{n+j+1+N}
	[\lambda^{-1}]_{-n-j-1-N}
	\\
	&=\sum_{n=0}^\infty (n+1) [\lambda]_{n+1+N}
	[\lambda^{-1}]_{-n-1-N}.
	\end{align*}
	In view of \eqref{f.lambda}, the asymptotics of the Fourier coefficients of $\lambda$ and $\lambda^{-1}$ is given by
	\begin{align}
	[\lambda]_n &= \frac{\sqrt{1-k^{-2}}}{\Gamma(-1/2)}n^{-3/2}k^{-n}\Big(1+O(n^{-1})\Big),
	\\
	[\lambda^{-1}]_{-n}=[\lambda^{-1}]_{n} &=
	\frac{1}{\Gamma(1/2)\sqrt{1-k^{-2}}} n^{-1/2}k^{-n}\Big(1+O(n^{-1})\Big),
	\label{f.3.17}
	\end{align}
	as $n\to\infty$. Here we used Lemma \ref{l.Fc.asym} with $b=k$, $\zeta_0(z)=0$ and
	$\omega=1/2$, $\xi(z)=k^{-1/2}(1-k^{-1}z^{-1})^{1/2}$ in the first case and
	$\omega=-1/2$,  $\xi(z)=k^{1/2}(1-k^{-1}z^{-1})^{-1/2}$ in the second case.
	Hence
	\begin{align*}
	\mathrm{trace}\,  K_N &=
	\sum_{n=0}^\infty \frac{(n+1)}{\Gamma(-1/2)\Gamma(1/2)}(n+N+1)^{-2}k^{-2(N+n+1)}
	\Big(1+O((n+N)^{-1})\Big)
	\\
	&=-\frac{1}{2\pi(1-k^{-2})^2}N^{-2}k^{-2N-2}\Big(1+O(N^{-1})\Big),\qquad N\to\infty,
	\end{align*}
	by Lemma \ref{l.sum.asym2}.
	Combining all this we arrive at
	$$
	\det(I-K_N)=1+\frac{1}{2\pi(1-k^{-2})^2}N^{-2}k^{-2N-2}\Big(1+O(N^{-1})\Big),\qquad N\to\infty,
	$$
	and this proves the assertion.
\end{proof}

Let us now turn to the asymptotics of $F_N[\phi;\psi]$ in a setting which is slightly more general than necessary for the Ising model.

\begin{theorem}\label{thm.3.4}
	For $\phi$ given by \eqref{f.phi} with $k>1$, $c\in\C\setminus[k,+\infty)$, let
	$$
	\psi(z)=\frac{\phi(z)z-d}{z-c}
	$$
	where 
	$$
	d=\begin{cases} \phi(c)c & \mbox{ if } |c|\ge 1
	\\ \mbox{arbitrary} &  \mbox{ if } |c|<1.\end{cases}
	$$
	Then, as $N\to\infty$, 
	\begin{align*}
	F_N[\phi;\psi]
	&=
	\frac{k^{1/2}}{(k-c)^{1/2}}-
	\frac{ck^{1/2}}{2\pi (k-c)^{3/2}(1-k^{-2})} N^{-2} k^{-2N}\Big(1+O(N^{-1})\Big).
	\end{align*}
\end{theorem}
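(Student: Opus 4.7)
The plan is to apply Corollary \ref{c.32} to the pair $(\phi,\psi)$. The first task is to identify the splitting $\psi=\phi p_++p_-$. Decomposing $\psi=\phi\,\frac{z}{z-c}-\frac{d}{z-c}$ and performing the same Cauchy-type splits used in the proof of Theorem \ref{thm3.3}, one arrives at
\[
p_+(z)=\frac{z-(1-k^{-1}z)^{1/2}\phi_+(c)c}{z-c},
\]
which is analytic on $|z|<k$ (the apparent pole at $z=c$ is removable, since the numerator vanishes there), and hence in $H^2(\T)$. This formula covers both regimes of the statement: for $|c|\ge 1$ with $d=\phi(c)c$, and for $|c|<1$ with arbitrary $d$ (in the latter case, $d$ affects only $p_-$, which contributes nothing to $F_N$). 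Evaluating at $z=0$ yields $[p_+]_0=\phi_+(c)=k^{1/2}(k-c)^{-1/2}$, which is the leading term claimed.

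The subleading term is extracted from Corollary \ref{c.32}:
\[
F_N[\phi;\psi]=[p_+]_0-\sum_{n\ge N}[\lambda^{-1}]_{-n}[\phi_-p_+]_n+(\text{error}),
\]
where the explicit error is $O(\|K_N\|\cdot\|P[\tilde{\lambda}_N^{-1}]\|_{H^2}\cdot\|P[\phi_-z^{-N}p_+]\|_{H^2})=O(k^{-3N})$ after using the bounds on Fourier coefficients established below. The asymptotic $[\lambda^{-1}]_{-n}\sim\frac{1}{\sqrt{\pi n(1-k^{-2})}}k^{-n}$ is already in hand from \eqref{f.3.17}. For $[\phi_-p_+]_n$ the crucial observation is that the square-root factor in $p_+$ has exponent $+1/2$, so the singular behavior of $\phi_-p_+$ at $z=k$ is of vanishing (rather than divergent) type. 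Using the identity $\phi_-(z)(1-k^{-1}z)^{1/2}=\lambda(z)$, I rewrite
\[
\phi_-p_+=\frac{\phi_-(z)z}{z-c}-c\phi_+(c)\frac{\lambda(z)}{z-c}
\]
and compute each Fourier coefficient by deforming $\T$ outward, picking up the pole at $z=c$ (when $|c|>1$) and the jump of $\lambda$ across the branch cut $[k,\infty)$. The contributions proportional to $c^{-n}$ cancel exactly, thanks to $\phi_+(c)\lambda(c)=\phi_-(c)$, and Watson's lemma applied to the branch-cut integral gives
\[
[\phi_-p_+]_n\sim\frac{c\phi_+(c)(1-k^{-2})^{1/2}}{2\sqrt{\pi}(k-c)}\,n^{-3/2}k^{-n}\qquad\text{as}\;n\to\infty.
\]

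Multiplying the two asymptotics yields a summand of order $\frac{c\phi_+(c)}{2\pi(k-c)}n^{-2}k^{-2n}$, and the summation-by-parts identity $\sum_{n\ge N}n^{-2}k^{-2n}=(1-k^{-2})^{-1}N^{-2}k^{-2N}(1+O(N^{-1}))$ (exactly as in the preceding theorem for $D_N[\phi]$) converts this into the $N^{-2}k^{-2N}$ form stated. Substituting $\phi_+(c)=k^{1/2}(k-c)^{-1/2}$ and carrying the overall minus sign from Corollary \ref{c.32} produces precisely the coefficient $-\frac{ck^{1/2}}{2\pi(k-c)^{3/2}(1-k^{-2})}$, completing the proof modulo the $O(N^{-1})$ relative error, which is controlled by the $(1+O(n^{-1}))$ corrections to both Fourier-coefficient asymptotics together with the tail expansion of the summation-by-parts. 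The main technical obstacle is the careful analysis of $[\phi_-p_+]_n$: one must both recognize the $(k-z)^{+1/2}$ (rather than $(k-z)^{-1/2}$) singular structure of $\phi_-p_+$ at $z=k$, which is what produces the $n^{-3/2}$ decay rate, and then track the cancellation between the residue at $z=c$ and the $c^{-n}$ component of the branch-cut contribution in order to obtain the correct leading coefficient.
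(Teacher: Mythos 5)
Your proposal is correct and follows essentially the same path as the paper: apply Corollary \ref{c.32}, identify $p_+(z)=\frac{z-\phi_+^{-1}(z)\phi_+(c)c}{z-c}$ (so $[p_+]_0=\phi_+(c)$), estimate $[\lambda^{-1}]_{-n}$ and $[\phi_-p_+]_n$, and sum by parts. The only real difference is local: to get $[\phi_-p_+]_n\sim \frac{(1-k^{-2})^{1/2}ck^{1/2}}{2\sqrt{\pi}(k-c)^{3/2}}n^{-3/2}k^{-n}$ the paper plugs $\phi_-p_+$ directly into the pre-packaged Lemma \ref{l.Fc.asym} (noting that the pole at $z=c$ is removable, so $\phi_-p_+$ is holomorphic on the annulus), whereas you split $\phi_-p_+=\frac{\phi_-(z)z}{z-c}-c\phi_+(c)\frac{\lambda(z)}{z-c}$, deform outward, and observe that the residue at $z=c$ from the second term cancels the $c^{-n}$ coefficient of the first via $\phi_+(c)\lambda(c)=\phi_-(c)$. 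Both are valid; yours makes explicit why the answer is insensitive to $|c|\gtrless 1$, while the paper's is slightly shorter. One small imprecision: your error bound $O(k^{-3N})$ is weaker than the paper's $O(\kappa^{4N})$ (any $\kappa>k^{-1}$), but either is negligible against $N^{-2}k^{-2N}$, so the conclusion is unaffected.
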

\begin{proof}
	We are going to use Corollary \ref{c.32} and start with identifying the functions therein.
	To compute $p_+$ recall \eqref{RH.dec} and \eqref{RH1.dec} to see that the latter decomposition,
	$\phi_-^{-1}\psi=\phi_+p_++\phi_-^{-1}p_-$ is given by
	$$
	\frac{\phi_+(z)z-\phi_-^{-1}(z)d}{z-c}=\frac{\phi_+(z)z-\phi_+(c)c}{z-c}+\frac{\phi_+(c)c-\phi_-^{-1}(z)d}{z-c}.
	$$
	The first term is analytic for $|z|<k$, while second term is analytic for $|z|>1-\eps$ and vanishes at $z=\infty$.
	Hence
	$$
	p_+(z)=\frac{z-\phi_+^{-1}(z)\phi_+(c)c}{z-c},
	$$
	and
	$$
	[p_+]_0=p_+(0)=\frac{\phi_+(c)}{\phi_+(0)}=\frac{k^{1/2}}{(k-c)^{1/2}}.
	$$
	Furthermore,
	$$
	\phi_-(z)p_+(z)=\phi_-(z)\frac{z-\phi_+^{-1}(z)\phi_+(c)c}{z-c}
	=(1-k^{-1}z^{-1})^{1/2}\frac{z-(1-k^{-1}z)^{1/2}(1-k^{-1}c)^{-1/2}c}{z-c}.
	$$
	Lemma \ref{l.Fc.asym} with $\omega=1/2$, $b=k$, 
	$$
	\xi(z)=-\frac{(1-k^{-1}z^{-1})^{1/2}(k-c)^{-1/2}c}{z-c},\qquad
	\xi(k)=-\frac{(1-k^{-2})^{1/2} c}{(k-c)^{3/2}},
	$$
	gives
	$$
	[\phi_-p_+]_n=-\frac{(1-k^{-2})^{1/2} c}{\Gamma(-1/2)(k-c)^{3/2}} n^{-3/2}k^{-n+1/2}
	\Big(1+O(n^{-1})\Big),\quad n\to\infty.
	$$
	Hence, together with \eqref{f.3.17},
	$$
	[\lambda^{-1}]_{-n}[\phi_-p_+]_n
	=\frac{c}{2\pi (k-c)^{3/2}} n^{-2} k^{-2n+1/2}\Big(1+O(n^{-1})\Big)
	,\quad n\to\infty.
	$$
	Therefore, we get
	\begin{align*}
	e_0^T T(\lambda^{-1}_N)T(\phi_- z^{-N} p_+)e_0
	&=
	\sum_{n=0}^\infty [\lambda^{-1}]_{-n-N}\Big[ \phi_-p_+\Big]_{n+N}
	\\
	&=
	\frac{ck^{1/2}}{2\pi (k-c)^{3/2}(1-k^{-2})} N^{-2} k^{-2N}(1+O(N^{-1}))
	\end{align*}
	using Lemma \ref{l.sum.asym}. Noting that $[\phi_+]_0=\phi_+(0)=1$ and that the error term in 
	Corollary \ref{c.32} decays even as $O(\kappa^{4N})$ (for any fixed $k^{-1}<\kappa<1$), proves the asymptotics.
\end{proof}

\begin{corollary}
	Let $\phi=\widehat{\phi}$ be given by  \eqref{f.phi} with $k=S_hS_v>1$ and
	$$
	\widehat{\psi}(z)=r\frac{\phi(z)z-\phi(c_*)c_*}{z-c_*}
	$$
	with $r=C_v/S_v$ and $c_*=-S_h/S_v$.
	Then, as $N\to\infty$,
	\begin{align}
	\frac{D_N^B[\widehat{\phi};\widehat{\psi}]}{D_N[\widehat{\phi}]} &=1+\frac{1}{2\pi C_v^2 (1-k^{-2})}N^{-2} k^{-2N}\Big(1+O(N^{-1})\Big),
	\\
	D_N^B[\widehat{\phi};\widehat{\psi}] &=
	(1-k^{-2})^{1/4}\left(1+\frac{1}{2\pi(1-k^{-2})}\Big(\frac{1}{C_v^2}+\frac{1}{k^2-1}\Big)N^{-2}k^{-2N}\Big(1+O(N^{-1})\Big)
	\right).
	\end{align}
\end{corollary}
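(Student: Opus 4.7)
The plan is to combine Proposition~\ref{p.Bordered.Cramer}, Theorem~\ref{thm.3.4}, and the asymptotic expansion for $D_N[\widehat{\phi}]$ established just before Theorem~\ref{thm.3.4}, and then perform a direct algebraic simplification using the Ising parameters $k=S_hS_v$ and $c_*=-S_h/S_v$. First, Proposition~\ref{p.Bordered.Cramer} gives
$$
D_N^B[\widehat{\phi};\widehat{\psi}] = D_N[\widehat{\phi}]\cdot F_N[\widehat{\phi};\widehat{\psi}],
$$
so it suffices to find the asymptotics of $F_N[\widehat{\phi};\widehat{\psi}]$ (which yields the first displayed equation) and then multiply by that of $D_N[\widehat{\phi}]$ (which yields the second).

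The function $\widehat{\psi}$ is $r=C_v/S_v$ times the function $\psi$ appearing in Theorem~\ref{thm.3.4} with $c=c_*$ and $d=\phi(c_*)c_*$. Since $c_*<0<k$, we have $c_*\notin[k,+\infty)$, so Theorem~\ref{thm.3.4} applies, and by linearity of $F_N$ in its second argument
$$
F_N[\widehat{\phi};\widehat{\psi}] = \frac{C_v}{S_v}\left(\frac{k^{1/2}}{(k-c_*)^{1/2}}-\frac{c_*\,k^{1/2}}{2\pi(k-c_*)^{3/2}(1-k^{-2})}\,N^{-2}k^{-2N}\Big(1+O(N^{-1})\Big)\right).
$$

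Next, I would simplify using the identity $C_v^2-S_v^2=1$: this gives $k-c_* = S_hS_v+S_h/S_v = S_hC_v^2/S_v$, hence $\sqrt{k/(k-c_*)}=S_v/C_v$ and $c_*k^{1/2}/(k-c_*)^{3/2}=-S_v/C_v^3$. Substituting back, the constant term collapses to $(C_v/S_v)(S_v/C_v)=1$ and the coefficient of $N^{-2}k^{-2N}$ becomes $1/\bigl(2\pi C_v^2(1-k^{-2})\bigr)$, which establishes the first displayed identity of the corollary. For the second, multiply by the expansion
$$
D_N[\widehat{\phi}]=(1-k^{-2})^{1/4}\left(1+\frac{1}{2\pi(1-k^{-2})^2}N^{-2}k^{-2N-2}\Big(1+O(N^{-1})\Big)\right),
$$
add the two $N^{-2}k^{-2N}$ correction coefficients (all cross-terms fall into the $O(N^{-3}k^{-2N})$ error), and rewrite $k^{-2}/(1-k^{-2})=1/(k^2-1)$ to identify the combined prefactor as $\frac{1}{2\pi(1-k^{-2})}\bigl(\frac{1}{C_v^2}+\frac{1}{k^2-1}\bigr)$.

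There is no genuine analytic obstacle once the deeper results of Theorem~\ref{thm.3.4} and the preceding $D_N[\widehat{\phi}]$ asymptotics are in hand; the only place some care is required is the trigonometric-hyperbolic simplification in the middle paragraph, where $k-c_*=S_hC_v^2/S_v$ is the key identity enabling both the cancellation of the constant term to $1$ and the clean reduction of the subleading coefficient to $1/C_v^2$.
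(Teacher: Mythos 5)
Your proof is correct and follows the same route as the paper: invoke Theorem~\ref{thm.3.4} (with $c=c_*$, noting $c_*<0<k$), Proposition~\ref{p.Bordered.Cramer}, and the preceding asymptotics of $D_N[\widehat{\phi}]$, then simplify via $k-c_*=S_hC_v^2/S_v$, which is precisely the identity the paper records as $k^{1/2}/(k-c_*)^{1/2}=S_v/C_v=1/r$ and $-c_*/(k-c_*)=1/C_v^2$. You have merely made explicit the ``straightforward computation'' the paper leaves to the reader.
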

\begin{proof}
	We notice that 
	$$
	\frac{k^{1/2}}{(k-c_*)^{1/2}}= \frac{S_v}{C_v}=\frac{1}{r},
	\qquad 
	\frac{-c_*}{k-c_*}=\frac{1}{C_v^2}.
	$$
	The rest is straightforward computation.
\end{proof}

With this computation we have proved the final two theorems stated in the introduction.


\section{Numerical Verifications}\label{Sec:Num}

In this section, we assume that $\phi \equiv \widehat{\phi}$, the symbol for the Ising model defined by \eqref{Isingphi}.
To fix the problem, we set $\frac{J_h}{k_B}=\frac{1}{2}$ and $\frac{J_v}{k_B}=\frac{1}{4}$ for the $J_h>J_v$ case
and $\frac{J_h}{k_B}=\frac{1}{4}$ and $\frac{J_v}{k_B}=\frac{1}{2}$ for the $J_h<J_v$ case.
Solving  \eqref{Tcr} numerically in both cases, we get $T_c=0.820508964964\cdots$. We thus fix $T=\frac{4}{5}<T_c$  in the following numerical verifications, which ensures that $\widehat{\phi}$ is of Szeg{\H o} type.
Then, we have $k=\sinh(\frac{2 J_h}{k_B T}) \sinh(\frac{2 J_v}{k_B T}) \approx 1.067666675$,
which is, as expected, bigger than $1$. In fact, the reason why we choose $T$ so close to $T_c$ is that the error terms often have factors of the form $N^{-m} k^{-n N}$, so a $k$ slightly greater than $1$ will guarantee the results being not so small for relatively large $N$.

For computing $D_N^B[\phi; \psi]$ and $D_N[\phi]$, from \eqref{phi_n} 
we first compute $\phi_j$, $j=1-N, \cdots, N-1$, and $\psi_j$, $j=0, \cdots N-1$, by the trapezoidal rule up to precision of more than $100$ digits (which is far more than  needed in the following calculations). Then we compute $D_N^B[\phi; \psi]$  and $D_N[\phi]$ directly from \eqref{btd} and \eqref{ToeplitzDet} respectively.

\subsection{Verification of  \eqref{magnetization2}}
Let us define
$$G_N^A:=\left(\frac{D_N^B[\widehat{\phi};\widehat{\psi}]}{\sqrt[4]{1-k^{-2}}} -1\right)\frac{2 \pi (1-k^{-2}) N^2 k^{2 N}}{C_v^{-2}+(k^2-1)^{-1}}.$$
Then formula \eqref{magnetization2} is equivalent to 
\begin{eqnarray}
G_N^A =1+O(N^{-1}). \label{GNAasymp}
\end{eqnarray}
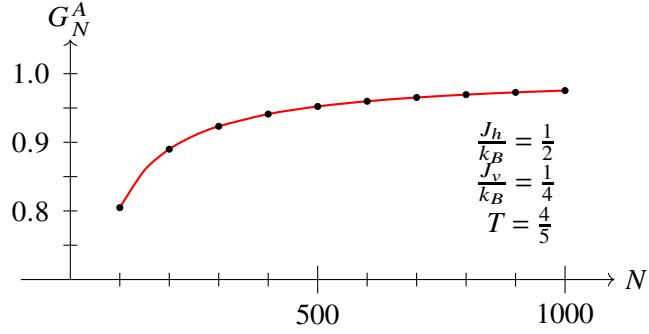
\begin{figure}[h]
	\centering
	\begin{tikzpicture}[scale=1.3]
	\def\xsl{0.005}
	\def\ysl{7}
	\def\FigureDataA{(100*\xsl,0.804766962976850482*\ysl)(150*\xsl, 0.859337718876734363855*\ysl)(200*\xsl,0.8899847129739894345*\ysl)
		(250*\xsl,0.90963999962063914586*\ysl)(300*\xsl,0.923326067623642857*\ysl)(400*\xsl,0.94114215791688982*\ysl)
		(500*\xsl,0.95223496735245265*\ysl)(600*\xsl,0.95980774426959944*\ysl)
		(700*\xsl,0.9653070958461646255*\ysl)(800*\xsl,0.969482260361857755*\ysl)
		(900*\xsl,0.972760221732644385*\ysl)(1000*\xsl,0.9754021736255736016762*\ysl)
	}
	
	\def\FigureDataAA{(100*\xsl,0.804766962976850482*\ysl), (200*\xsl,0.8899847129739894345*\ysl), 
		(300*\xsl,0.923326067623642857*\ysl), (400*\xsl,0.94114215791688982*\ysl),
		(500*\xsl,0.95223496735245265*\ysl), (600*\xsl,0.95980774426959944*\ysl),
		(700*\xsl,0.9653070958461646255*\ysl), (800*\xsl,0.969482260361857755*\ysl),
		(900*\xsl,0.972760221732644385*\ysl), (1000*\xsl,0.9754021736255736016762*\ysl)
	}
	
	\draw[->] (-100*\xsl,0.7*\ysl)--(1100*\xsl,0.7*\ysl) node[right] {$N$};
	\draw[->] (0, 0.7*\ysl)--(0,1.05*\ysl);
	\foreach \x in { 100*\xsl, 200*\xsl, 300*\xsl,400*\xsl, 600*\xsl, 700*\xsl,800*\xsl,900*\xsl } \draw (\x,0.69*\ysl)--(\x,0.71*\ysl);
	\draw (500*\xsl,0.68*\ysl)--(500*\xsl,0.72*\ysl); \draw (1000*\xsl,0.68*\ysl)--(1000*\xsl,0.72*\ysl);
	\foreach \y in {0.75*\ysl,0.8*\ysl, 0.85*\ysl, 0.9*\ysl, 0.95*\ysl, 1.0*\ysl} \draw (2 pt, \y)--(-2pt, \y);
	\node at(500*\xsl,0.65*\ysl){$500$};  \node at(1000*\xsl,0.65*\ysl){$1000$};
	\node at(-80*\xsl,0.8*\ysl){$0.8$}; \node at(-80*\xsl,0.9*\ysl){$0.9$};  \node at(-80*\xsl,1.0*\ysl){$1.0$};
	\draw[thick,color=red] plot[smooth] coordinates {\FigureDataA};
	\foreach \x in \FigureDataAA \draw[thick,fill] \x circle (5*\xsl);
	\node at(0*\xsl,1.08*\ysl){$G_N^A$};
	\node at(900*\xsl,0.9*\ysl){$\frac{J_h}{k_B}=\frac{1}{2}$};
	\node at(900*\xsl,0.84*\ysl){$\frac{J_v}{k_B}=\frac{1}{4}$};
	\node at(910*\xsl,0.78*\ysl){$T=\frac{4}{5}$};
	\end{tikzpicture}
	\caption{Plot of $G_N^A=(\frac{D_N^B[\widehat{\phi};\widehat{\psi}]}{\sqrt[4]{1-k^{-2}}} -1)\frac{2 \pi (1-k^{-2}) N^2 k^{2 N}}{C_v^{-2}+(k^2-1)^{-1}} $.
		Fitting the $10$ points by $\sum\limits_{i=0}^9 g_{-i} N^{-i}$, we get $g_0 \approx 1-5.643811*10^{-11}$, $g_{-1} \approx -25.367279$, $g_{-2} \approx 798.967$, $g_{-3} \approx -30863.1$, $g_{-4}\approx 1.42051 \times 10^6$, $g_5\approx -7.55845\times 10^7$, $g_{-6} \approx 4.41553 \times 10^9 $, $g_{-7}\approx -2.51667\times 10^{11}$, $g_{-8} \approx 1.12018 \times 10^{13}$, $g_{-9}\approx -2.59581\times 10^{14}$. Appending more points such as adding $G_N^A$ for $N=250$, $g_i$ will change 
		with an obvious pattern: the smaller $i$ is, the smaller  percent change is. For example, the change of $g_0$ is
		smaller than $10^{-10}$, and $g_{-6}$ will change to $ 4.55987 \times 10^9$ while $g_{-9}$ will change to $-7.08571\times 10^{14} $.}
	\label{Fig 1}
\end{figure}

Figure \ref{Fig 1} is the numerical result for a case $J_h>J_v$ with $\frac{J_h}{k_B}=\frac{1}{2}$ and $\frac{J_v}{k_B}=\frac{1}{4}$. 
$g_0=1$ and a finite fixed $g_{-1}$ show  asymptotics (\ref{GNAasymp}) is indeed right. A more careful look into the numerical values of $g_i$ suggests that $\sum g_{-i} N^{-i}$ is an asymptotic series.
 
In principle, $G_N^A$ is only defined on integer $N$. The red line is a smooth link of the ten  points obtained by numerical experiments.
The numerical values of $G_N^A$ for other integer $N$ will be visually indistinguishable from the points on the red line.
	\begin{figure}[h]
	\centering
	\begin{tikzpicture}[scale=1.3]
	\def\xsl{0.005}
	\def\ysl{7}
	\def\FigureDataA{(100*\xsl,0.79879711180900302952*\ysl)(150*\xsl,0.854958116774606742*\ysl)(200*\xsl,0.886525934060932323*\ysl)
		(250*\xsl,0.906781868800183664738*\ysl)(300*\xsl,0.920890720636784965297*\ysl)(400*\xsl,0.9392626803698164556*\ysl)
		(500*\xsl,0.9507046930546251*\ysl)(600*\xsl,0.95851721521431047*\ysl)
		(700*\xsl,0.964191353337862185*\ysl)(800*\xsl,0.9684995998681782*\ysl)
		(900*\xsl,0.97188227561001868682*\ysl)(1000*\xsl,0.9746087716916669584854*\ysl)
	}
	
	\def\FigureDataAA{(100*\xsl,0.79879711180900302952*\ysl),(200*\xsl,0.886525934060932323*\ysl),
		(300*\xsl,0.920890720636784965297*\ysl),(400*\xsl,0.9392626803698164556*\ysl),
		(500*\xsl,0.9507046930546251*\ysl),(600*\xsl,0.95851721521431047*\ysl),
		(700*\xsl,0.964191353337862185*\ysl),(800*\xsl,0.9684995998681782*\ysl),
		(900*\xsl,0.97188227561001868682*\ysl),(1000*\xsl,0.9746087716916669584854*\ysl)}
	
	\draw[->] (-100*\xsl,0.7*\ysl)--(1100*\xsl,0.7*\ysl) node[right] {$N$};
	\draw[->] (0, 0.7*\ysl)--(0,1.05*\ysl);
	\foreach \x in { 100*\xsl, 200*\xsl, 300*\xsl,400*\xsl, 600*\xsl, 700*\xsl,800*\xsl,900*\xsl } \draw (\x,0.69*\ysl)--(\x,0.71*\ysl);
	\draw (500*\xsl,0.68*\ysl)--(500*\xsl,0.72*\ysl); \draw (1000*\xsl,0.68*\ysl)--(1000*\xsl,0.72*\ysl);
	\foreach \y in {0.75*\ysl,0.8*\ysl, 0.85*\ysl, 0.9*\ysl, 0.95*\ysl, 1.0*\ysl} \draw (2 pt, \y)--(-2pt, \y);
	\node at(500*\xsl,0.65*\ysl){$500$};  \node at(1000*\xsl,0.65*\ysl){$1000$};
	\node at(-80*\xsl,0.8*\ysl){$0.8$}; \node at(-80*\xsl,0.9*\ysl){$0.9$};  \node at(-80*\xsl,1.0*\ysl){$1.0$};
	\draw[thick,color=red] plot[smooth] coordinates {\FigureDataA};
	\foreach \x in \FigureDataAA \draw[thick,fill] \x circle (5*\xsl);
	\node at(0*\xsl,1.08*\ysl){$G_N^A$};
	\node at(900*\xsl,0.9*\ysl){$\frac{J_h}{k_B}=\frac{1}{4}$};
	\node at(900*\xsl,0.84*\ysl){$\frac{J_v}{k_B}=\frac{1}{2}$};
	\node at(910*\xsl,0.78*\ysl){$T=\frac{4}{5}$}; \label{Fig 2222}
	\end{tikzpicture}
	\caption{Plot of $G_N^A=(\frac{D_N^B[\widehat{\phi};\widehat{\psi}]}{\sqrt[4]{1-k^{-2}}} -1)\frac{2 \pi (1-k^{-2}) N^2 k^{2 N}}{C_v^{-2}+(k^2-1)^{-1}} $.
		Fitting the $10$ points by $\sum\limits_{i=0}^9 g_{-i} N^{-i}$, we get $g_0 \approx 1-5.92566\times10^{-11}$, $g_{-1} \approx -26.191288$, $g_{-2} \approx 830.85573$, $g_{-3} \approx -32206.7$,  $g_{-4} \approx 1.48546\times 10^6$, 
		$g_{-5} \approx-7.91507\times 10^7$, $g_{-6} \approx 4.62828\times 10^9$, $g_{-7} \approx-2.63956 \times 10^{11}$,
		$g_{-8} \approx 1.1753\times 10^{13}$, $g_{-9} \approx-2.72407 \times 10^{14}$.}
	\label{Fig 2}
\end{figure}
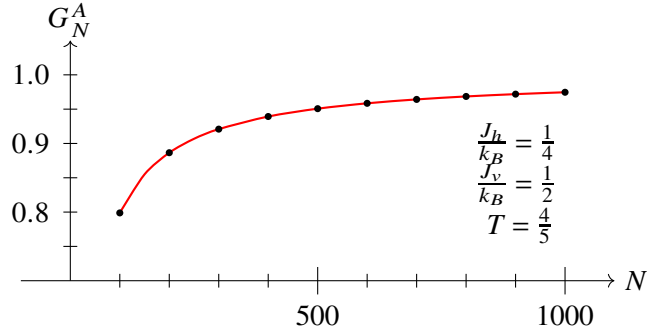

Figure \ref{Fig 2} is the numerical result for a case $J_h<J_v$ with $\frac{J_h}{k_B}=\frac{1}{4}$ and $\frac{J_v}{k_B}=\frac{1}{2}$. Numerical values of  $g_{-i}$ also show the series $\sum g_{-i}N^{-i}$ is an asymptotic one.
$g_0=1$ and a finite fixed $g_{-1}$ show (\ref{GNAasymp}) is also true for this case.

\subsection{The sensitivity for the case  $\psi=\widehat{\phi} \frac{z}{z-c}$ with  $c$ near $1$ }\label{sec:4:2}
If $c<1$, $F_N[\widehat{\phi};\psi]$ is given by Theorem \ref{thm.3.4}. 
Also recall from \eqref{Bordered.Cramer} and \eqref{F.N} that $$F_N[\phi; \psi]=\frac{D_N^B[\phi;\psi]}{D_N[\phi]}.$$
So Theorem \ref{thm.3.4} means that
\begin{equation}
	G_N^B := -\left( \frac{D_N^B[\phi;\psi]}{D_N[\phi]}-\frac{k^{1/2}}{(k-c)^{1/2}} \right)\frac{2 \pi (k-c)^{3/2}(1-k^{-2})}{c k^{1/2}} N^2 k^{2N}=1+O(N^{-1}).
\end{equation} 
Figure 3 is the plot of $G_N^B$ with $c=\frac{975}{1000}$.
$g_0=1$ and a finite fixed $g_{-1}$ verify  Theorem \ref{thm.3.4} numerically.
\begin{figure}[h]
	\centering
	\begin{tikzpicture}[scale=1.3]
	\def\xsl{0.005}
	\def\ysl{7}
	\def\FigureDataA{(100*\xsl,0.78967002252314381*\ysl)(150*\xsl,0.8478673175762918*\ysl)(200*\xsl,0.8807387068223562*\ysl)
		(250*\xsl,0.90189598352968545*\ysl)	(300*\xsl,0.91666415829871734393*\ysl)(400*\xsl,0.935935160083162367*\ysl)
		(500*\xsl,0.94796114535936276*\ysl)(600*\xsl,0.956183373653004*\ysl)
		(700*\xsl,0.9621607919049937*\ysl)(800*\xsl,0.9667025803731339*\ysl)
		(900*\xsl,0.970270627907744944*\ysl)(1000*\xsl,0.973147833033030*\ysl)
	}
	
	\def\FigureDataAA{(100*\xsl,0.78967002252314381*\ysl),(200*\xsl,0.8807387068223562*\ysl),
		(300*\xsl,0.91666415829871734393*\ysl),(400*\xsl,0.935935160083162367*\ysl),
		(500*\xsl,0.94796114535936276*\ysl),(600*\xsl,0.956183373653004*\ysl),
		(700*\xsl,0.9621607919049937*\ysl),(800*\xsl,0.9667025803731339*\ysl),
		(900*\xsl,0.970270627907744944*\ysl),(1000*\xsl,0.973147833033030*\ysl)}
	
	\draw[->] (-100*\xsl,0.7*\ysl)--(1100*\xsl,0.7*\ysl) node[right] {$N$};
	\draw[->] (0, 0.7*\ysl)--(0,1.05*\ysl);
	\foreach \x in { 100*\xsl, 200*\xsl, 300*\xsl,400*\xsl, 600*\xsl, 700*\xsl,800*\xsl,900*\xsl } \draw (\x,0.69*\ysl)--(\x,0.71*\ysl);
	\draw (500*\xsl,0.68*\ysl)--(500*\xsl,0.72*\ysl); \draw (1000*\xsl,0.68*\ysl)--(1000*\xsl,0.72*\ysl);
	\foreach \y in {0.75*\ysl,0.8*\ysl, 0.85*\ysl, 0.9*\ysl, 0.95*\ysl, 1.0*\ysl} \draw (2 pt, \y)--(-2pt, \y);
	\node at(500*\xsl,0.65*\ysl){$500$};  \node at(1000*\xsl,0.65*\ysl){$1000$};
	\node at(-80*\xsl,0.8*\ysl){$0.8$}; \node at(-80*\xsl,0.9*\ysl){$0.9$};  \node at(-80*\xsl,1.0*\ysl){$1.0$};
	\draw[thick,color=red] plot[smooth] coordinates {\FigureDataA};
	\foreach \x in \FigureDataAA \draw[thick,fill] \x circle (5*\xsl);
	\node at(0*\xsl,1.08*\ysl){$G_N^B$};
	\node at(900*\xsl,0.93*\ysl){$\frac{J_h}{k_B}=\frac{1}{2}$};
	\node at(900*\xsl,0.87*\ysl){$\frac{J_v}{k_B}=\frac{1}{4}$};
	\node at(910*\xsl,0.81*\ysl){$T=\frac{4}{5}$};
	\node at(910*\xsl,0.75*\ysl){$c=\frac{975}{1000}$};
	\end{tikzpicture}
	\caption{Plot of $G_N^B =-\left( \frac{D_N^B[\widehat{\phi};\widehat{\psi}]}{D_N[\widehat{\phi}]}-\frac{k^{1/2}}{(k-c)^{1/2}} \right)\frac{2 \pi (k-c)^{3/2}(1-k^{-2})}{c k^{1/2}} N^2 k^{2N}$.
		Fitting the $10$ points by $\sum\limits_{i=0}^9 g_{-i} N^{-i}$, we get $g_0 \approx 1-1.5286\times 10^{-10}$,
		$g_{-1} \approx -27.7534$,  $g_{-2} \approx 938.882$,  $g_{-3} \approx -39575.1$, 
		$g_{-4} \approx 2.02757\times 10^6$,   $g_{-5} \approx -1.22089\times 10^8 $,
		$g_{-6} \approx 8.089\times 10^9 $,   $g_{-7} \approx -5.12614 \times 10^{11} $,
		$g_{-8} \approx 2.45747 \times 10^{13}$,  $g_{-9} \approx -5.96334 \times 10^{14}$.}
	\label{Fig 3}
\end{figure}
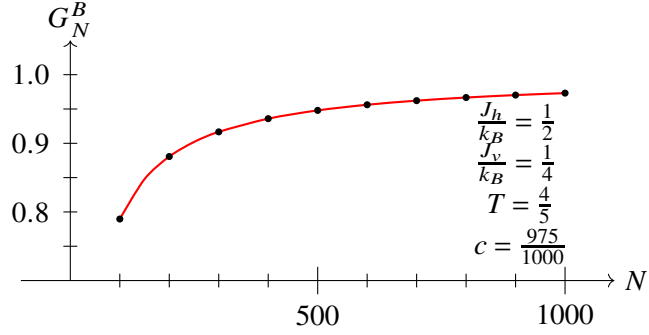

It is not surprising that Figures \ref{Fig 1}, \ref{Fig 2} and \ref{Fig 3} look so similar 
since they all have the same $g_0=1$ and similar $g_{-2}$ and $g_{-3}$.

Now, let us consider the case $c>1$.
For $c>1$, we recall that $p_+(z)=\frac{z}{z-c}$. Now, let us compute $F_N[\psi;\widehat{\phi}]$ by Corollary \ref{c.32}. First, $[p_+]_0=p_+(0)=0$.
Next,  $\widehat{\phi}_- p_+ =\sqrt{1-k^{-1}z^{-1} }\frac{z}{z-c}$.
Therefore,
\begin{eqnarray}
[\widehat{\phi}_- p_+]_n= \int_{\mathbb{T}} \sqrt{1-\frac{1}{k z} } \frac{z^{-n}}{z-c}\frac{dz}{2 \pi i}
=-\sqrt{1-\frac{1}{k c}} c^{-n}, \quad \text{for } n>1. \label{phi-pN}
\end{eqnarray}
Recall that
$$[\lambda^{-1}]_{-n}=\int_{\mathbb{T}} \frac{z^n}{\sqrt{(1-k^{-1}z^{-1}) (1-k^{-1}z)}}\frac{dz}{2 \pi i z}
=\frac{1}{\pi} \int_0^{\frac{1}{k}} \frac{z^{n-1}}{\sqrt{(k^{-1}z^{-1}-1) (1-k^{-1}z)}} dz.$$
We get
\begin{eqnarray}
\sum_{n=N}^\infty -[\lambda^{-1}]_{-n} [\widehat{\phi}_-p_+]_n
&=&\frac{1}{\pi} \sqrt{1-\frac{1}{k c}}\int_0^{\frac{1}{k}} \frac{\sum \limits_{n=N}^\infty z^{n-1} c^{-n}}{\sqrt{(k^{-1}z^{-1}-1) (1-k^{-1}z)}} dz \nonumber\\
&=& \frac{1}{\pi} \sqrt{1-\frac{1}{k c}}\int_0^{\frac{1}{k}} \frac{\left(\frac{z}{c}\right)^{N-1} \frac{1}{c}}{\left(1-\frac{z}{c} \right)\sqrt{(k^{-1}z^{-1}-1) (1-k^{-1}z)}} dz \nonumber\\
&=&\frac{1}{\pi} c^{-N} k^{-N} \sqrt{1-\frac{1}{k c}} \int_0^1 \frac{t^{N-\frac{1}{2}}}{\left(1-\frac{t}{k c} \right) \sqrt{(1-t)(1-k^{-2}t)}} dt \label{sensitivity-1}\\
&\approx&\frac{1}{\sqrt{\pi N}} \frac{c^{-N} k^{-N}}{\sqrt{1-\frac{1}{k c}} \sqrt{1-k^{-2}}} . \label{sensitivity-2}
\end{eqnarray}
(\ref{sensitivity-1}) is the exact value of the second term in Corollary \ref{c.32}. Actually, we do not use  (\ref{sensitivity-2}) since (\ref{sensitivity-1}) itself can be calculated directly.
Let us define $\Delta_N$ as
\begin{equation}
	\Delta_N := \frac{1}{\pi} c^{-N} k^{-N} \sqrt{1-\frac{1}{k c}} \int_0^1 \frac{t^{N-\frac{1}{2}}}{\left(1-\frac{t}{k c} \right) \sqrt{(1-t)(1-k^{-2}t)}} dt-\frac{D_N^B[\widehat{\phi};\psi]}{D_N[\widehat{\phi}]}.
\end{equation}
Then $\Delta_N$ is the negative of the third term in Corollary \ref{c.32}.
\begin{figure}[h]
	\centering
	\begin{tikzpicture}[scale=1.2]
	\def\xsl{0.005}
	\def\ysl{0.01}
	\def\FigureDataA{(100*\xsl,-29.988793853925442959*\ysl)(200*\xsl,-53.69775187059988384*\ysl)
		(300*\xsl,-76.7738681628204375556*\ysl)(400*\xsl,-99.579329829395673*\ysl)
		(500*\xsl,-122.2333924108821*\ysl)(600*\xsl,-144.790502755500954*\ysl)
		(700*\xsl,-167.2801517742679508*\ysl)(800*\xsl,-189.7201289185452199*\ysl)
		(900*\xsl,-212.121997560115069*\ysl)(1000*\xsl,-234.4936999058024675655*\ysl)
	}
	
	\def\FigureDataAA{(100*\xsl,-29.988793853925442959*\ysl),(200*\xsl,-53.69775187059988384*\ysl),
		(300*\xsl,-76.7738681628204375556*\ysl),(400*\xsl,-99.579329829395673*\ysl),
		(500*\xsl,-122.2333924108821*\ysl),(600*\xsl,-144.790502755500954*\ysl),
		(700*\xsl,-167.2801517742679508*\ysl),(800*\xsl,-189.7201289185452199*\ysl),
		(900*\xsl,-212.121997560115069*\ysl),(1000*\xsl,-234.4936999058024675655*\ysl)
	}
	
	\draw[->] (-100*\xsl,0*\ysl)--(1100*\xsl,0*\ysl) node[right] {$N$};
	\draw[->] (0, -250*\ysl)--(0,50*\ysl);
	\foreach \x in { 100*\xsl, 200*\xsl, 300*\xsl,400*\xsl, 600*\xsl, 700*\xsl,800*\xsl,900*\xsl } \draw (\x,-5*\ysl)--(\x,5*\ysl);
	\draw (500*\xsl,-10*\ysl)--(500*\xsl,10*\ysl); \draw (1000*\xsl,-10*\ysl)--(1000*\xsl,10*\ysl);
	\foreach \y in {-50*\ysl,-100*\ysl, -150*\ysl, -200*\ysl} \draw (2 pt, \y)--(-2pt, \y);
	\node at(500*\xsl,30*\ysl){$500$};  \node at(1000*\xsl,30*\ysl){$1000$};
	\node at(-80*\xsl,-100*\ysl){$-100$}; \node at(-80*\xsl,-200*\ysl){$-200$}; 
	\draw[thick,color=red] plot[smooth] coordinates {\FigureDataA};
	\foreach \x in \FigureDataAA \draw[thick,fill] \x circle (5*\xsl);
	\node at(0*\xsl,75*\ysl){$\ln \Delta_N$};
	\node at(950*\xsl,-40*\ysl){$\frac{J_h}{k_B}=\frac{1}{2}$};
	\node at(950*\xsl,-80*\ysl){$\frac{J_v}{k_B}=\frac{1}{4}$};
	\node at(960*\xsl,-120*\ysl){$T=\frac{4}{5}$};
	\node at(940*\xsl,-160*\ysl){$c=\frac{1025}{1000}$};
	\end{tikzpicture}
	\caption{Plot of $\ln \Delta_N$, where
		$ \Delta_N= \frac{1}{\pi} c^{-N} k^{-N} \sqrt{1-\frac{1}{k c}} \int_0^1 \frac{t^{N-\frac{1}{2}}}{\left(1-\frac{t}{k c} \right) \sqrt{(1-t)(1-k^{-2}t)}} dt-\frac{D_N^B[\widehat{\phi};\psi]}{D_N[\widehat{\phi}]}$.
		Fitting the $10$ points by $g_1 N+g_L \ln N +\sum \limits_{i=0}^7 g_{-i} N^{-i} $, we get $g_1 \approx -0.2211193827735$,  $g_L \approx -2.499999847$ and $g_0 \approx 3.928447182$.
		Notice that the numerical $g_1$ and $g_L$ obtained by the fitting of the $10$ points are very close to $3 \ln k^{-1}+ \ln c^{-1} \approx -0.22111938274235$ and $-\frac{5}{2}$. This suggests that $g_1=3 \ln k^{-1}+ \ln c^{-1}$ and $g_L=-\frac{5}{2}$ are exact.}
	\label{Fig 4}
\end{figure}

Figure \ref{Fig 4} is the plot of $\Delta_N$ with $c=\frac{1025}{1000}$.
The numerical results mean 
\begin{equation}
	\frac{D_N^B[\widehat{\phi};\psi]}{D_N[\widehat{\phi}]}=\frac{1}{\pi} c^{-N} k^{-N} \sqrt{1-\frac{1}{k c}} \int_0^1 \frac{t^{N-\frac{1}{2}}}{\left(1-\frac{t}{k c} \right) \sqrt{(1-t)(1-k^{-2}t)}} dt+O(N^{-\frac{5}{2}} c^{-N} k^{-3 N}),
\end{equation}
in this case.


\section{Appendices}\label{Appendices}

\subsection{Solution of the Riemann-Hilbert problem for BOPUC with Szeg{\H o}-type symbols}\label{Appendix Y-RHP}

The following Riemann-Hilbert problem for BOPUC is due to J.Baik, P.Deift and K.Johansson.

\begin{itemize}
\item  \textbf{RH-X1} \qquad $X:\C\setminus \T \to \C^{2\times2}$ is analytic,
\item \textbf{RH-X2} \qquad  The limits of $X(\ze)$ as $\ze$ tends to $z \in \T $ from the inside and outside of the unit circle exist, and are denoted $X_{\pm}(z)$ respectively and are related by

\begin{equation}
X_+(z)=X_-(z)\begin{pmatrix}
1 & z^{-n}\phi(z) \\
0 & 1
\end{pmatrix}, \qquad  z \in \T,
\end{equation}

\item \textbf{RH-X3} \qquad  As $z \to \infty$

\begin{equation}
X(z)=\big( I + O(z^{-1}) \big) z^{n \sigma_3},    
\end{equation}
\end{itemize}

(see \cite{D},\cite{DIK},\cite{CIK}). Below we show the standard steepest descent analysis to asymptotically solve this problem, in the case where $\phi$ is a symbol analytic in a neighborhood of the unit circle and with zero winding number. Note that the symbol $\phi$ associated to the 2D Ising model in the low temperature regime enjoys these properties.  We first normalize the behavior at $\infty$ by defining \begin{equation}\label{40}
T(z;n) := \begin{cases}
X(z;n)z^{-n\sigma_3}, & |z|>1, \\
X(z;n), & |z|<1.
\end{cases}
\end{equation}
The function $T$ defined above satisfies the following RH problem

\begin{itemize}
\item  \textbf{RH-T1} \qquad $T(\cdot;n) :\C\setminus \T \to \C^{2\times2}$ is analytic,
\item  \textbf{RH-T2} \qquad    $T_{+}(z;n)=T_{-}(z;n)\begin{pmatrix}
z^n & \phi(z) \\
0 & z^{-n}
\end{pmatrix}, \qquad  z \in \T$,
\item  \textbf{RH-T3} \qquad   $T(z;n)=I+O(1/z), \qquad z \to \infty,$
\end{itemize}
So $T$ has a highly-oscillatory jump matrix as $n \to \infty$. The next transformation yields a Riemann Hilbert problem, normalized at infinity, having an exponentially decaying jump matrix on the \textit{lenses}. Note that we have the following factorization of the jump matrix of the $T$-RHP: \begin{equation}\label{Appendix Pure Toeplitz G_T factorization}
\begin{pmatrix}
z^n & \phi(z) \\
0 & z^{-n}
\end{pmatrix} = \begin{pmatrix}
1 & 0 \\
z^{-n}\phi(z)^{-1} & 1
\end{pmatrix}\begin{pmatrix}
0 & \phi(z) \\
-\phi(z)^{-1} & 0
\end{pmatrix}\begin{pmatrix}
1 & 0 \\
z^{n}\phi(z)^{-1} & 1
\end{pmatrix}  \equiv J_{0}(z;n)J^{(\infty)}(z)J_{1}(z;n).
\end{equation}
Now, we define the following function : \begin{equation}\label{42}
S(z;n):=\begin{cases}
T(z;n)J^{-1}_{1}(z;n), & z \in \Om_1, \\
T(z;n)J_{0}(z;n), & z \in \Om_2, \\
T(z;n), & z\in \Om_0\cup \Om_{\infty}.
\end{cases}
\end{equation}
Also introduce the following function on $\Ga_S := \Ga_0 \cup \Ga_1 \cup \T$  \begin{equation}\label{43}
J_{ S}(z;n)=\begin{cases}
J_{1}(z;n), & z \in \Ga_0, \\
J^{(\infty)}(z), & z \in \T, \\
J_{0}(z;n), & z \in \Ga_1. \\
\end{cases}
\end{equation} 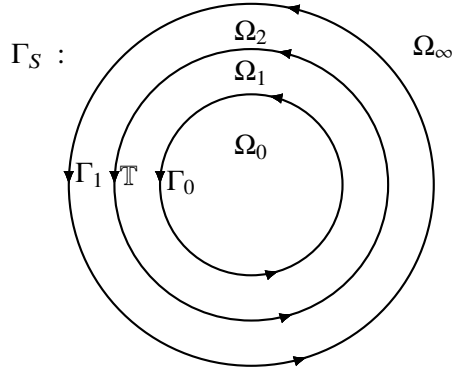
\begin{figure}
\centering

\begin{tikzpicture}[scale=0.4]

\draw[ ->-=0.22,->-=0.5,->-=0.8,thick] (-3,0) circle (4.5cm);

\draw[ ->-=0.22,->-=0.5,->-=0.8,thick] (-3,0) circle (3cm);

\draw[ ->-=0.22,->-=0.5,->-=0.8,thick] (-3,0) circle (6cm);



\node at (-3,4.45) [below] {$\Om_1$};

\node at (-3,5.8) [below] {$\Om_2$};

\node at (-8.3,-0.4) [above] {$\Ga_1$};



\node at (-6.4,0.3) [left] {$\T$};

\node at (-5.3,0.8) [below] {$\Ga_0$};

\node at (-3,2) [below] {$\Om_0$};



\node at (3,5.3) [below] {$\Om_{\infty}$};

\node at (-10,5) [below] {$\Ga_S \ :$};
\end{tikzpicture}

\caption{Opening of lenses: the jump contour for the $S$-RHP.}
\label{S_contour}
\end{figure}
We have the following Riemann-Hilbert problem for $S(z;n)$

\begin{itemize}
\item \textbf{RH-S1} \qquad  $ S(\cdot;n):\C\setminus \Ga_S \to \C^{2\times2}$ is analytic.
\item \textbf{RH-S2} \qquad  $S_{+}(z;n)=S_{-}(z;n) J_{S}(z;n), \qquad  z \in \Ga_S.$
\item \textbf{RH-S3} \qquad  $ S(z;n)=I+O(1/z), \qquad \text{as} \ z \to \infty$.
\end{itemize}
Note that the matrices $J_0(z;n)$ and $J_1(z;n)$ tend to the identity matrix uniformly on their respective contours, exponentially fast as $n \to \infty$.
\subsubsection{Global parametrix RHP}

We are looking for a piecewise analytic function $P^{(\infty)}(z): \C \setminus \T : \to \C^{2\times2}$ such that \begin{itemize}
\item \textbf{RH-Global1} \qquad    $P^{(\infty)}$ is holomorphic in $\C \setminus \T$.

\item \textbf{RH-Global2} \qquad for $z\in \T$ we have \begin{equation}\label{44}
P_+^{(\infty)}(z)=P_-^{(\infty)}(z)  \begin{pmatrix}
0 & \phi(z) \\
- \phi^{-1}(z) & 0
\end{pmatrix}.
\end{equation}

\item \textbf{RH-Global3} \qquad   $P^{(\infty)}(z)=I+O(1/z),  \qquad  \text{as} \ z \to \infty$.

\end{itemize}
We can find a piecewise analytic function $\al$ which solves the following scalar multiplicative Riemann-Hilbert problem \begin{equation}\label{47}
\al_+(z)=\al_-(z)\phi(z) \qquad  z \in \T.
\end{equation}
By Plemelj-Sokhotski formula we have \begin{equation}\label{45}
\al(z)=\exp \left[ \frac{1}{2 \pi i } \int_{\T} \frac{\ln(\phi(\tau))}{\tau-z}d\tau \right],
\end{equation}
Now, using (\ref{47}) we have the following factorization \begin{equation}\label{46}
\begin{pmatrix}
0 & \phi(z) \\
- \phi^{-1}(z) & 0
\end{pmatrix}= \begin{pmatrix}
\al_-^{-1}(z) & 0 \\
0 & \al_-(z)
\end{pmatrix} \begin{pmatrix}
0 & 1 \\
- 1 & 0
\end{pmatrix} \begin{pmatrix}
\al^{-1}_+(z) & 0 \\
0 & \al_+(z)
\end{pmatrix}.
\end{equation}
So, the function \begin{equation}\label{48}
P^{(\infty)}(z) :=\begin{cases}
\begin{pmatrix}
0 & \al(z) \\
-\al^{-1}(z) & 0
\end{pmatrix},  & |z|<1, \\
\begin{pmatrix}
\al(z) & 0 \\
0 & \al^{-1}(z)
\end{pmatrix}, & |z|>1,
\end{cases}
\end{equation}
satisfies (\ref{44}). Also, by the properties of the Cauchy integral, $P^{(\infty)}(z)$ is holomorphic in $\C \setminus \T$. Moreover, $\al(z)=1+O(z^{-1})$, as $z\to \infty$ and hence \begin{equation}\label{49}
P^{(\infty)}(z)=I+O(1/z), \qquad  z \to \infty.
\end{equation}
Therefore $P^{(\infty)}$ given by (\ref{48}) is the unique solution of the global parametrix Riemann-Hilbert problem.
\subsubsection{Small-norm RHP}

Let us consider the ratio

\begin{equation}
R(z;n):= S(z;n) \left[ P^{(\infty)}(z) \right]^{-1}.
\end{equation}
We have the following Riemann-Hilbert problem for $R(z;n)$

\begin{itemize}
\item \textbf{RH-R1} \qquad $ R$ is holomorphic in $\C\setminus (\Ga_0 \cup \Ga_1)$.
\item \textbf{RH-R2} \qquad    $R_{+}(z;n)=R_{-}(z;n) J_{R}(z;n), \qquad  z \in \Ga_0 \cup \Ga_1 =: \Sigma_R$,
\item \textbf{RH-R3} \qquad $ R(z;n)=I+O(1/z) \qquad \text{as} \ z \to \infty$.
\end{itemize}
This Riemann Hilbert problem is solvable for large $n$ (\cite{DKM+99a},\cite{DKM+99b})  and $R(z;n)$ can be written as
\begin{equation}\label{Appendix pure toep R series}
R(z;n) = I + R_1(z;n) + R_2(z;n) + R_3(z;n) + \cdots, \ \ \ \  \ \ n \geq n_0
\end{equation}
where $R_k$ can be found recursively. Indeed 
\begin{equation}\label{Recursive R_k Def}
		R_k(z;n) = \frac{1}{2\pi \ic}\int_{\Sigma_R} \frac{\left[ R_{k-1}(\mu;n)\right]_-  \left( J_R(\mu;n)-I\right) }{\mu-z}\dd \mu,  \qquad  z \in \C \setminus \Sigma_R, \qquad k\geq1.
		\end{equation}	
It is easy to check that $R_{2\ell}(z;n)$ is diagonal and $R_{2\ell+1}(z;n)$ is off-diagonal; $\ell \in \N \cup \{0\}$, and that 
\begin{equation}\label{R_k's are small}
R_{k,ij}(z;n) = \frac{O(\rho^{-kn})}{1+|z|}, \qquad n \to \infty, \qquad k \geq 1, \qquad z\in \C \setminus \Sigma_R,
\end{equation}
where $\rho$ (resp. $\rho^{-1}$) is the radius of $\Ga_1$(resp. $\Ga_0$). Let us compute $R_1(z;n)$; we have
\begin{equation}\label{JR-I}
J_R(z)-I = \begin{cases}
P^{(\infty)}(z) \begin{pmatrix}
0 & 0 \\
z^n \phi^{-1}(z) & 0 \end{pmatrix} \left[ P^{(\infty)}(z) \right]^{-1}, & z \in \Ga_0, \\
P^{(\infty)}(z) \begin{pmatrix}
0 & 0 \\
z^{-n} \phi^{-1}(z) & 0 \end{pmatrix} \left[ P^{(\infty)}(z) \right]^{-1}, & z \in \Ga_1,
\end{cases} =  \begin{cases}
\begin{pmatrix}
0 & -z^{n} \phi^{-1}(z)\al^{2}(z) \\
0 & 0 \end{pmatrix}, & z \in \Ga_0, \\
\begin{pmatrix}
0 & 0 \\
z^{-n} \phi^{-1}(z)\al^{-2}(z) & 0 \end{pmatrix},  & z \in \Ga_1.
\end{cases}
\end{equation}
Therefore

\begin{equation}\label{R1}
R_1(z;n)=  \begin{pmatrix}
0 & -\di \frac{1}{2\pi i}\int_{\Ga_0} \frac{\tau^{n} \phi^{-1}(\tau)\al^{2}(\tau)}{\tau-z}d\tau \\ \di \frac{1}{2\pi i}\int_{\Ga_1} \frac{\tau^{-n} \phi^{-1}(\tau)\al^{-2}(\tau)}{\tau-z}d\tau
& 0 \end{pmatrix}.
\end{equation}
\subsubsection{Tracing back Riemann-Hilbert transformations}
If we trace back the Riemann-Hilbert problems $R \mapsto S \mapsto T \mapsto Y $ we will obtain

\begin{equation}\label{X in terms of R exact}
X(z;n) = R(z;n)\begin{cases} \begin{pmatrix} \al(z) & 0 \\ 0 & \al^{-1}(z) \end{pmatrix}z^{n\sigma_3}, & z \in \Om_{\infty}, \\
\begin{pmatrix} \al(z) & 0 \\ -z^{-n}\al^{-1}(z) \phi^{-1}(z) & \al^{-1}(z) \end{pmatrix}z^{n\sigma_3}, & z \in \Om_{2}, \\
\begin{pmatrix} z^{n}\al(z) \phi^{-1}(z) & \al(z) \\ -\al^{-1}(z) & 0 \end{pmatrix}, & z \in \Om_{1}, \\  \begin{pmatrix} 0 & \al(z) \\ -\al^{-1}(z) & 0 \end{pmatrix}, & z \in \Om_{0}, \\  
\end{cases}
\end{equation}
where for $z \in \C \setminus \Sigma_{R}$, as $n \to\infty$, we have
\begin{equation}\label{R asymp}
R(z;n)=\di \begin{pmatrix}
1 + \frac{ O(\rho^{-2n})}{ 1+|z| } & R_{1,12}(z;n)+ \frac{ O(\rho^{-3n})}{ 1+|z| } \\
R_{1,21}(z;n)+ \frac{ O(\rho^{-3n})}{ 1+|z| } & 1 +\frac{ O(\rho^{-2n})}{ 1+|z| }
\end{pmatrix}.    
\end{equation}

\subsection{Proof of Theorem \ref{thm 1.2} using the Riemann-Hilbert approach}\label{RHP proof thm 1.2} 
For the inverse of a Toeplitz matrix $T_n[\phi]=\{\phi_{j-k}\}^{n-1}_{j,k=0}$, we have 

\begin{equation}\label{InverseToeplitz Resolvent}
\left(T^{-1}_n[\phi] \right)_{j,k}=\de_{jk}+ \langle \boldsymbol{\mathcal{R}^{(\phi)}_n} [z^k],z^j \rangle, \qquad 0 \leq j,k \leq n-1,
\end{equation}
where $\de_{jk}$ is the Kronecker delta function,
\begin{equation}
\langle f(z),g(z) \rangle = \int_{\T} f(z) \overline{g(z)} \frac{\dd z}{2\pi \ic z},
\end{equation}
and
\begin{equation}\label{resolvent operator}
\boldsymbol{\mathcal{R}^{(\phi)}_n}: f(z) \mapsto \int_{\T} \mathcal{R}^{(\phi)}_n(z,w) f(w) \dd w 
\end{equation}
is the \textit{Resolvent} operator with the kernel
\begin{equation}\label{resolvent kernel}
\mathcal{R}^{(\phi)}_n(z,w) = \frac{X^{(\phi)}_{11}(z)X^{(\phi)}_{21}(w)-X^{(\phi)}_{11}(w)X^{(\phi)}_{21}(z)}{z-w} \frac{\phi(w)-1}{2\pi \ic w^n},
\end{equation}
where $X^{(\phi)}_{11}(z) \equiv X^{(\phi)}_{11}(z;n)$ and $X^{(\phi)}_{12}(z)\equiv X^{(\phi)}_{12}(z;n)$ are the entries of the solution to the \textbf{RH-X1} through \textbf{RH-X3}. In terms of the associated biorthogonal polynomials, in view of \eqref{Toeplitz-OP-solution}, we can write
\begin{equation}\label{resolvent kernel in terms of OPs}
\mathcal{R}^{(\phi)}_n(z,w) =\frac{\sqrt{D_{n-1}[\phi]D_{n+1}[\phi]}}{D_{n}[\phi]} \frac{Q_n(w)Q^*_{n-1}(z)-Q_n(z)Q^*_{n-1}(w)}{z-w} \frac{\phi(w)-1}{2\pi \ic w^n},
\end{equation}
where we have used the standard notation
\[ P^{*}_n(z):= z^n P_n(z^{-1})  \]
for a polynomial $P_n(z)$ of degree $n$. 

Let $\Vec{x} = (x_0, x_1, \cdots, x_{N-1})^T$ and $\Vec{\psi} = (\psi_{N-1}, \psi_{N-2}, \cdots, \psi_{0})^T$. Applying the Cramer's rule to the linear system $T_n[\tilde{\phi}] \Vec{x} = \Vec{\psi}$ gives
\begin{equation*}
x_{N-1} = \frac{\det \begin{pmatrix}
	\Tilde{\phi}_0 & \Tilde{\phi}_{-1} & \cdots  & \Tilde{\phi}_{-N+2}  & \psi_{N-1} \\
	\Tilde{\phi}_1 & \Tilde{\phi}_{0} & \cdots  & \Tilde{\phi}_{-N+3} & \psi_{N-2} \\
	\vdots & \vdots & \cdots  & \vdots & \vdots \\
	\Tilde{\phi}_{N-1} & \Tilde{\phi}_{N-2} & \cdots & \Tilde{\phi}_{1} & \psi_0 \end{pmatrix}}{D_N[\tilde{\phi}]} = \frac{\det \begin{pmatrix}
	\phi_0 & \phi_{1} & \cdots  & \phi_{N-2}  & \psi_{N-1} \\
	\phi_{-1} & \phi_{0} & \cdots  & \phi_{N-3} & \psi_{N-2} \\
	\vdots & \vdots & \cdots  & \vdots & \vdots \\
	\phi_{1-N} & \phi_{2-N} & \cdots & \phi_{-1} & \psi_0 \end{pmatrix}}{D_N[\phi]}.
\end{equation*}
Comparing this with \eqref{btd} we observe that
\begin{equation}\label{bdt t xN-1}
D^B_N[\phi;\psi] = D_{N}[\phi] x_{N-1} 
\end{equation}
In view of (??), \eqref{InverseToeplitz Resolvent}, and \eqref{bdt t xN-1} we have
\begin{equation*}
F_N[\phi;\psi]\equiv x_{N-1}= \sum_{\ell=0}^{N-1} \left(T^{-1}_N[\tilde{\phi}] \right)_{N-1,\ell} \psi_{N-1-\ell} = \sum_{\ell=0}^{N-1} \left( \de_{N-1,\ell} + \left\langle \boldsymbol{\mathcal{R}^{(\tilde{\phi})}_N} [z^{\ell}],z^{N-1} \right\rangle \right) \psi_{N-1-\ell}  .
\end{equation*}
Thus
\begin{equation}\label{F_N phi psi}
F_N[\phi;\psi] = \psi_0 + \sum_{\ell=0}^{N-1}  \left\langle \boldsymbol{\mathcal{R}^{(\tilde{\phi})}_N} [z^{\ell}],z^{N-1} \right\rangle \psi_{N-1-\ell}
\end{equation}
From \eqref{resolvent operator} and \eqref{resolvent kernel} we have
\begin{equation}
\begin{split}
& \mathcal{I}_N := \sum_{\ell=0}^{N-1}  \left\langle \boldsymbol{\mathcal{R}^{(\tilde{\phi})}_N} [z^{\ell}],z^{N-1} \right\rangle \psi_{N-1-\ell} \\ & \sum_{\ell=0}^{N-1} \left\{\int_{\T} \left( \int_{\T} \frac{X^{(\tilde{\phi})}_{11}(z)X^{(\tilde{\phi})}_{21}(w)-X^{(\tilde{\phi})}_{11}(w)X^{(\tilde{\phi})}_{21}(z)}{z-w} \frac{\tilde{\phi}(w)-1}{2\pi \ic w^N} w^{\ell} \dd w \right) z^{-N} \frac{\dd z}{2\pi \ic} \right\}\psi_{N-1-\ell}.
\end{split}
\end{equation}
Note that
\begin{equation}
\sum^{N-1}_{\ell=0} w^{\ell-N} \psi_{N-1-\ell} = \frac{1}{w} \sum^{N-1}_{k=0} w^{-k} \psi_{k} = \frac{1}{w} \psi_{i}  \left(\frac{1}{w}\right)+\mathcal{O}\left(e^{-c_0N}\right), 
\end{equation}
where $c_0$ is some positive constant and  \begin{equation}
\psi(z) = \sum^{\infty}_{k=0} \psi_k z^k + \sum^{\infty}_{k=1} \psi_{-k} z^{-k} \equiv \psi_{i}(z)+\psi_{o}(z).
\end{equation}
Therefore 

\begin{equation}
\mathcal{I}_N \simeq \int_{\T}  \int_{\T} \frac{X^{(\tilde{\phi})}_{11}(z)X^{(\tilde{\phi})}_{21}(w)-X^{(\tilde{\phi})}_{11}(w)X^{(\tilde{\phi})}_{21}(z)}{z-w} \frac{\tilde{\phi}(w)-1}{2\pi \ic w} \psi_{i}  \left(\frac{1}{w}\right)  z^{-N} \frac{\dd w  \dd z}{2\pi \ic}
\end{equation}
Let $$\mathcal{D}(z):= \exp\left[ \frac{1}{2\pi \ic} \int_{\T} \frac{\ln(\tilde{\phi}(\tau))}{\tau-z}\dd \tau  \right].$$
One can easily check that \begin{equation}\label{D to al}
\mathcal{D}(z) = \frac{\al(0)}{\tilde{\al}(z)},
\end{equation}
where $\al$ is the Szeg{\H o} function corresponding to the symbol $\phi$, given by \eqref{45}. For $\mathcal{D}$ we have 
\begin{equation}\label{D jump condition}
\mathcal{D}_+(z)=\mathcal{D}_-(z) \tilde{\phi}(z), \qquad z\in \T.
\end{equation}
Recall from \eqref{R_k's are small}, \eqref{X in terms of R exact}, and \eqref{R asymp} that
\begin{equation}
X^{(\tilde{\phi})}_{11}(z) \simeq \mathcal{D}_+(z) z^N \tilde{\phi}^{-1}(z), \qandq X^{(\tilde{\phi})}_{11}(z) \simeq - \mathcal{D}^{-1}_+(z).
\end{equation}
Therefore
\begin{equation}
\begin{split}
\mathcal{I}_N & \simeq \int_{\T}  \int_{\T} \frac{- \mathcal{D}^{-1}_+(w)\mathcal{D}_+(z) z^N \tilde{\phi}^{-1}(z)+\mathcal{D}^{-1}_+(z)\mathcal{D}_+(w) w^N \tilde{\phi}^{-1}(w)}{z-w} \frac{\tilde{\phi}(w)-1}{2\pi \ic w} \psi_{i}  \left(\frac{1}{w}\right)  z^{-N} \frac{\dd w  \dd z}{2\pi \ic} \\ & = \int_{\T}  \int_{\T} \frac{- \mathcal{D}^{-1}_+(w)\mathcal{D}_-(z) +\mathcal{D}^{-1}_-(z)\mathcal{D}_+(w)  \tilde{\phi}^{-1}(z)\tilde{\phi}^{-1}(w) (w/z)^N}{z-w} \frac{\tilde{\phi}(w)-1}{2\pi \ic w} \psi_{i}  \left(\frac{1}{w}\right)  \frac{\dd w  \dd z}{2\pi \ic} \\
\end{split}
\end{equation}
Now we deform the contour of integration for variables $w$ and $z$ respectively to the contours $\T_+$ and $\T_-$ respectively, where $\T_+$ is a circle with radius less than one in the domain of analyticity of $\tilde{\phi}$ and $\psi$, and $\T_-$ is a circle with radius more than one in the domain of analyticity of $\tilde{\phi}$ and $\psi$. So we have
\begin{equation}
\begin{split}
\mathcal{I}_N & \simeq \int_{\T_-}  \int_{\T_+} \frac{- \mathcal{D}^{-1}(w)\mathcal{D}(z) +\mathcal{D}^{-1}(z)\mathcal{D}(w)  \tilde{\phi}^{-1}(z)\tilde{\phi}^{-1}(w) (w/z)^N}{z-w} \frac{\tilde{\phi}(w)-1}{2\pi \ic w} \psi_{i}  \left(\frac{1}{w}\right)  \frac{\dd w  \dd z}{2\pi \ic} \\
& \simeq \int_{\T_-}  \int_{\T_+} \frac{- \mathcal{D}^{-1}(w)\mathcal{D}(z) }{z-w} \frac{\tilde{\phi}(w)-1}{(2\pi \ic)^2 w} \psi_{i}  \left(\frac{1}{w}\right)  \dd w  \dd z \\ & = - \int_{\T_-}  \int_{\T_+}  \mathcal{D}^{-1}(w)\mathcal{D}(z) \sum^{\infty}_{k=0} \left( \frac{w}{z} \right)^k  \frac{\tilde{\phi}(w)-1}{(2\pi \ic)^2 wz} \psi_{i}  \left(\frac{1}{w}\right)  \dd w  \dd z \\ 
& = -\sum^{\infty}_{k=0} \left[ \int_{\T_+} \mathcal{D}^{-1}(w) w^k \frac{\tilde{\phi}(w)-1}{2\pi \ic w} \psi_{i}  \left(\frac{1}{w}\right)  \dd w \right] \left[ \int_{\T_-}    \mathcal{D}(z)  z^{-k}   \frac{\dd z}{2 \pi \ic z} \right]  
\end{split}
\end{equation}
Note that 
\begin{equation}
\begin{split}
\int_{\T_-}    \mathcal{D}(z)  z^{-k}   \frac{\dd z}{2 \pi \ic z} & = \int_{\T}    \mathcal{D}_-(z)  z^{-k}   \frac{\dd z}{2 \pi \ic z} = \int_{\T}    \mathcal{D}_-(z^{-1})  z^{k}   \frac{\dd z}{2 \pi \ic z}  = \int_{\T}    \tilde{\mathcal{D}}_+(z)  z^{k}   \frac{\dd z}{2 \pi \ic z} \\ & = \int_{\T_+}    \tilde{\mathcal{D}}(z)  z^{k}   \frac{\dd z}{2 \pi \ic z}  =  \int_{\T_+}   \left( \tilde{\mathcal{D}}(0) + \sum^{\infty}_{j=1} c_jz^j \right)  z^{k}   \frac{\dd z}{2 \pi \ic z}  \\ & =\tilde{\mathcal{D}}(0)  \int_{\T_+}      z^{k}   \frac{\dd z}{2 \pi \ic z} = \tilde{\mathcal{D}}(0) \de_{k0} = \de_{k0}.
\end{split}
\end{equation}
Thus,
\begin{equation}
\begin{split}
\mathcal{I}_N  & \simeq - \int_{\T_+} \mathcal{D}^{-1}(w)  (\tilde{\phi}(w)-1) \psi_{i}  \left(\frac{1}{w}\right) \frac{\dd w}{2\pi \ic w} \\ & = -  \int_{\T} \mathcal{D}_+^{-1}(w)  (\tilde{\phi}(w)-1) \psi_{i}  \left(\frac{1}{w}\right) \frac{\dd w}{2\pi \ic w} \\ & = -  \int_{\T}   \left(\mathcal{D}_-^{-1}(w)-\mathcal{D}_+^{-1}(w)\right) \psi_{i}  \left(\frac{1}{w}\right) \frac{\dd w}{2\pi \ic w}
\end{split}
\end{equation}
Note that
\[ \int_{\T}   \mathcal{D}_-^{-1}(w) \psi_{i}  \left(\frac{1}{w}\right) \frac{\dd w}{2\pi \ic w} = \int_{\T}   \mathcal{D}_-^{-1}(w^{-1}) \psi_{i}  \left(w\right) \frac{\dd w}{2\pi \ic w} = \]
\[ \int_{\T}   \frac{\al_+(w)}{\al(0)} \psi_{i}  \left(w\right) \frac{\dd w}{2\pi \ic w} = \int_{\T_+}   \frac{\al(w)}{\al(0)} \psi_{i}  \left(w\right) \frac{\dd w}{2\pi \ic w} = \psi_0. \]
Therefore 

\begin{equation}
\begin{split}
\mathcal{I}_N  & \simeq -\psi_0 + \int_{\T}   \mathcal{D}_+^{-1}(w) \psi_{i}  \left(\frac{1}{w}\right) \frac{\dd w}{2\pi \ic w} = -\psi_0 + \int_{\T}   \mathcal{D}_+^{-1}(w) \left( \psi  \left(\frac{1}{w}\right) - \psi_{o}  \left(\frac{1}{w}\right) \right) \frac{\dd w}{2\pi \ic w} 
\end{split}
\end{equation}
Note that $\psi_{o}  \left(1/w\right)$ is an analytic function inside the unit circle with $\psi_{o}  \left(1/w\right) = O(w)$ as $w \to 0$, and thus \[ \int_{\T}   \mathcal{D}_+^{-1}(w)  \psi_{o}  \left(\frac{1}{w}\right) \frac{\dd w}{2\pi \ic w} = 0.  \]
Hence, using this and \eqref{D jump condition} we have
\begin{equation}
\begin{split}
\mathcal{I}_N  & \simeq -\psi_0  + \int_{\T}      \frac{\tilde{\psi}(w)}{\mathcal{D}_-(w) \tilde{\phi}(w)}    \frac{\dd w}{2\pi \ic w} = -\psi_0  + \int_{\T}      \frac{\psi(w)}{\mathcal{D}_-(w^{-1}) \phi(w)}    \frac{\dd w}{2\pi \ic w}
\end{split}
\end{equation}
Note that \[ \mathcal{D}_-(w^{-1}) = \tilde{\mathcal{D}}_+(w) = \frac{\al(0)}{\al_+(w)}, \qandq  \mathcal{D}_+(w^{-1}) = \tilde{\mathcal{D}}_-(w) = \frac{\al(0)}{\al_-(w)}. \]
Therefore
\begin{equation}
\begin{split}
\mathcal{I}_N  & \simeq  -\psi_0  + \int_{\T}      \frac{\al_+(w)\psi(w)}{\al(0) \phi(w)}    \frac{\dd w}{2\pi \ic w} = -\psi_0  + \frac{1}{\al(0)}\int_{\T}      \al_-(w)\psi(w)    \frac{\dd w}{2\pi \ic w}
\end{split}
\end{equation}
Comparing the Wiener-Hopf factorization $\phi(w)=\phi_-(w)\phi_+(w)$ with the scalar Riemann-Hilbert jump condition $\al_+(w)=\al_-(w)\phi(w)$, we can identify $\al_-$ with $\phi^{-1}_-$ and $\al_+$ with $\phi_+$, and thus

\begin{equation}
\mathcal{I}_N   \simeq -\psi_0 + \frac{[\phi^{-1}_-\psi]_0}{[\phi_+]_0}.
\end{equation}
Finally recalling \eqref{F_N phi psi}, and taking the limit $N \to \infty$ we arrive at the conclusion of proposition \ref{p.asymp-1}:

\begin{equation}
F[\phi;\psi] =  \frac{[\phi^{-1}_-\psi]_0}{[\phi_+]_0}. 
\end{equation}

\subsection{Derivation of the symbol pair corresponding to the  next-to-diagonal Ising correlations}\label{Simplifying AuYangPerk}

As it is shown in \cite{YP}, the next-to-diagonal two point correlation function is given
by the following bordered Toeplitz determinant,
\begin{equation}\label{corr2}
\langle \sigma_{0,0}\sigma_{N-1,N} \rangle =
\det \begin{pmatrix}
A_0& \cdots & A_{N-2} & B_{N-1}\\
A_{-1}& \cdots  & A_{N-3}&B_{N-2}  \\
\vdots & \vdots & \vdots & \vdots\\
A_{1-N} &  \cdots  & A_{-1}&B_{0}
\end{pmatrix} , \ \ \ \ \ N > 1,
\end{equation}
where in the notations of \cite{YP},
\begin{equation}\label{a_n-Perk}
A_n = \frac{1}{2\pi}\int_{-\pi}^{\pi}e^{-in\theta}\Phi(\theta) d\theta,
\end{equation}
\begin{equation}\label{Phi-Perk}
\Phi(\theta) = \frac{S-S'^* e^{-i\theta}}{\sqrt{\Omega(\theta)}},
\end{equation}
\begin{equation}\label{Omega-Perk}
\Omega(\theta) = S^2+(S'^*)^2-2SS'^* \cos(\theta),
\end{equation}
\begin{equation}\label{b_n-Perk}
B_n=\frac{1}{2\pi} \int^{\pi}_{-\pi} e^{-in\theta}\Psi(\theta) d \theta,
\end{equation}
and
\begin{equation}\label{Psi-Perk}
\Psi(\theta) = \frac{1}{\sqrt{\Omega(\theta)}} \left( SC'^* - \frac{CS'^*(SS'^* +e^{-i\theta})}{CC'^*+\sqrt{\Omega(\theta)}} \right).
\end{equation}
The quantities  $C,S,C'^*,$ and $S'^*$ are determined by the physical parameters of the model
according to the equations,
\begin{equation}\label{Perk's C,S's}
C:= \cosh(2K), \quad S:= \sinh(2K), \quad C'^*:= \frac{\cosh(2K')}{\sinh(2K')}, \quad S'^*:= \frac{1}{\sinh(2K')},
\end{equation}
where
\begin{equation}\label{K,K'-->J_h,J_v}
K = \frac{J_h}{k_BT},  \qquad \mbox{and} \qquad K' = \frac{J_v}{k_BT}.    
\end{equation}
Using \eqref{SSCC} and \eqref{k-parameter} we can write \eqref{Phi-Perk}, \eqref{Omega-Perk} and \eqref{Psi-Perk} in our notations as:
\begin{equation}\label{Omega-Perk1}
\Omega(\theta) =  \frac{k^2+1-2k\cos(\theta)}{S^2_v},
\end{equation}
\begin{equation}\label{Phi-Perk1}
\Phi(\theta) = \frac{k- e^{-i\theta}}{\sqrt{k^2+1-2k\cos(\theta)}},
\end{equation}
and
\begin{equation}\label{Psi-Perk1}
\Psi(\theta) = \frac{1}{\sqrt{k^2+1-2k\cos(\theta)}} \left( S_hC_v - \frac{C_h(S_h +S_ve^{-i\theta})}{C_hC_v+\sqrt{k^2+1-2k\cos(\theta)}} \right).
\end{equation}
Recall that $\widehat{\phi}$ is given by \eqref{Isingphi} is
\begin{equation}\label{*}
\widehat{\phi}(z)= \sqrt{\frac{1-k^{-1}z^{-1}}{1-k^{-1}z}}=\frac{k-z^{-1}}{\sqrt{k^2+1-k(z+z^{-1})}}.
\end{equation}
This together with \eqref{Phi-Perk1} immediately yields 
\begin{equation}
\Phi(\theta) = \widehat{\phi}(e^{i\theta}).
\end{equation}
Next we want to show that $\Psi(\theta)=\widehat{\psi}(e^{i\theta})$. To that end note that
\begin{equation}\label{I_2integrand1}
\frac{1}{C_hC_v+\sqrt{k^2+1-k(z+z^{-1})}} = \frac{C_hC_v - \sqrt{k^2+1-k(z+z^{-1})}}{S_h^2+S_v^2+k(z+z^{-1})} = \frac{C_hC_vz - z\sqrt{k^2+1-k(z+z^{-1})}}{k(z-c_*)(z-c^{-1}_*)},
\end{equation}
where
\begin{equation}\label{c}
c_* := - \frac{S_h}{S_v}.
\end{equation}
Therefore, as $S_h+S_vz^{-1}=S_hz^{-1}(z-c_*^{-1})$,
\begin{equation}
\frac{S_h+S_v z^{-1}}{C_hC_v+\sqrt{k^2+1-k(z+z^{-1})}} =  \frac{S_hC_hC_v }{k(z-c_*)}  - \frac{S_h  \sqrt{k^2+1-k(z+z^{-1}) }}{k(z-c_*)}. 
\end{equation}

Combining this with \eqref{Psi-Perk1} gives
\begin{align*}
\Psi(\theta) &= \frac{S_hC_v }{\sqrt{k^2+1-k(z+z^{-1})}} -\frac{S_hC^2_hC_v }{k(z-c_*)\sqrt{k^2+1-k(z+z^{-1})}}+\frac{S_hC_h }{k(z-c_*)}
\\
&=
\frac{S_hC_v}{\sqrt{k^2+1-k(z+z^{-1})}} \left[ 1-\frac{C_h^2}{k(z-c_*)}\right]+\frac{S_hC_h }{k(z-c_*)},
\qquad\qquad z=e^{i\theta}.
\end{align*}
and the term in the brackets becomes $(kz-1)/(k(z-c_*))$. Now, using \eqref{*} 
we obtain the formula for $\widehat{\psi}$ given by \eqref{hat psi}:
\begin{equation}
\Psi(\theta) = \frac{C_v z\widehat{\phi}(z)+C_h}{S_v(z-c_*)} \equiv \widehat{\psi}(z). 
\end{equation}
Let us also remark that 
\begin{equation}
\widehat{\psi}(z)= \frac{C_v}{S_v}\cdot \frac{z\widehat{\phi}(z)-c_*\widehat{\phi}(c_*)}{z-c_*},
\end{equation}
which can be seen from a straightforward computations as well. To summarize, we have shown that 
\begin{equation}
\langle \sigma_{0,0}\sigma_{N-1,N} \rangle = D_N[\widehat{\phi};\widehat{\psi}].
\end{equation}
with $\widehat{\phi}$ and $\widehat{\psi}$ given by  \eqref{Isingphi} and  \eqref{hat psi}.


\subsection{Auxiliary results}

\begin{lemma}\label{l.sum.asym}
	Let $|a|<1$ and $\omega$ be complex parameters. Then
	$$
	\sum_{n=N}^\infty a^n n^\omega(1+O(n^{-1}))=\frac{a^N N^\omega}{1-a}(1+O(N^{-1})),\qquad  N\to\infty.
	$$
\end{lemma}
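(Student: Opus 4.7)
The plan is to split the summand into its leading part $a^n n^\omega$ and the perturbation $a^n n^\omega \cdot O(n^{-1})$, and handle the two pieces separately. The perturbation contributes a term bounded by $C\sum_{n=N}^\infty a^n n^{\Re\omega -1}$, which is of the same shape as the original sum but with $\omega$ replaced by $\omega-1$, and this will automatically be $a^N N^\omega \cdot O(N^{-1})$. So everything reduces to proving the clean statement
\[
\sum_{n=N}^\infty a^n n^\omega = \frac{a^N N^\omega}{1-a}\bigl(1+O(N^{-1})\bigr), \qquad N\to\infty.
\]

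For this, I would make the substitution $n=N+k$ to pull the leading behavior outside the sum:
\[
\sum_{n=N}^\infty a^n n^\omega \;=\; a^N N^\omega \sum_{k=0}^\infty a^k \Bigl(1+\tfrac{k}{N}\Bigr)^{\omega}.
\]
Since $\sum_{k=0}^\infty a^k = (1-a)^{-1}$, the task becomes showing that
\[
\sum_{k=0}^\infty a^k \left[\Bigl(1+\tfrac{k}{N}\Bigr)^{\omega} -1 \right] = O(N^{-1}), \qquad N\to\infty.
\]
I would split this sum at $k=N$. For $k\le N$, the bound
\[
\Bigl|(1+x)^\omega -1\Bigr| \;\le\; |\omega| \int_0^x |1+t|^{\Re\omega -1}\,dt \;\le\; C_\omega\, x, \qquad x\in[0,1],
\]
(obtained from $(1+x)^\omega = 1 + \int_0^x \omega (1+t)^{\omega-1}dt$) gives a contribution of size $(C_\omega/N)\sum_{k=0}^\infty a^k k = O(N^{-1})$. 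For $k>N$, the factor $a^k$ decays geometrically while $(1+k/N)^\omega$ grows at most polynomially, so that tail is $O(a^N)$ and thus negligible compared to $N^{-1}$.

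The main (and only) subtlety is the complex-exponent estimate $|(1+x)^\omega-1|\le C_\omega x$, which I would justify via the integral representation above; everything else is bookkeeping with geometric series. No major obstacle is expected, since the lemma is essentially the statement that $\sum a^n n^\omega$ behaves like the leading geometric term $a^N/(1-a)$ multiplied by the slowly varying factor $N^\omega$.
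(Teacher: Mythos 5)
Your proposal is correct and reaches the conclusion by a different route than the paper. The paper applies Abel summation: it writes $(1-a)\sum_{n\ge N}a^n n^\omega(1+O(n^{-1}))$, telescopes, extracts the boundary term $a^N N^\omega(1+O(N^{-1}))$, and is left with $\sum_{n\ge N}a^{n+1}O(n^{\Re\omega-1})$, which it estimates by splitting at $n=2N$ and choosing $1<q<|a|^{-1}$ so that $n^{\Re\omega-1}=O(q^n)$ on the tail. You instead factor $a^N N^\omega$ out of the main term via the substitution $n=N+k$, which reduces the claim to $\sum_{k\ge 0}a^k\bigl[(1+k/N)^\omega-1\bigr]=O(N^{-1})$, and then split at $k=N$. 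The two arguments are of comparable length and both ultimately rest on the same elementary fact, namely the complex Lipschitz bound $(1+x)^\omega=1+O(x)$ for $x\in[0,1]$ (the paper uses it implicitly when it writes $(n+1)^\omega-n^\omega=O(n^{\Re\omega-1})$, you make it explicit via the integral representation). Your version is arguably slightly more transparent because it isolates that bound as the one nontrivial step, whereas the paper's summation-by-parts trick hides it inside the telescoping difference; the paper's version has the small advantage of producing the $a^NN^\omega$ boundary term automatically rather than by re-summing the geometric series. One cosmetic point: in several places you write $a^n$, $a^N$ where the inequalities genuinely require $|a|^n$, $|a|^N$ (e.g.\ in the bound $C\sum_{n\ge N}a^n n^{\Re\omega-1}$ and in ``the tail is $O(a^N)$''); the paper has the same slight carelessness, and it does not affect the validity of either argument, but it is worth fixing when writing it up.
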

\begin{proof}
	We basically can apply summation by parts,
	\begin{align*}
	(1-a)\sum_{n=N}^\infty a^n n^\omega(1+O(n^{-1})) 
	&=
	\sum_{n=N}^\infty a^n n^\omega(1+O(n^{-1})) -
	\sum_{n=N}^\infty a^{n+1} n^\omega(1+O(n^{-1})) 
	\\
	&=a^NN^\omega(1+O(N^{-1}))+
	\\
	&\qquad \sum_{n=N}^\infty
	a^{n+1}\Big((n+1)^\omega(1+O(n^{-1}))-n^\omega(1+O(n^{-1}))\Big)
	\\
	&=a^NN^\omega(1+O(N^{-1}))+
	\sum_{n=N}^\infty
	a^{n+1}O(n^{\omega-1}).
	\end{align*}
	The last term we can split into
	$$
	\sum_{n=N}^{2N-1}
	a^{n}O(n^{\omega-1})=
	O(a^NN^{\omega-1}),\qquad
	\sum_{n=2N}^{\infty}
	a^{n}O(n^{\omega-1})=
	\sum_{n=2N}^\infty
	O(a^nq^n)=O((aq)^{2N}).
	$$
	In the latter we choose $1 < q< |a|^{-1}$, which guarantees that 
	$n^{\omega-1}=O(q^n)$ and $(aq)^{2N}=O(N^{\omega-1})$.
\end{proof}

\begin{lemma}\label{l.sum.asym2}
Let $|a|<1$ and $\omega$ be complex parameters. Then
	$$
	\sum_{n=0}^\infty (n+1)(n+N)^\omega a^{n+N}=\frac{a^N N^\omega}{(1-a)^2}(1+O(N^{-1})),\qquad
	N\to\infty.
	$$
\end{lemma}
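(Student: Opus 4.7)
The plan is to reduce the sum to the elementary identity
$\sum_{n=0}^\infty (n+1) a^n = (1-a)^{-2}$ by factoring out $N^\omega a^N$ and showing that the correction factor $(1+n/N)^\omega$ averages to $1+O(N^{-1})$. More precisely, I would write
\[
T_N := \sum_{n=0}^\infty (n+1)(n+N)^\omega a^{n+N}
= \frac{N^\omega a^N}{(1-a)^2} + \sum_{n=0}^\infty (n+1)\bigl[(n+N)^\omega - N^\omega\bigr] a^{n+N},
\]
and estimate the remainder sum.

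To control the difference $(n+N)^\omega - N^\omega$, I would apply the integral representation
$(n+N)^\omega - N^\omega = \omega \int_0^n (t+N)^{\omega-1}\,dt$,
which, since $|(t+N)^{\omega-1}|\le C_\omega N^{\operatorname{Re}\omega-1}$ for $0\le t\le N$, yields
$|(n+N)^\omega - N^\omega|\le C_\omega N^{\operatorname{Re}\omega-1}\,n$ in the range $0\le n\le N$.
For $n>N$ one instead uses the crude bound $|(n+N)^\omega - N^\omega|\le C\, n^{\operatorname{Re}\omega}$.

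Splitting the remainder sum accordingly, the first piece contributes at most
\[
C_\omega N^{\operatorname{Re}\omega-1}\,|a|^N \sum_{n=0}^{N} (n+1)\,n\,|a|^n
= O\!\left(N^{\operatorname{Re}\omega-1}\,|a|^N\right),
\]
since $\sum_{n\ge 0}(n+1)n|a|^n$ converges as $|a|<1$. The tail piece, summed over $n>N$, is
$O\!\left(|a|^N N^{\operatorname{Re}\omega+1}|a|^N\right)$, which is exponentially smaller than the first piece and therefore absorbed. Combining these estimates gives
$T_N = N^\omega a^N(1-a)^{-2}\bigl(1+O(N^{-1})\bigr)$, as claimed.

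There is no real obstacle here: the only subtlety is the uniform estimate of $(n+N)^\omega-N^\omega$ for complex $\omega$, which is handled cleanly by the integral formula above. Everything else reduces to standard bounds on series $\sum(n+1)n|a|^n$ with $|a|<1$.
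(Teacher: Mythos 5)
Your argument is correct and mirrors the paper's: both isolate the leading term $\frac{a^N N^\omega}{(1-a)^2}$ and bound $(n+N)^\omega - N^\omega$ by a constant times $n\cdot\max\{(n+N)^{\re(\omega)-1}, N^{\re(\omega)-1}\}$, so that the remainder sum is $O(N^{\re(\omega)-1}|a|^N)$, i.e.\ a relative error of $O(N^{-1})$. One small slip: for $n>N$ with $\re(\omega)<0$ the crude bound should read $C N^{\re(\omega)}$ rather than $C n^{\re(\omega)}$, but either way the tail is exponentially negligible and the conclusion stands.
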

\begin{proof}
	After dividing by $a^N$, the difference between the series and the leading term is
	\begin{align*}
	\lefteqn{
		\sum_{n=0}^\infty (n+1)(n+N)^\omega a^{n}-\frac{N^\omega}{(1-a)^2}
		=
		\sum_{n=0}^\infty (n+1) a^{n}\Big((n+N)^\omega -N^\omega\Big)
	}\hspace{.5cm}
	\\
	&=
	\sum_{n=0}^\infty (n+1)^2 a^{n}O(\max\{(n+N)^{\re(\omega)-1},N^{\re(\omega)-1}\})
	=O(N^{\re(\omega)-1}).
	\end{align*}
	This implies the estimate.
\end{proof}

\begin{lemma}\label{l.Fc.asym}
	Let $\zeta(z)$ be a function holomorphic on $\{z\in \C\,:\,1-\eps<|z|<b+\eps\}\setminus[b,b+\eps)$
	with $b>1$, $\eps>0$. Further assume that in some neighborhood of $[b,b+\eps)$ this function is of the form
	$$
	\zeta(z)=(b-z)^{\omega}\xi(z)+\zeta_0(z)
	$$
	with $\xi(z)$ and $\zeta_0(z)$ being holomorphic, and $\re(\omega)>-1$.
	Then the Fourier coefficients of $\zeta$ have the asymptotics
	$$
	\zeta_n = 
	\Big(\frac{\xi(b)}{\Gamma(-\omega)}+O(n^{-1})\Big)n^{-\omega-1}b^{-n+\omega}
	\qquad n\to +\infty.
	$$
\end{lemma}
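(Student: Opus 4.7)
The plan is to represent the Fourier coefficient as a contour integral and deform the contour of integration outward to extract the dominant contribution coming from the branch-type singularity at $z=b$. Specifically, I begin with
\[
\zeta_n=\frac{1}{2\pi\ic}\oint_{|z|=1}\zeta(z)z^{-n-1}\,dz,
\]
and note that $\zeta(z)z^{-n-1}$ is holomorphic on the slit annulus $\{1-\eps<|z|<b+\eps\}\setminus[b,b+\eps)$. Cauchy's theorem applied to this domain deforms the unit circle to the circle $|z|=b+\eps/2$, at the cost of picking up a Hankel-type contour $H$ that hugs the cut $[b,b+\eps/2]$. The contribution from the outer circle is $O((b+\eps/2)^{-n})$, which is exponentially small compared with the target order $b^{-n}$ and can therefore be absorbed into the error term.

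On the Hankel contour $H$, the regular piece $\zeta_0(z)$ is holomorphic across the slit and hence contributes nothing. Only the singular piece $(b-z)^\omega\xi(z)$ survives. With the principal branch, one has $(b-z)^\omega\big|_{z=b+t+\ic 0^+}=t^\omega e^{-\ic\pi\omega}$ on the upper side of the cut and $(b-z)^\omega\big|_{z=b+t-\ic 0^+}=t^\omega e^{\ic\pi\omega}$ on the lower side; orienting $H$ so that the slit is to the right of the direction of travel, a direct computation yields
\[
\frac{1}{2\pi\ic}\int_H (b-z)^\omega\xi(z)z^{-n-1}\,dz
= -\frac{\sin(\pi\omega)}{\pi}\int_0^{\eps/2} t^\omega \xi(b+t)(b+t)^{-n-1}\,dt.
\]

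To extract the asymptotics of the remaining real integral, I substitute $t=bs/n$ and invoke a Watson/Laplace argument: since $\re(\omega)>-1$, the integrand is integrable near $0$ and, using $(1+s/n)^{-n-1}\to e^{-s}$ together with the analyticity of $\xi$ at $b$, one obtains
\[
\int_0^{\eps/2} t^\omega \xi(b+t)(b+t)^{-n-1}\,dt
=\xi(b)\,\Gamma(\omega+1)\,n^{-\omega-1}b^{\omega-n}\bigl(1+O(n^{-1})\bigr).
\]
Finally, applying the reflection formula $\Gamma(1+\omega)\Gamma(-\omega)=-\pi/\sin(\pi\omega)$ converts the prefactor $-\sin(\pi\omega)\Gamma(\omega+1)/\pi$ into $1/\Gamma(-\omega)$, producing the stated formula.

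The main subtlety is not conceptual but bookkeeping: getting the signs in the branch values and the orientation of the Hankel contour consistent, especially since the region lies outside the unit circle and the slit points outward. Keeping track of the argument convention (the principal branch with $\arg\in(-\pi,\pi]$) and the direction in which each side of the cut is traversed when viewing it as the boundary of the domain $\{1<|z|<b+\eps/2\}\setminus[b,b+\eps/2]$ is where one is most likely to drop a sign; a quick sanity check on the example $\zeta(z)=(b-z)^{-1/2}$, whose Fourier coefficients $b^{-n-1/2}\binom{2n}{n}/4^n$ can be computed directly and compared against the asymptotic formula, is a reliable guard against this.
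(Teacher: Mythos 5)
Your proof is correct and follows essentially the same route as the paper: deform the contour outward past the unit circle, isolate the discontinuity of $(b-z)^\omega$ across the cut, apply the reflection formula, and finish with a Laplace/Watson argument after rescaling near $t=0$. The only (inessential) difference is that the paper deforms to the $n$-dependent radius $b(1+\delta n^{-1/2})$, which keeps the rescaled integration domain bounded of order $n^{1/2}$, whereas you deform to the fixed radius $b+\eps/2$ and rely on the exponential smallness of the tail of the Watson integral to absorb the extra range; both choices lead to the same $1+O(n^{-1})$ error term.
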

\begin{proof}
	In the formula for the Fourier coefficients we deform the contour into a slightly bigger circle with radius $b(1+\delta_n)$
	(where $\delta_n=\delta n^{-1/2}$ and $0<\delta<\eps/b$ is fixed)
	and a line segments along the branch cut $[b,b+b\delta_n]$ on both sides,
	\begin{align*}
	\zeta_n &=
	\frac{1}{2\pi i}\int_{|z|=1}\zeta(z)z^{-n-1}\,dz
	\\
	&=\frac{1}{2\pi i}\int_{|z|=b(1+\delta_n)}\zeta(z)z^{-n-1}\,dz
	+
	\frac{1}{2\pi i}\int_{b}^{b(1+\delta_n)} \left((b-t-i0)^\omega-(b-t+i0)^\omega\right)\xi(t)t^{-n-1}\, dt.
	\end{align*}
	The first integral being $O(\delta_n^{-|\re(\omega)|}b^{-n}(1+\delta_n)^{-n})=O(n^{|\re(\omega)|/2}b^{-n}e^{-n^{1/2} \delta})$ is negligible. The second one becomes
	$$
	-\frac{\sin(\omega\pi)}{\pi }\int_{b}^{b(1+\delta_n)} (t-b)^\omega \xi(t)t^{-n-1}\, dt
	=
	-\frac{\sin(\omega\pi)}{\pi } b^{\omega-n} \int_0^{\delta_n} \xi(b+bs)s^\omega (1+s)^{-n-1}\,ds.
	$$
	Therein, the integral (without the factors in front of it) equals
	\begin{align*}
	\lefteqn{
		\int_0^{\delta_n} (\xi(b)+O(s))s^\omega e^{-(n+1)(s+O(s^2))}\,ds
	}
	\qquad\qquad\\
	& =
	n^{-\omega-1}\int_0^{n^{1/2}\delta} \Big(\xi(b)+O({\textstyle \frac{u}{n}})\Big)
	u^\omega e^{-u+O(\frac{u+u^2}{n})}\, du
	\\
	&=
	\xi(b) n^{-\omega-1} \int_0^{n^{1/2}\delta} u^\omega e^{-u}\, du +
	n^{-\omega-2}\int_0^{n^{1/2}\delta}u^\omega O(u+u^2)e^{-u}\, du
	\\
	&=
	n^{-\omega-1}\Big(\xi(b)\Gamma(1+\omega) + O(n^{-1})\Big).
	\end{align*}
	Combining all this give the asymptotic formula.
\end{proof}


\textbf{Acknowledgements.} The authors would like to thank Pavel Bleher, Barry McCoy, Vitaly Tarasov, and Nicholas Witte for their interest in this project and for helpful conversations. EB, TE and RG acknowledge American Institute of Mathematics for providing excellent working conditions and their support during the SQuaRE program "Asymptotic behavior of Toeplitz and Toeplitz+Hankel determinants" where part of their work was done during the 2019 and 2020 meetings. EB acknowledges support from the NSF grant DMS-2050092. TE acknowledges support from the Simons Foundation Collaboration Grant \# 525111. AI acknowledges support from the NSF grant  DMS-1955265. 


\end{document}